\newif\ifanonymous
\newif\ifcomments
\providecommand{\keywords}[1]{\textbf{{Keywords:}} #1.} %% Keywords macro
\newcommand{\cmark}{\checkmark}%
\newcommand{\xmark}{\times}%
    \newcommand{\aviv}[1]{\todo[inline,color=blue!40]{Aviv: #1}}
    \newcommand{\rainer}[1]{\todo[inline,color=blue!80]{Rainer: #1}}
    \newcommand{\svetlana}[1]{\todo[inline,color=blue!10]{Svetlana: #1}}
    \newcommand{\aviv}[1]{}
    \newcommand{\rainer}[1]{}
    \newcommand{\svetlana}[1]{}
\newcommand{\paragraphNoSkip}[1]{\par\smallskip\noindent\textbf{#1}.}
\newtheorem{definition}{Definition}[section]
\newtheorem{remark}[definition]{Remark}
\newtheorem*{example*}{Example}
\crefname{definition}{Definition}{Definitions}
\crefname{theorem}{Theorem}{Theorems}
\crefname{claim}{Claim}{Claims}
\crefname{lemma}{Lemma}{Lemmas}
\crefname{corollary}{Corollary}{Corollaries}
\crefname{example}{Example}{Examples}
\crefname{remark}{Remark}{Remarks}
\crefname{code}{Code}{Code}
\newcommand{\define}{\stackrel{\mathclap{\tiny\mbox{def}}}{=}}
\newacronym[longplural={Markov decision processes}]{MDP}{MDP}{Markov decision process}
\newacronym{AI}{AI}{artificial intelligence}
\newacronym{AMM}{AMM}{automated market maker}
\newacronym{APY}{APY}{annual percentage yield}
\newacronym{APR}{APR}{per block interest rate (non compounding)}
\newacronym{ASIC}{ASIC}{application specific integrated circuit}
\newacronym{CDF}{CDF}{cumulative density function}
\newacronym{CPU}{CPU}{central processing unit}
\newacronym{DAA}{DAA}{difficulty-adjustment algorithm}
\newacronym{DQL}{DQL}{deep-Q-learning}
\newacronym{DeFi}{DeFi}{decentralized finance}
\newacronym{EIP}{EIP}{Ethereum improvement proposal}
\newacronym{ERC}{ERC}{Ethereum request for comments}
\newacronym{EVM}{EVM}{Ethereum virtual machine}
\newacronym{HUJI}{HUJI}{Hebrew University of Jerusalem, Israel}
\newacronym{LP}{LP}{liquidity provider}
\newacronym{LT}{LT}{liquidity taker}
\newacronym{MEV}{MEV}{miner-extractable value}
\newacronym{BEV}{BEV}{blockchain-extractable value}
\newacronym{ML}{ML}{machine learning}
\newacronym{OO}{OO}{order optimization}
\newacronym{PDF}{PDF}{probability density function}
\newacronym{PID}{PID}{proportional integral derivative}
\newacronym{PoS}{PoS}{proof-of-stake}
\newacronym{PoW}{PoW}{proof-of-work}
\newacronym{RAM}{RAM}{random-access memory}
\newacronym{RL}{RL}{reinforcement learning}
\newacronym{RPC}{RPC}{remote procedure call}
\newacronym{SSD}{SSD}{solid state drive}
\newacronym{TD}{TD}{total difficulty}
\newacronym{URL}{URL}{uniform resource locator}
\newacronym{USD}{USD}{United States Dollar}
\newacronym{WETH}{WETH}{Wrapped Ethereum}
\newacronym{WBTC}{WBTC}{Wrapped Bitcoin}
\newacronym{YAML}{YAML}{YAML Ain't Markup Language}
\newacronym{block-DAG}{block-DAG}{block directed-acyclic-graph}
\newacronym{geth}{geth}{Go Ethereum}
\newacronym{p2p}{p2p}{peer to peer}
\newacronym{FaaS}{FaaS}{front-running-as-a-service}
\newacronym{FSL}{FSL}{fixed spread liquidation}
\newacronym{DEX}{DEX}{decentralized exchange}
\newacronym{TVL}{TVL}{total value locked}
\newacronym{CPAMM}{CPAMM}{constant product automated market maker}
\newacronym{CFMM}{CFMM}{constant function market maker}
\newacronym{wlog}{w.l.o.g.}{without loss of generality}
\newacronym{wrt}{w.r.t.}{with regards to}
\newacronym{iid}{i.i.d.}{independent and identically distributed}
\newacronym{DoS}{DoS}{denial-of-service}
\newacronym{SoK}{SoK}{systemization of knowledge}
\newacronym{SVM}{SVM}{shareholder value maximization}
\newacronym{OFAC}{OFAC}{Office of Foreign Assets Control}
\newacronym{DAO}{DAO}{decentralized autonomous organization}
\newacronym{WPBE}{WPBE}{weak perfect Bayesian equilibrium}
\newacronym{SE}{SE}{sequential equilibrium}
\newcommand{\forVote}{{\gls[hyper=false]{forVote}}}
\newcommand{\againstVote}{{\gls[hyper=false]{againstVote}}}
\newcommand{\noVote}{{\gls[hyper=false]{noVote}}}
\newcommand{\tally}{{\gls[hyper=false]{tally}}}
\newcommand{\tallyTuple}[2]{\ensuremath{\left\langle#1, #2\right\rangle}}
\newcommand{\informedActionSet}{{\gls[hyper=false]{informedActionSet}}}
\newcommand{\informedVoterSet}{{\gls[hyper=false]{informedVoterSet}}}
\newcommand{\informedVoter}{{\gls[hyper=false]{informedVoter}}}
\newcommand{\type}{{\gls[hyper=false]{type}}}
\newcommand{\strategy}{{\gls[hyper=false]{strategy}}}
\newcommand{\cost}{{\gls[hyper=false]{cost}}}
\newcommand{\finalTurn}{{\gls[hyper=false]{finalTurn}}}
\newcommand{\turn}{{\gls[hyper=false]{turn}}}
\newcommand{\uninformed}{{\gls[hyper=false]{uninformed}}}
\newcommand{\utility}{{\gls[hyper=false]{utility}}}
\newcommand{\infoSet}{{\gls[hyper=false]{infoSet}}}
\newcommand{\belief}[2]{{\gls[hyper=false]{belief}\left({#1}\vert{#2}\right)}}
\newcommand{\voteProb}[1]{{\gls[hyper=false]{voteProb}_{#1}}}
\newcommand{\Uearly}{\utility^\forVote_{\text{vote}, t_1}}
\newcommand{\Unotearly}{\utility^\forVote_{\neg{\text{vote}}, t_1}}
\newcommand{\Uwait}{\utility^\forVote_\text{wait}}
\newcommand{\qo}{\uninformed_1}
\newcommand{\qt}{\uninformed_2}
\newcommand{\po}{\voteProb{1}}
\newcommand{\ptp}{\voteProb{2,\tallyTuple{0}{0}}}
\newcommand{\ptf}{\voteProb{2,\tallyTuple{1}{0}}}
\newcommand{\pta}{\voteProb{2,\tallyTuple{0}{1}}}
\newcolumntype{L}[1]{>{\raggedright\arraybackslash}p{#1}}
\newcolumntype{C}[1]{>{\centering\arraybackslash}p{#1}}
\newcolumntype{R}[1]{>{\raggedleft\arraybackslash}p{#1}}
\ifanonymous\author{Submission \#XYZ}\else\author{
    Aviv Yaish\\
    \texttt{aviv.yaish@mail.huji.ac.il}\\
    % \orcid{0000-0002-7971-2494}
    The Hebrew University\\
    \and
    Svetlana Abramova\\
    \texttt{svetlana.abramova@uibk.ac.at}\\
    % \orcid{0000-0002-8298-4634}
    Universität Innsbruck\\
    \and
    Rainer Böhme\\
    \texttt{rainer.boehme@uibk.ac.at}\\
    % \orcid{0000-0003-4518-6227}
    Universität Innsbruck
    % \footnote[]{This research was supported by the Austria Israel Academic Network Innsbruck, Land Tirol, and the Ministry of Science \& Technology, Israel.}
}
\date{}
\title{Strategic Vote Timing in Online Elections With Public Tallies}
\begin{document}
\maketitle
\begin{abstract}
    We study the effect of public tallies on online elections in a setting where voting is costly and voters are allowed to strategically time their votes.
    The strategic importance of choosing \emph{when} to vote arises when votes are public, such as in online event scheduling polls (e.\,g., Doodle), or in blockchain governance mechanisms.
    In particular, there is a tension between voting early to influence future votes and waiting to observe interim results and avoid voting costs if the outcome has already been decided.

    Our study draws on empirical findings showing that ``temporal'' bandwagon effects occur when interim results are revealed to the electorate: late voters are more likely to vote for leading candidates.
    To capture this phenomenon, we analyze a novel model where the electorate consists of informed voters who have a preferred candidate, and uninformed swing voters who can be swayed according to the interim outcome at the time of voting.
    In our main results, we prove the existence of equilibria where both early and late voting occur with a positive probability, and we characterize conditions that lead to the appearance of ``last minute'' voting behavior, where all informed voters vote late.
\end{abstract}
\keywords{Strategic Voting, Blockchain Governance, Sequential Voting, Multi-Agent Systems, Computational Social Choice}

\section{Introduction}
\label{sec:Introduction}
Many elections are conducted sequentially, where interim results are known to the electorate and can be used by voters to inform their decisions.
Given empirical work showing that voters tend to vote for the leading candidates when votes are public \cite{morton2015exit,zou2015strategic,romero2017influence,meir2020strategic,araujo2022casting}, it is natural to consider the strategic aspect of choosing \emph{when} to vote in such settings.

The power of strategic vote timing is illustrated by the commonplace show-of-hands vote: ``early bird'' voters may sway undecided voters to follow in their direction.
Furthermore, if there are costs associated with voting (e.\,g., having to commute to a distant polling station), waiting to observe interim results allows voters to save costs, if their preferred outcome appears to have garnered enough support to win.
In particular, previous work found that voting costs affect voter turnout in blockchain governance voting, where votes are irrevocable and interim results are public~\cite{dotan2023vulnerable,messias2023understanding}.
This also applies to settings in which costs may be implicit, such as in democratic deliberation dialogues \cite{flanigan2023distortion} and social networks \cite{alon2012sequential}, where voters may face social consequences if their vote does not conform to the accepted norms (e.\,g., liking a controversial social media post).

Importantly, the ability of agents to strategically time their votes may harm others, if it delays the time until an outcome can be decided.
This is significant for blockchain governance mechanisms \cite{kiayias2022sok}, in which failure to quickly converge to a decision on how to address critical flaws can result in significant losses \cite{chawla2023gauntlet,pereira2023aave,mourya2023aave,young2023aave,normandi2022messari}.
Alarmingly, the average time to reach a quorum\footnote{A \emph{quorum} is the minimal amount of votes that allows a governance proposal to pass, where proposals are dropped if less votes are cast, even if all are in favor.} may be large ($1.64$ days on average in Compound \cite{messias2023understanding}), with this risk exacerbated by mechanisms that extend voting deadlines due to last-minute voting \cite{openzeppelin2024governorpreventlatequorum}.

Although prior art examined related topics, a review of the literature identifies that an analysis of strategic vote timing in the presence of voting costs and public tallies is missing (see \cref{sec:RelatedWork,tab:RelatedWork}).
These aspects are shown to be inherently connected by both empirical and experimental studies, which find that access to interim results informs voters' actions in two crucial ways: voting ``bandwagons'' form where voters increase their support for a candidate in the lead, while voters may abstain to save costs if the results indicate that their preferred candidate will probably win or, conversely, cannot win \cite{morton2015exit,zou2015strategic,romero2017influence,meir2020strategic,araujo2022casting}.
These findings motivate our research question, the answer to which could inform the design of better voting procedures:
\begin{quote}
    \emph{To what extent do voting costs and access to interim results affect voter behavior, when voters can strategically time their votes?}
\end{quote}

\subsection{Our Contributions}
In our work, we provide an answer to this question by analyzing the timing game that arises in sequential voting procedures with public interim results, in which private-value voters incur a cost for voting, and can choose when to vote, if at all.
Thus, our main conceptual contribution is a novel model in which the strategic aspect of vote timing is informed by both voting costs and access to interim results.
Our main technical contribution is twofold.
First, we prove the existence of equilibria where voters mix between voting early and late.
While prior work on strategic vote timing by private-valued agents considers settings where mixed equilibria exist only when all voters have at least two preferred candidates (out of three) \cite{dekel2014strategic,tsang2017if}, our existence result is for voters who may have a strict preference for one candidate (out of two).
Second, we characterize several types of equilibria that are of particular relevance for the blockchain setting, as dependent on various parameters such as the cost of voting.
In total, our analysis of the strategic choice of \emph{when} to vote is a natural companion to the literature on the strategic decision of choosing \emph{who} to vote for~\cite{schoenebeck2021wisdom,han2023wisdom}, and on sequential elections with an exogenous voting order \cite{alon2012sequential,mamageishvili2023large}.

\paragraphNoSkip{Our model}
We consider competitive elections between two candidates $\forVote$ and $\againstVote$, where the electorate is divided among three equally-sized voting blocs.
Each candidate is backed by one bloc of ``informed'' voters who incur a cost of $\cost$ if they vote, and have private values that determine their utility: they get a utility of $1$ if their candidate wins, and $-1$ if the candidate loses.
Thus, the outcome critically hinges on the third bloc, which comprises of ``uninformed'' swing voters who are swayed by the interim tally at the time of voting, to the extent that they vote in favor of the leading candidate, and choose one uniformly in case there is a tie.
The elections proceed in turns, where at each turn, the interim tally (i.\,e., the number of votes cast for each candidate) is available to all agents, who can use them to inform their decisions.
In particular, our voters may account for the tally when choosing their actions at each turn.
Informed voters can choose between exercising their right to vote, and either waiting for the next turn, or abstaining if the voting window has ended.
On the other hand, uninformed voters arrive to the ballot over time according to a probability distribution $\vec{\uninformed}$.
We provide a schematic depiction of our agents' action space in \cref{fig:ActionSpace}.

\begin{figure}
    \centering
    \definecolor{informed}{RGB}{200,220,230}   % light blue
\definecolor{uninformed}{RGB}{255,228,225} % light pink
\begin{tikzpicture}[>=stealth,x=25mm]
	\draw (-1,3.5) node [above] {\textbf{Early\strut}};
	\draw (-1,-2pt) node [below] {\scriptsize $t_1$} --++(0,4pt); 

	\draw (1,3.5) node [above] {\textbf{Late\strut}};
	\draw (1,-2pt) node [below] {\scriptsize $t_2$} --++(0,4pt); 
	
	\draw [fill,black!15] (-.25,-2.5) rectangle (.25,2.5);
	\draw (0,0) node [above] {Interim} node [below] {tally};
 
	\draw [->] (-2,0) -- (2,0) node [right] (T) {\textbf{Time\strut}};
	
	% ----------------

	\begin{scope}[yshift=2cm]  	% Informed
	
		\draw (-3,1) node [right] {\parbox{2cm}{\raggedright \textbf{Informed}\\\emph{(Players)}}};

		\small
	
		\draw (-1,0) node [draw,circle,fill=informed] (t1) {};
		\draw (1,0) node [draw,circle,fill=informed] (t2) {};
		
		\draw [<-] (t1)--++(-3em,0) node [above] {Start\strut};
		\draw [->] (t1)--++(0,-6ex) node [below] {\parbox{3cm}{\centering Vote for\\preference}};
		
		\draw [->] (t1)-- node [above,pos=.25] {Wait\strut} (t2);
		\draw [->] (t2)--++(0,-6ex) node [below] {\parbox{3cm}{\centering Vote for\\preference}};
		\draw [->] (t2)--++(0,6ex) node [above] {Abstain\strut};
		
	\end{scope}

	% ----------------

	\begin{scope}[yshift=-2cm]  	% Uninformed
	
		\draw (-3,1) node [right] {\parbox{2cm}{\raggedright \textbf{Uninformed}\\\emph{(Nature)}}};

		\small
	
		\draw (-1,0) node [draw,fill=uninformed] (t1) {};
		\draw (1,0) node [draw,fill=uninformed] (t2) {};
		
		\draw [<-] (t1)--++(-3em,0) node [above] {Start\strut};
		\draw [->] (t1)--++(0,-6ex) node [below] (Q) {\parbox{2cm}{\centering Vote\\uniformly}};
		
		\draw [->] (t1)-- node [above,pos=.25] {Wait\strut} (t2);
		\draw [->] (t2)--++(0,6ex) node [above] {Abstain\strut};
		\draw [->] (t2)--++(0,-6ex) node [below] {\parbox{2cm}{\centering Vote bandwagon$^\ast$}};
		
	\end{scope}
	
	\draw (Q.south-|T.east)++(0,-2ex) node [left] {\parbox{6cm}{\centering\scriptsize $^\ast$\,Support majority according to interim tally,\\or vote uniformly in case of ties}};
\end{tikzpicture}
    \caption{
        Schematic depiction of the action space of voters.
        By waiting, voters view the interim tally and use it when deciding their actions.
        Informed decision points are denoted by circles, while squares denote uninformed actions that are exogenously determined (i.e., whether to vote early, vote late, or abstain).
    }
    \label{fig:ActionSpace}
\end{figure}
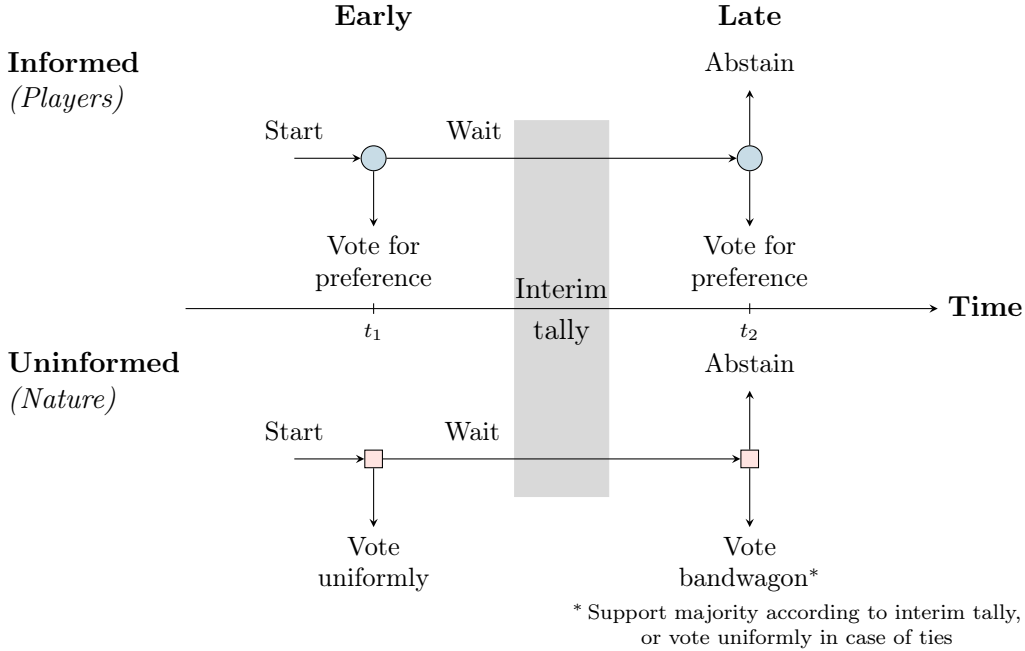

\paragraphNoSkip{Last stage analysis}
While voters have access to interim results, we show in \cref{sec:InfoSet} that this does not translate into having complete information over the state of the game.
Consider that from the point of view of a voter in favor of $\forVote$, observing an interim tally that indicates a tie between the candidates may lead to substantially different actions if all votes thus far for $\againstVote$ came from uninformed voters, while for $\forVote$ came from informed voters; in this case, the remaining informed voting power in favor of $\againstVote$ is greater than that of $\forVote$.
However, interim tallies only provide the number of votes supporting each candidate, and not the identities of the voters that cast them.
Thus, we analyze our game through the lens of the \gls{WPBE} solution concept, which lends itself to sequential games of incomplete information. It requires agent strategies to be sequentially rational, given that voters update their beliefs about the current state of the world in a Bayesian manner.
We show in \cref{prop:MustardSet,prop:GreenSet} that \gls{WPBE} strategy profiles for the last stage of our game exhibit a ``threshold'' structure: voting dominates for costs lower than the threshold and abstaining dominates above it.
Moreover, these thresholds are determined, in part, according to agent beliefs with respect to the identities of those who voted in previous turns.
For example, if an agent estimates that votes in favor of its preference were cast by uninformed voters, it may be profitable to act as a ``free rider'' and abstain (\cref{prop:WhiteSet}).

\paragraphNoSkip{Timing game analysis}
In preparation of our key technical contributions, we examine the ``timing game'' that forms when considering all voting stages (see \cref{fig:GameTree}).
Thus, in \cref{prop:EarlyUtility,prop:LateUtility,cor:EarlyWaitDiff}, we consider the impact of each game state on voter utility, when accounting for their beliefs.
These are then applied to reach our main results, where we analyze several emergent behaviors that are of importance to our setting.
For example, \citeauthor{messias2023understanding} \cite{messias2023understanding} highlighted the possibility of blockchain governance voters ``mimicking'' their peers by waiting and observing their votes.
To capture this scenario, we characterize in \cref{thm:QoZeroStrategiesTurnOne} all equilibria strategy profiles that arise when uninformed voters never vote in the first turn, but only in later stages of the voting game, i.\,e., when $\qo = 0$.
As shown in \cref{fig:QoZeroStrategies}, early voting becomes a dominant strategy.
Intuitively, this is because informed agents can, therefore, influence the votes of the uninformed agents.
We proceed with \cref{thm:WaitingEquilibria}, in which we characterize ``late-bloomer'' equilibria, where informed agents vote at the last minute.
This case is important for governance protocols, which may be required to accept or reject urgent proposals that can seriously affect the security of user funds \cite{chawla2023gauntlet,pereira2023aave,mourya2023aave,young2023aave,normandi2022messari}.
Some protocols may even delay voting deadlines in response to late-stage voting \cite{openzeppelin2024governorpreventlatequorum}, thus increasing the risk from late-bloomer equilibria.
Using our analytical characterization, the results of which are summarized in \cref{fig:EquilibriaLateBird}, we employ a numerical simulation to evaluate the average cost threshold that corresponds to the different types of such equilibria.
We find that for general values of $\qo > 0$, the minimum average threshold is $0.57$, i.\,e., more than half of the maximum utility that voters can receive from voting.
We culminate by proving the existence of equilibria where informed voters mix between voting early and late in \cref{thm:QoMixed}.

\subsection{Related Work}
\label{sec:RelatedWork}
Strategic vote timing is inherently a difficult problem in game design.
Therefore, previous work mostly suggests stylized models, backed up with solutions only for limited game settings or subsets of the strategy space.
Our review of the literature shows that strategic timing and disclosure of votes are often discussed in the context of sequential voting, where vote timing can either be \emph{exogenous} or \emph{endogenous}.
%The former case assumes the presence of a predetermined order of voting and hence, an agent can strategize only on the choice of her open ballot. 
We review related works along these two scenarios and illustrate how our model differs in terms of the underlying assumptions, with a summary given in \cref{tab:RelatedWork}.
Furthermore, we present an in-depth review of the literature in \cref{sec:AdditionalRelatedWork}, which covers a variety of works, ranging from empirical blockchain papers to theoretical works on corporate governance.

\begin{table}[ht]
    \setlength{\tabcolsep}{2pt}
    \centering
    \caption{
        Overview of related works on sequential voting.
        We classify works into three categories.
        The first includes works that analyze settings where the time in which agents vote is exogenously determined (``exo.'').
        The second includes works that analyze common value elections and endogenous (``endo.'') vote timing where voters choose when to vote, thus the existence of equilibria where voters mix between voting at various times is due to the benefits of such strategies to information aggregation (``info. aggr'').
        The third includes this work, Tsang \& Larson~\cite{tsang2017if} and Dekel \& Piccione~\cite{dekel2014strategic}, that analyze endogenous vote timing in private value elections, and prove the existence of equilibria that admit mixed strategies using various modeling assumptions, as mentioned in ``Existence of mixed timing eq.'' (see \cref{sec:RelatedWork} for in-depth details).
    }
    \begin{tabular}{L{0.34\textwidth}L{0.125\textwidth}ccccL{0.2\textwidth}}
        \toprule
        \textbf{Publication}                                  & \textbf{Timing}     & \rotatebox[origin=l]{0}{\textbf{Cost}} & \textbf{Periods} & \textbf{Voters}                & \rotatebox[origin=l]{90}{\textbf{Abstention}} & \textbf{Existence of mixed timing eq.}
        \\

        \midrule
        % \multicolumn{7}{l}{\emph{Sequential voting}}
        % \\

        \textbf{This work}                                    & \textbf{endo.}      & $\mathbf{\cmark}$                      & $\mathbf{2}$     & $\mathbf{3}$                   & $\mathbf{\cmark}$                             & \textbf{uninformed voters}                           \\

        Tsang \& Larson~\cite{tsang2017if}                    & endo.               & $\xmark$                               & $2$              & $3$                            & $\cmark$                                      & $2+$ preferred candidates \& \newline complete info. \\

        Dekel \& Piccione~\cite{dekel2014strategic}           & endo. with commitm. & $\xmark$                               & 2                & $N$                            & $\xmark$                                      & $2+$ preferred candidates                            \\

        \midrule

        Schmieter~\cite{schmieter2022voting}                  & endo.               & $\xmark$                               & $2$              & $N$                            & $\xmark$                                      & info. aggr.                                          \\

        Rokas \& Tripathi~\cite{rokas2007information}         & endo.               & $\xmark$                               & $\infty$         & $N \text{mod} 2=1$             & $\cmark$                                      & info. aggr.                                          \\

        \midrule

        Piketty~\cite{piketty2000voting}                      & exo.                & $\xmark$                               & 2                & $N$                            & $\cmark$                                      & $\xmark$                                             \\

        Callander~\cite{callander2007bandwagons}              & exo.                & $\xmark$                               & $\infty$         & $\infty$                       & $\xmark$                                      & $\xmark$                                             \\

        Alon et al.~\cite{alon2012sequential}                 & exo.                & $\xmark$                               & $T$              & $N$                            & $\xmark$                                      & $\xmark$                                             \\

        Battaglini et al.~\cite{battaglini2007efficiency}     & exo.                & $\cmark$                               & $T$              & $3$                            & $\cmark$                                      & $\xmark$                                             \\

        Mamageishvili \& Tejada~\cite{mamageishvili2023large} & exo.                & $\cmark$                               & $2$              & $N {\tiny \rightarrow} \infty$ & $\cmark$                                      & $\xmark$                                             \\

        \bottomrule
    \end{tabular}
    \label{tab:RelatedWork}
\end{table}

\paragraphNoSkip{Exogenous timing}
In exogenous timing voting models, agents vote at predetermined times, and cannot strategically choose when to vote.
The closest work to ours is by Mamageishvili \& Tejada~\cite{mamageishvili2023large}, who suggest a model of a large electorate divided into two groups of voters voting sequentially.
Voters choose from two alternatives, have a common voting cost, and voters in the second group know the first group's turnout, but not the interim results.
In contrast to our work, members of the first group who abstain are not allowed to vote in the second round.
The authors prove the existence of a perfect Bayesian equilibrium in which the expected number of votes for each alternative is equal.
Intuitively, this happens due to a direct effect of the early voters who vote for their preferred outcome, and an indirect effect of the late voters who consider the interim turnout when deciding whether to vote or abstain.
Battaglini et al.~\cite{battaglini2007efficiency} examine a sequential voting game with 3 voters and formulate a unique path of equilibrium in pure strategies for all voting costs.
Dekel \& Piccione~\cite{dekel2000sequential} show that information cascades, i.\,e., learning from the votes of others, do not happen in an equilibrium of binary voting schemes with exogenous timing.
We take a different design approach and intentionally introduce two types of voters, thereby allowing a potential ``learning (or influencer) effect'' to manifest.
Other works~\cite{piketty2000voting,callander2007bandwagons,alon2012sequential}, in contrast, compare simultaneous and sequential timing mechanisms in a costless setting.

\paragraphNoSkip{Endogenous timing}
Another line of theoretical research models vote timing as an endogenous factor and hence, is closer to our model.
However, most works in this stream of research seek to analyze which common-value voting schemes aggregate information more efficiently and without voting costs~\cite{rokas2007information,schmieter2022voting}.
In the private value setting, Dekel \& Piccione~\cite{dekel2014strategic} suggest a model with two time periods and a pre-commitment mechanism requiring all voters to irrevocably choose a voting period prior to election and the realization of preferences.
Tsang \& Larson~\cite{tsang2017if} analyze a 2 round costless voting game with 3 candidates and 3 voters with a preference structure following the Condorcet cycle.
They show that playing a mixed-strategy Nash equilibrium applies to a complete information game.
In contrast, our model requires strictly one alternative to be favored by informed voters and assumes no additional restrictions on a preference structure.
Furthermore, we consider a costly voting mechanism and condition the existence of a mixed-strategy equilibrium on the presence of an uninformed type of voters.

\paragraphNoSkip{Empirical \& experimental works}
Our model follows findings made by both empirical and experimental works across different voting domains, such as national elections and online voting.
Thus, Morton~\emph{et al.}~\cite{morton2015exit} find that in French presidential elections held between 1981 and 2012, the availability of exit polls at the time of voting increases the probability of voters voting for the expected winner.
Similarly, Araújo and Gatto~\cite{araujo2022casting} use data from Brazil's 2018 presidential elections to show that access to official tallies prior to voting led subsequent voters to increase their support for the frontrunning candidate, and decrease it for losing candidates.
Zou~\emph{et al.}~\cite{zou2015strategic} analyze data from $345$ thousand open Doodle polls, and find that votes positively correlate with preceding ones.
A larger data set comprising more than $1.3$ million Doodle polls is studied by Romero~\emph{et al.}~\cite{romero2017influence}, who find that early voters exert a larger influence on a poll's result than late voters.
Meir, Gal \& Tal~\cite{meir2020strategic} conduct controlled online experiments that simulate real-world voting procedures, and show that participants can be classified into several types, where some vote for the leading candidates even if they have a different preference, while others remain steadfast in voting for their preference.

In summary, our model stands out from previous studies in that it explicitly captures strategic timing and the learning effect applied to the group of uninformed voters.
Voters may prefer, in certain cases, to abstain from voting to save costs.
Furthermore, the purpose of our analysis is not to examine conditions that facilitate better information aggregation, but rather to study voting dynamics given that a certain fraction of the electorate votes ``with the herd''.

\aviv{\cite{vorobyev2021information,gersbach2021effect}}

\section{Preliminaries}
\label{sec:Preliminaries}
In this work, we focus on the importance of vote timing in swaying undecided voters when interim tallies are public.
Therefore, we consider an electorate consisting of \emph{informed} voters who have a predetermined preference for a specific outcome, and \emph{uninformed} voters who are undecided and use interim results when choosing how to vote.
To capture the strategic aspect of vote timing, we let informed voters make the strategic choice of deciding \emph{when} to vote.
We furthermore consider a setting in which voting is costly for informed voters, and uninformed voters stochastically arrive to vote over the span of the voting period.
These two aspects of our model incentivize informed voters to mix between voting early to influence future uninformed voters, and waiting to observe interim results, thereby avoiding voting costs if their vote cannot affect the final outcome.

\paragraphNoSkip{Voting mechanism}
A referendum is held to decide whether a proposal should be adopted, with the ballot open for $\finalTurn \in \mathbb{N}$ turns.
When the ballot is open, voters can vote $\forVote$ in favor of the proposal, or $\againstVote$ against it, and furthermore, voters may choose to abstain.
Votes are irrevocable and cast simultaneously within each period.
The winning outcome is chosen according to the majority vote rule after all turns.
In case of a tie, the winner is chosen by a fair coin flip.

\paragraphNoSkip{Interim tally}
Importantly, the interim results after each turn are observable before the following turn.
These results are \emph{anonymous}, i.\,e., the identity or type of those who cast previous votes is not disclosed.
We denote the tally of votes cast before turn $\turn$ by $\tally_\turn = \tallyTuple{\tally^\forVote_\turn}{\tally^\againstVote_\turn} \in \left( \mathbb{N} \cup \left\{0\right\} \right)^2$, where $\tally^\forVote_\turn$ and $\tally^\againstVote_\turn$ are the counts of $\forVote$ and $\againstVote$ votes, respectively, and define $\tally^\forVote_1 = \tally^\againstVote_1 = 0$ for the first turn.

\paragraphNoSkip{Uninformed voters}
Uninformed voters do not strictly prefer one outcome over another, but value conforming with the majority.
Thus, they cast votes in accordance to the interim results at the time of voting, and will vote for the outcome that currently enjoys a majority.
In case of a tie or when voting in the first round, uninformed voters choose to vote either $\forVote$ or $\againstVote$ with an equal probability.

\paragraphNoSkip{Uninformed arrivals}
The distribution of uninformed arrivals is characterized by a probability vector $\vec{\uninformed} = \left( \qo, \dots, \uninformed_\finalTurn \right)$, where $\uninformed_\turn \in \left[0,1\right]$ denotes the probability of arriving in round $\turn$, and $\sum_{\turn=1}^\finalTurn \uninformed_\turn \le 1$.
A strict inequality ($\sum_{\turn=1}^\finalTurn \uninformed_\turn < 1$) implies that uninformed voters may abstain.

\paragraphNoSkip{Informed voters}
Informed voters are classified into two types according to their preferred outcome.
If an informed voter is in favor of the proposal, we denote its type by $\forVote$.
Such voters receive a utility of $1$ if the proposal passes, and $-1$ otherwise.
Similarly, $\againstVote$ denotes informed voters who are against the proposal, with voters of this type gaining $-1$ if the proposal is adopted, and $1$ otherwise.
Both types incur a voting cost $\cost \in \left[0, 1\right)$, but can abstain to avoid it.

\paragraphNoSkip{Informed actions}
Immediately before the referendum begins, informed voters observe $\vec{\uninformed}$.
After the ballot opens, in each turn $\turn \in 1, \dots, \finalTurn$, informed voters who have not yet cast a vote observe the interim tally $\tally_\turn$, and then choose an action; voters may either: (1) \emph{vote} for their preference, (2) \emph{wait} until the next turn (if $\turn < \finalTurn$), or (3) \emph{abstain} (if $\turn = \finalTurn$).
As votes are irrevocable, if an informed voter has cast its vote, it does not have any action available to it in the following turns.
In particular, we are interested in instances where informed voters cast their votes in the first and last rounds, which we respectively call voting \emph{early} and \emph{late}.
We denote the set of all actions by $\informedActionSet \define \left\{ \textit{vote}, \textit{wait}, \textit{abstain} \right\}$, and the set of all distributions over these actions by $\Delta\left(\informedActionSet\right)$.

\paragraphNoSkip{Informed strategies}
Denote the set of informed voters by $\informedVoterSet$.
A strategy $\strategy$ for voter $\informedVoter \in \informedVoterSet$ is a mapping between its type and a game state to a distribution over $\informedActionSet$.
A game state is defined by a distribution of uninformed arrivals $\uninformed$, a turn index $\turn$, and the tally up to that turn $\tally_\turn = \tallyTuple{\tally^\forVote_\turn}{\tally^\againstVote_\turn}$.
Given a game state, we denote the strategy of an informed voter $\informedVoter$ of type $\type$ by $\strategy_\informedVoter \left( \uninformed, \turn, \tally^\forVote_\turn, \tally^\againstVote_\turn, \type \right)$.

\subsection{Solution Concept}
We analyze weak perfect Bayesian equilibria, which are natural for multi-turn games of incomplete information~\cite{fudenberg1991perfect}.
We focus on strategy profiles that are symmetric ``within'' and ``across'' types.

\paragraphNoSkip{Symmetry within types}
Let $\informedVoterSet_\type$ be the set of type $\type$ voters.
A strategy profile is symmetric \emph{within} $\type$ if, for any state, all voters in $\informedVoterSet_\type$ use the same strategy: $\forall \informedVoter, \informedVoter' \in \informedVoterSet: \strategy_\informedVoter \left( \uninformed, \turn, \tally^\forVote_\turn, \tally^\againstVote_\turn, \type \right) = \strategy_{\informedVoter'} \left( \uninformed, \turn, \tally^\forVote_\turn, \tally^\againstVote_\turn, \type \right)$.

\paragraphNoSkip{Symmetry across types}
A strategy profile is called symmetric \emph{across} types if for any state, for all voters, the strategy of both types is symmetric when considering the amount of votes cast by each: $\forall \tally^\forVote_\turn, \tally^\againstVote_\turn \in \mathbb{N} \cup \left\{0\right\}: \strategy_\informedVoter \left( \uninformed, \turn, \tally^\forVote_\turn, \tally^\againstVote_\turn, \forVote \right) = \strategy_\informedVoter \left( \uninformed, \turn, \tally^\againstVote_\turn, \tally^\forVote_\turn, \againstVote \right)$.

\paragraphNoSkip{Simplified notations}
For profiles that are symmetric within both types, we omit $\informedVoter$ from our notations, i.\,e., use $\strategy \left( \uninformed, \turn, \tally^\forVote_\turn, \tally^\againstVote_\turn, \type \right)$ instead of $\strategy_\informedVoter \left( \uninformed, \turn, \tally^\forVote_\turn, \tally^\againstVote_\turn, \type \right)$.
Furthermore, \gls{wlog}, for profiles that are symmetric across types, we focus our analysis on $\forVote$ type voters.
Thus, we omit voter types from our notations, and succinctly denote strategies by $\strategy \left( \uninformed, \turn, \tally^\forVote_\turn, \tally^\againstVote_\turn \right)$.

\paragraphNoSkip{Information sets}
Although voters can view the interim tally at each state, this does not always suffice for them to deduce the state they are in, because they do not know who voted.
As we show, the identity of previous voters (i.\,e., whether they are informed or not) can be crucial.
If a voter cannot distinguish between several states, we say that they belong to the same \emph{information set}.
For consistency, we use the term to also refer to sets that comprise a single state, and thus voters have complete information upon reaching them.

\paragraphNoSkip{\Gls{WPBE}}
The \gls{WPBE} solution concept allows us to reason about sequential games with incomplete information, where players update their beliefs in a Bayesian manner.
Intuitively, \gls{WPBE} requires that for states that have a positive probability of being reached, player beliefs are always consistent with Bayes' rule.
\begin{definition}[\Gls{WPBE}]
    \label{def:WPBE}
    A \gls{WPBE} is a belief system $\gls[hyper=false]{belief}$ and a strategy profile $s$ such that:
    \begin{itemize}%[leftmargin=*]
        \item The belief system $\belief{\cdot}{\infoSet}$ is \emph{consistent} given strategy profile $s$: it defines a conditional probability distribution over the states contained in information set $\infoSet$ for all possible information sets, and beliefs are derived using Bayes' rule for any information set that is reached with a strictly positive probability under $s$.
        \item The strategy profile $s$ is \emph{sequentially rational} given belief system $\gls[hyper=false]{belief}$, meaning that for any information set $\infoSet$, following the strategy $s(\infoSet)$ maximizes an agent's expected utility according to $\belief{\cdot}{\infoSet}$, when played from $\infoSet$ onward.
    \end{itemize}
\end{definition}

\begin{remark}
    Given an information set $\infoSet$ and state $\sigma \in \infoSet$, recall that Bayes' rule is defined as $P(\sigma \vert \infoSet) = \frac{P(\infoSet \vert \sigma) P(\sigma)}{P(\infoSet)}$, i.\,e., the probability of an agent being in state $\sigma$ given that the agent is in information set $\infoSet$, equals the conditional probability of being in information set $\infoSet$, conditioned on the probability of being in state $\sigma$, times the probability of being in $\sigma$, divided by the probability of being in information set $\infoSet$.
    As we show in \cref{sec:InfoSet}, the information set that a voter reaches at any given turn is defined according to the interim tally and the voter's action in the preceding turn.
    Thus, voters can discern to which information set they have arrived (while they cannot necessarily discern the specific state within the set).
    This implies that as $\sigma \in \infoSet$, we have $P(\infoSet \vert \sigma) = 1$, and thus we get: $P(\sigma \vert \infoSet) = \frac{P(\sigma)}{P(\infoSet)}$.
    Conversely, for $\sigma \notin \infoSet$, we get $P(\infoSet \vert \sigma) = 0$, and thus also $P(\sigma \vert \infoSet) = 0$.
\end{remark}

\subsection{Timing Game}
Due to the complexity of our game, we focus on the strategic behavior of voters facing a two-period voting window $\finalTurn \define 2$, given that the electorate consists of an uninformed voter, an informed voter in favor of $\forVote$, and an informed voter supporting $\againstVote$.
Intuitively, our setting is inherently that of imperfect but complete information.
Informed voters know the interim tally at each round, yet they do not always know \emph{who} voted in all the game's different information sets.
This uncertainty prevents voters from discerning whether they can affect the vote's outcome and introduces risk: voting in such states may lower utility.
As we show, in equilibria, rational informed voters account for the probability of reaching such states.

We proceed with an overview of our game (summarized in~\cref{fig:ActionSpace} and depicted in full in~\cref{fig:GameTree}), and then succinctly describe our approach to solving the game and finding equilibrium strategies.

\paragraphNoSkip{Overview}
The game tree presented in~\cref{fig:GameTree} takes the perspective of the informed voter $\forVote$ and accounts for the arrival of the uninformed voter in period $\turn_1$ (with probability $\qo$, see the upper part of~\cref{fig:GameTree}), $\turn_2$ (with probability $\qt$, see the medium part of~\cref{fig:GameTree}), or abstaining (with probability $q_\varnothing = (1-\qo-\qt)$, see the bottom part of~\cref{fig:GameTree}). The slope branches represent two strategies available to the informed $\forVote$-type voter: \textit{vote} and \textit{wait} in period $\turn_1$, or \textit{vote} and \textit{abstain} in the next period $\turn_2$ only in those subgames which start with the \textit{wait} strategy in $\turn_1$.
The cornered branches represent moves of other voters.
The interim and final results are specified in form of the observed tally of votes. The last two columns of the game tree decode each tuple of the tally with a sequence of the informed $\forVote$-type voter's move ($\forVote$ or $X$ in case of abstaining), the informed $\againstVote$-type voter's move, and the uninformed voter's move ($\forVote$, $\againstVote$ or $X$ depending on the  outcome of the game).
The utility for each leaf node is provided with respect to the informed  $\forVote$-type voter.

\paragraphNoSkip{Information sets}
There are several information sets within this game, in which the informed  $\forVote$-type voter is not able to distinguish between the moves of the $\againstVote$-type or the uninformed voters.
For instance, the interim tally $\tallyTuple{1}{1}$ forms a set with three states, in which the informed $\forVote$-type voter has voted, but is uncertain about the origin of the opposing vote $\againstVote$.
The information set $\tallyTuple{1}{0}$ (in \emph{blue}) models a situation, in which the informed $\forVote$-type voter has no further action available, yet is uncertain whether the opponents will cast a vote or abstain.
The information set $\tallyTuple{0}{0}$ (in \emph{mustard}) includes two states, in which the informed $\forVote$-type voter can either vote or abstain in period $t_2$, while being uncertain about the arrival of the uninformed voter.
Similarly, in the information set $\tallyTuple{0}{1}$ (in \emph{green}), the informed $\forVote$-type voter may choose from the two actions, however cannot determine whether the registered $\againstVote$-vote belongs to the informed $\againstVote$-type voter or the uninformed voter arriving and voting in period $t_1$ based on the outcome of a coin toss.
Finally, the single node $\tallyTuple{1}{0}$ (in a ``\emph{white}'' double circle) captures the state, in which the informed $\forVote$-type player observes the $\forVote$ vote of the uninformed voter.

\begin{figure}
    \centering
    \input{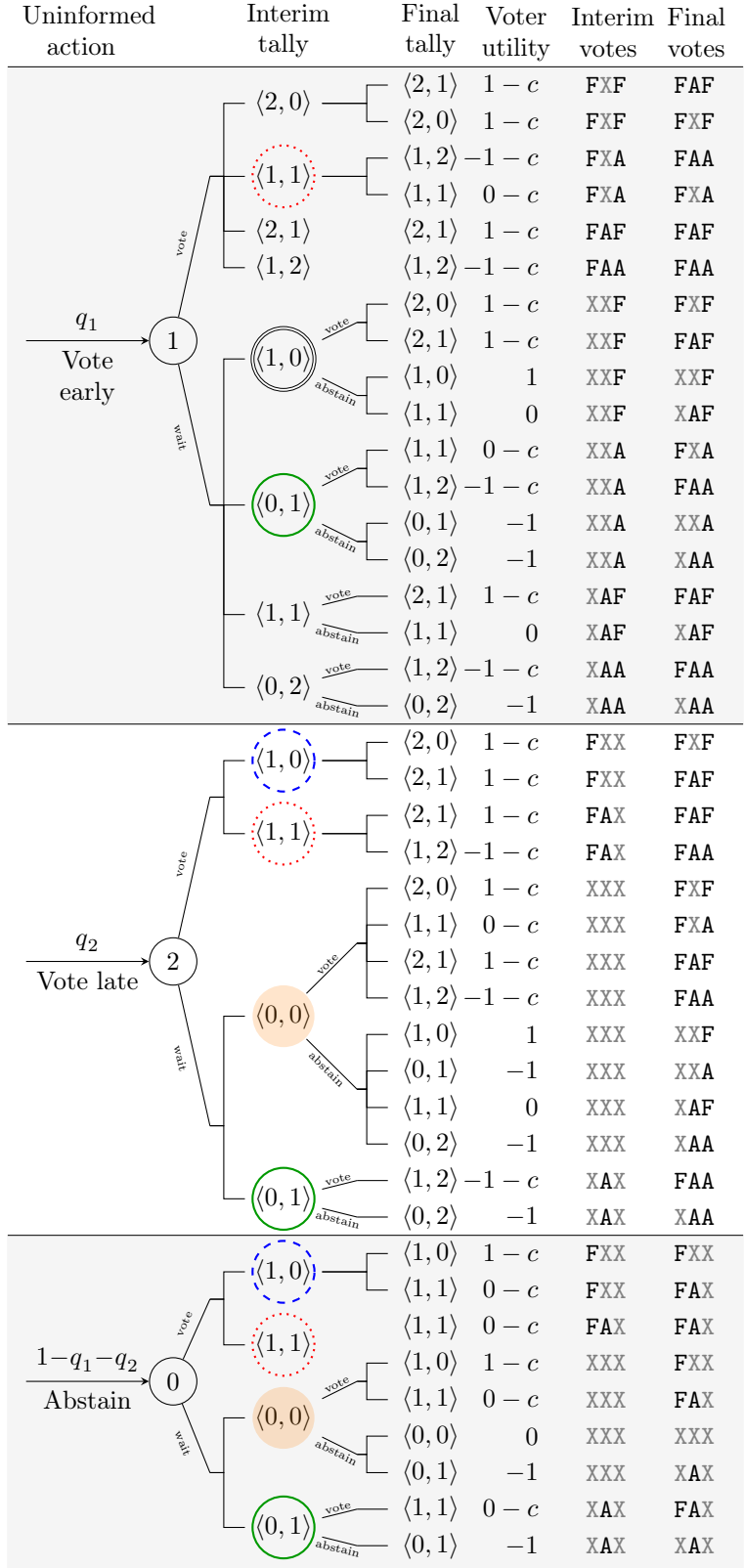}
    \caption{
        A game tree, where the voting window is two periods long and voting blocs have equal weight.
        Sloped branches are player moves; cornered branches are other voters' moves.
        Color indicates interim information sets.
    }
    \label{fig:GameTree}
\end{figure}

\paragraphNoSkip{Solving the game}
Recall that a $\forVote$ type voter can either vote early, or wait to view the interim results, and then decide whether to vote late or abstain.
From the point of view of our voter, for some possible interim result and a sequence of actions that led to this result, the states that correspond to the result are indistinguishable and thereby belong to the same information set.
As our solution concept is \gls{WPBE} (see~\cref{def:WPBE}), equilibrium strategies must be sequentially rational, and thus also optimal within each information set.
This allows us to derive strategies for the different sets that correspond to the last voting period separately, and proceed by backward induction.

\paragraphNoSkip{Notations}
Given that $\forVote$ type voters cannot distinguish between states with the same interim tally that are reached with the same action sequence, we denote the probability a $\forVote$ type voter assigns to voting early by $\po$, and the probability that the voter votes late given that the interim results were
$\infoSet \in \left\{ \tallyTuple{0}{0}, \tallyTuple{1}{0}, \tallyTuple{0}{1}, \tallyTuple{1}{1}, \tallyTuple{0}{2} \right\}$ by $\voteProb{2,\infoSet}$.

\section{Information Set Analysis}
\label{sec:InfoSet}
We proceed by analyzing each of our game's information sets from the perspective of $\forVote$ type voters.
Recall that we analyze the game via backward induction, thus we focus on second-turn information sets where agents have actions available to them.
In particular, these sets are reachable only by waiting in the first turn: if agents voted in the first turn, they do not have any action available to them until the game is over.
Among these sets, the ones that are of the most interest are those in which voters have at least two undominated strategies.
For each such set, we characterize a \emph{cost threshold}: voting dominates when voting costs are lower than the threshold, while abstaining dominates if they are higher.

\subsection{Sets With Undominated Strategies}
\paragraphNoSkip{Mustard set}
In \cref{prop:MustardSet}, we analyze the information set that encompasses all states where the interim results equal $\tallyTuple{0}{0}$ that are reachable if both informed voters wait in the first turn.
In \cref{fig:GameTree}, such states are colored in mustard.
Intuitively, \cref{prop:MustardSet} implies that in this information set, informed voters highly account for the arrival distribution of uninformed voters when choosing whether to vote or not.
Thus, informed votes would rather abstain if, given the distribution, it is unlikely that uninformed voters would turn out to vote, and would vote only if it is cheap to do so, i.e., when $\cost$ is low.
As this information set corresponds to the second turn, one can deduce that uninformed voters have a low likelihood of voting in the following cases:
\begin{itemize}
    \item If uninformed voters tend to vote early, i.e., as $\qo$ approaches one.
    \item If uninformed voters do not tend to vote in the second round, i.e., when $\qt$ approaches zero.
\end{itemize}
\begin{restatable}[]{proposition}{propMustardSet}
    \label{prop:MustardSet}
    The equilibrium strategy for states with an interim tally of $\tallyTuple{0}{0}$ is:
    \begin{equation*}
        \ptp
        =
        \begin{cases}
            1,                   & \left(\left(\qo < 1\right) \wedge \left(\cost < 1 + \frac{\qt}{2 \left(\qo - 1\right)}\right)\right)
            \\
            0,                   & \left(\left(\qo < 1\right) \wedge \left(\cost > 1 + \frac{\qt}{2 \left(\qo - 1\right)}\right)\right)
            \\
            \text{indifference}, & \left(\left(\qo < 1\right) \wedge \left(\cost = 1 + \frac{\qt}{2 \left(\qo - 1\right)}\right)\right) \vee \left(\qo = 1\right)
        \end{cases}
    \end{equation*}
\end{restatable}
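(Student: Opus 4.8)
The plan is to analyze the mustard set directly, exploiting the fact that under \gls{WPBE} (\cref{def:WPBE}) sequential rationality lets us treat this second-turn information set in isolation. Since this set sits at the final turn $t_2 = \finalTurn$, a waiting $\forVote$-type voter has only two available actions, \emph{vote} and \emph{abstain}, so the claim reduces to comparing their expected utilities. First I would pin down the set: it comprises exactly the two states with interim tally $\tallyTuple{0}{0}$ reached after both informed voters waited in the first turn (see \cref{fig:GameTree}). In one state the uninformed voter arrives and votes in the second turn (prior weight $\qt$); in the other it abstains (prior weight $1-\qo-\qt$). Reaching either state additionally requires both informed voters to have waited in the first turn — an event common to both states and independent of the uninformed voter's arrival — so its probability factors cancel in the Bayesian posterior. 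Hence, provided the set is reached with positive probability (which forces $\qo<1$), the $\forVote$ voter assigns posterior probability $\frac{\qt}{1-\qo}$ to the late-arrival state and $\frac{1-\qo-\qt}{1-\qo}$ to the abstention state.

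Next I would compute the expected utility of each action by reading the payoffs off the relevant leaves of \cref{fig:GameTree}. Two inputs enter: in the late-arrival state the uninformed voter faces the tie $\tallyTuple{0}{0}$ and therefore votes $\forVote$ or $\againstVote$ with probability $\tfrac12$ each; and the $\againstVote$-type voter, by symmetry across types at the symmetric tally $\tallyTuple{0}{0}$, votes in this set with the same probability $\alpha \define \ptp$ we are solving for. Conditioning on the uninformed voter's behavior and on $\alpha$, then weighting by the two beliefs above, I expect the closed forms $U_{\text{vote}} = (1-\alpha)K - \cost$ and $U_{\text{abstain}} = -\alpha K$, where $K = \tfrac12\cdot\frac{\qt}{1-\qo} + \frac{1-\qo-\qt}{1-\qo}$.

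The crucial — and, I expect, the only genuinely delicate — step is the simplification $U_{\text{vote}} - U_{\text{abstain}} = K - \cost$: the opponent's mixing probability $\alpha$ cancels. This means the $\forVote$ voter's optimal choice in the mustard set is governed entirely by the sign of $K-\cost$, so we need not jointly solve for the $\againstVote$ voter's strategy. Simplifying $K = 1 - \frac{\qt}{2(1-\qo)} = 1 + \frac{\qt}{2(\qo-1)}$ then identifies the threshold. Sequential rationality forces $\ptp = 1$ when $\cost < 1+\frac{\qt}{2(\qo-1)}$ and $\ptp = 0$ when $\cost > 1+\frac{\qt}{2(\qo-1)}$, while equality makes the two actions payoff-equivalent for every $\alpha$, hence indifference and any mixing probability is a best response.

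Finally I would dispose of the boundary case: when $\qo = 1$ we have $\qt = 0$ and $1-\qo-\qt = 0$, so the mustard set is unreachable; \gls{WPBE} leaves off-path beliefs unconstrained, so no sequential-rationality constraint pins down the action and the voter is again unrestricted, matching the ``indifference'' case. Combining the three regimes reproduces the claim. The main obstacle is the bookkeeping: correctly enumerating the eight relevant leaves (the four continuations of \emph{vote} and \emph{abstain} under the late-arrival state, and the two each under the abstention state) and verifying that $\alpha$ cancels in the difference; once that cancellation is established, the case split is immediate.
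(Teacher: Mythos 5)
Your proposal is correct and follows essentially the same route as the paper's proof: identifying the two states of the mustard set, computing the Bayesian posteriors $\frac{\qt}{1-\qo}$ and $\frac{1-\qo-\qt}{1-\qo}$, and comparing the expected utilities of voting and abstaining to obtain the threshold $\cost = 1 + \frac{\qt}{2(\qo-1)}$. Your explicit observation that the opponent's mixing probability cancels in $U_{\text{vote}}-U_{\text{abstain}}$ (and your handling of the off-path case $\qo=1$) is a slightly cleaner presentation of the same computation the paper performs via its indifference equation.
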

We prove \cref{prop:MustardSet} in \cref{sec:Proofs}.
In the proof, we first identify the set's states, define the beliefs of agents given that a state belonging to this set is reached, and finally perform an analysis of equilibrium strategies.

\paragraphNoSkip{Green set}
In \cref{prop:GreenSet}, we analyze the information set that corresponds to states with an interim tally equal to $\tallyTuple{0}{1}$, which are colored green in \cref{fig:GameTree}.
This set is interesting, as upon reaching it, having certain knowledge of the exact state that is reached is possible only in degenerate cases.
The proof of \cref{prop:GreenSet} follows the lines of the proof of \cref{prop:MustardSet}, and is provided in \cref{sec:Proofs}.
\begin{restatable}[]{proposition}{propGreenSet}
    \label{prop:GreenSet}
    The equilibrium strategy for states with an interim tally equal to $\tallyTuple{0}{1}$ is:
    \begin{equation*}
        \pta
        =
        \begin{cases}
            1,                   & \cost < 1 + \frac{(\ptf \qo - 2 \qt) \po - \ptf \qo}{(2 - 3 \qo)\po + \qo}
            \\
            0,                   & \cost > 1 + \frac{(\ptf \qo - 2 \qt) \po - \ptf \qo}{(2 - 3 \qo)\po + \qo}
            \\
            \text{indifference}, & \cost = 1 + \frac{(\ptf \qo - 2 \qt) \po - \ptf \qo}{(2 - 3 \qo)\po + \qo}
        \end{cases}
    \end{equation*}
\end{restatable}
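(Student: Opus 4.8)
The plan is to follow the template of the proof of \cref{prop:MustardSet}: first enumerate the states comprising the $\tallyTuple{0}{1}$ information set that are reachable when the $\forVote$-type voter waits in $\turn_1$, then derive the Bayesian beliefs over these states, and finally compare the expected utilities of \emph{voting} and \emph{abstaining} to extract the cost threshold. Inspecting \cref{fig:GameTree} (the three green nodes), the set consists of exactly three states, all sharing the feature that the $\forVote$-voter has waited and a single $\againstVote$-vote sits on the tally: (i) the uninformed voter arrived in $\turn_1$ and voted $\againstVote$ (probability $\tfrac12$) while the informed $\againstVote$-voter waited; (ii) the informed $\againstVote$-voter voted early while the uninformed voter arrives in $\turn_2$; and (iii) the informed $\againstVote$-voter voted early while the uninformed voter abstains. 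By symmetry across types the $\againstVote$-voter votes early with the same probability $\po$ as the $\forVote$-voter, and in state (i), where it too has waited, its $\turn_2$ voting probability at the objective tally $\tallyTuple{0}{1}$ equals the $\forVote$-voter's probability at the coordinate-swapped tally $\tallyTuple{1}{0}$, namely $\ptf$. Conditioned on the common event that the $\forVote$-voter waited, the unnormalized probabilities of the three states are therefore $P_1 = \tfrac12\qo(1-\po)$, $P_2 = \qt\po$, and $P_3 = (1-\qo-\qt)\po$; dividing by $P = P_1+P_2+P_3$ yields the beliefs $\belief{\sigma_i}{\tallyTuple{0}{1}}$ by Bayes' rule, valid whenever $P > 0$.

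Next I would compute, state by state, the difference between the expected utility of voting and abstaining. In state (i) abstaining yields $-1$ (the lone $\againstVote$-vote already decides the outcome), while voting yields $\ptf(-1-\cost)+(1-\ptf)(-\cost) = -\ptf-\cost$, for a difference of $1-\ptf-\cost$; the only uncertainty is the $\againstVote$-voter's simultaneous $\turn_2$ move. In state (ii) the uninformed voter, seeing the interim majority $\tallyTuple{0}{1}$, bandwagons onto $\againstVote$, so the $\forVote$-voter cannot alter the losing outcome and voting merely wastes $\cost$, a difference of $-\cost$. In state (iii) voting produces the tie $\tallyTuple{1}{1}$ worth $-\cost$ against the loss $-1$ of abstaining, a difference of $1-\cost$. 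Weighting by the beliefs, the sign of the expected gain from voting equals the sign of
\begin{equation*}
    E = P_1(1-\ptf-\cost) + P_2(-\cost) + P_3(1-\cost) = A - P\,\cost,
\end{equation*}
where $A = \tfrac12\qo(1-\po)(1-\ptf) + (1-\qo-\qt)\po$ collects the cost-free terms and the coefficient of $\cost$ is exactly $-P$ (the cost is paid in every state upon voting). Hence the $\forVote$-voter strictly prefers to vote iff $\cost < A/P$, strictly prefers to abstain iff $\cost > A/P$, and is indifferent at equality, establishing the threshold structure for $\pta$ claimed in the statement.

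It then remains to verify the algebraic form $\cost^\ast = A/P$. Simplifying the denominator gives $P = \tfrac12\big((2-3\qo)\po + \qo\big)$, which matches the stated denominator, and expanding $A$ over the common denominator $2$ reproduces $1 + \frac{(\ptf\qo - 2\qt)\po - \ptf\qo}{(2-3\qo)\po+\qo}$ term by term. I expect the main obstacle to be conceptual rather than computational: correctly pinning down the \emph{other} voters' behavior in each state — recognizing that in state (i) the informed $\againstVote$-voter still has a live $\turn_2$ decision governed by the symmetric probability $\ptf$, whereas in states (ii) and (iii) it has already committed early, and that the uninformed voter in state (ii) deterministically follows the interim majority. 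Once these roles are fixed the remaining algebra is routine; one need only observe that the derivation presumes $P > 0$, so that the information set is reached with positive probability and the beliefs are Bayes-consistent.
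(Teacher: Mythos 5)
Your proposal is correct and follows essentially the same route as the paper's proof: the same three states with the same reachability probabilities, the same Bayesian beliefs, the same per-state utilities (including the use of $\ptf$ for the $\againstVote$-voter's symmetric second-turn move in the uninformed-voted-early state), and the same threshold $A/P$, which indeed simplifies to $1 + \frac{(\ptf\qo - 2\qt)\po - \ptf\qo}{(2-3\qo)\po+\qo}$. The only cosmetic difference is that you work directly with the voting-minus-abstaining utility difference rather than writing out the indifference equation and solving it for $\cost$.
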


\paragraphNoSkip{White set}
The last information set that has at least two undominated strategies is the one that consists of all states with an interim tally of $\tallyTuple{1}{0}$ which are reachable by waiting in the first turn.
Although there is only one such state and thus there is no uncertainty for the $\forVote$-type voter upon reaching it, it proves critical for our analysis of the first stage of the game, as in this state, our voter can enjoy the ``free rider'' effect: the single vote in support of $\forVote$ is surely cast by the uninformed agent, implying that in certain cases, the $\forVote$-type voter may be able to avoid incurring the cost of voting, yet still have its preferred candidate win.
This state is denoted by two nested circles that form a single circle with a white outline in \cref{fig:GameTree}.
This information set is analyzed in \cref{prop:WhiteSet}.
As before, the corresponding proof is given in \cref{sec:Proofs}, and is similar to that of \cref{prop:MustardSet}.
\begin{restatable}[]{proposition}{propWhiteSet}
    \label{prop:WhiteSet}
    Given that the $\forVote$ type voter did not vote early and a state with interim tally equal to $\tallyTuple{1}{0}$ is reached, the equilibrium strategy is:
    \begin{equation*}
        \ptf
        =
        \begin{cases}
            1,                   & ((\qo > 0) \wedge (\cost < \pta))
            \\
            0,                   & ((\qo > 0) \wedge (\cost > \pta))
            \\
            \text{indifference}, & ((\qo > 0) \wedge (\cost = \pta)) \vee (\qo = 0)
        \end{cases}
    \end{equation*}
\end{restatable}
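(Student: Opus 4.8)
The plan is to follow the same template used for \cref{prop:MustardSet}: first pin down the unique state comprising the White set, then determine the $\forVote$-type voter's belief upon reaching it, compute the expected utility of each available second-turn action, and compare them to extract the cost threshold.

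First I would identify the state. Since the interim tally is $\tallyTuple{1}{0}$ and this set is only reachable by the $\forVote$-type voter waiting in $\turn_1$, the lone $\forVote$ vote cannot be its own; because the $\againstVote$-type voter contributes nothing to the $\forVote$ count, the single vote must have been cast by the uninformed voter, who arrived in $\turn_1$ and chose $\forVote$ on the coin toss. Hence the set is the singleton (node M7 in \cref{fig:GameTree}), the $\forVote$-type voter has complete information, and the state is reached with positive probability precisely when $\qo > 0$. When $\qo = 0$ the information set is off the equilibrium path, so Bayes' rule does not constrain beliefs and no action is ruled out by sequential rationality; this yields the ``indifference'' clause for $\qo = 0$.

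Next I would evaluate the two second-turn actions. Voting late raises the $\forVote$ count to $2$, so $\forVote$ wins regardless of whether the $\againstVote$-type voter votes, giving a deterministic utility of $1 - \cost$ (leaves L7 and L8). Abstaining leaves the $\forVote$ count at $1$, so the outcome hinges on the $\againstVote$-type voter: if it abstains, $\forVote$ wins with utility $1$ (leaf L9), and if it votes late the result is a tie with utility $0$ (leaf L10). The crucial step is to compute the probability that the $\againstVote$-type voter votes late. Here I invoke symmetry across types: the $\againstVote$-type voter facing the tally $\tallyTuple{1}{0}$ plays exactly as a $\forVote$-type voter facing $\tallyTuple{0}{1}$, i.e.\ the Green set of \cref{prop:GreenSet}, and therefore votes late with probability $\pta$. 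The expected utility of abstaining is thus $(1-\pta)\cdot 1 + \pta \cdot 0 = 1 - \pta$. Comparing the two actions, the $\forVote$-type voter strictly prefers to vote when $1 - \cost > 1 - \pta$, i.e.\ $\cost < \pta$, strictly prefers to abstain when $\cost > \pta$, and is indifferent when $\cost = \pta$, which is exactly the claimed threshold.

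I expect the main obstacle to be the careful bookkeeping around symmetry across types: one must verify that the $\againstVote$-type voter genuinely occupies the mirror image of the Green set (same conditioning on having waited in $\turn_1$ and on the observed tally), so that its late-vote probability is legitimately $\pta$ rather than some other quantity. A secondary subtlety is justifying the $\qo = 0$ branch through unconstrained off-path beliefs rather than through an equality of expected utilities.
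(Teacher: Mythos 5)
Your proof is correct and follows essentially the same route as the paper's: identify the singleton state (the uninformed voter's early $\forVote$ vote), use symmetry across types to conclude the $\againstVote$-type voter occupies the mirror of the Green set and votes with probability $\pta$, and compare $1-\cost$ against $(1-\pta)\cdot 1 + \pta\cdot 0$ to obtain the threshold $\cost = \pta$. Your explicit justification of the $\qo = 0$ indifference branch via the unreachability of the node is a slight elaboration beyond what the paper writes, but it is consistent with the paper's intent.
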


\subsection{Sets With Dominated Strategies or no Actions}
While the information sets with undominated strategies are of particular interest, we emphasize that the others are important for our analysis of first-turn voter strategies.
Thus, we go over them.

\paragraphNoSkip{Sets with available actions}
There are two information sets with dominated strategies in which voters have actions available to them.
In the first case, we prove in \cref{prop:ZeroTwoSet} that abstaining dominates voting for almost all $\cost$ values, while voting dominates in the second set, as shown by \cref{prop:OneOneSet}.

\begin{restatable}[]{proposition}{propZeroTwoSet}
    \label{prop:ZeroTwoSet}
    Given that the $\forVote$ type voter did not vote early and a state with interim tally equal to $\tallyTuple{0}{2}$ is reached, the equilibrium strategy is:
    $
        \voteProb{2,\tallyTuple{0}{2}}
        =
        \begin{cases}
            0,                   & \cost > 0
            \\
            \text{indifference}, & \cost = 0
        \end{cases}
    $.
\end{restatable}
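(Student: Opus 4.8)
The plan is to reuse the template of \cref{prop:MustardSet}: identify the states comprising the information set, compute the $\forVote$-type voter's payoff for each available action, and compare them. Because we proceed by backward induction and this set lies in the final turn $\turn_2$, only sequential rationality within the set matters, and no reasoning about subsequent play is required.

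First I would pin down the states. By hypothesis the $\forVote$-type voter did not vote early; equivalently, reaching a second-turn decision node requires it to have \emph{waited} in $\turn_1$, so it has cast no $\forVote$ vote yet. For the interim tally to equal $\tallyTuple{0}{2}$ there must be two $\againstVote$ votes and zero $\forVote$ votes after $\turn_1$. The only agents capable of producing $\againstVote$ votes are the informed $\againstVote$-type voter and the uninformed voter; hence both must have voted $\againstVote$ in $\turn_1$, with the uninformed voter arriving early and selecting $\againstVote$ via its first-round coin toss. This combination is unique, so the information set is a singleton (the node with interim tally $\tallyTuple{0}{2}$ in \cref{fig:GameTree}) and belief updating is trivial: the voter has complete information upon reaching it.

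The decisive observation is that the outcome is already fixed. The uninformed voter has spent its single vote in $\turn_1$, so no further uninformed voting power remains, and the $\forVote$-type voter can contribute at most one additional vote, giving a final tally of at most $\tallyTuple{1}{2}$ in which $\againstVote$ still wins. Consequently $\forVote$ loses no matter what the voter does in $\turn_2$: voting yields utility $-1 - \cost$ (the candidate loses and the cost is incurred), while abstaining yields $-1$. I would then finish with the payoff comparison: for $\cost > 0$ we have $-1 > -1 - \cost$, so abstaining strictly dominates and $\voteProb{2,\tallyTuple{0}{2}} = 0$, whereas for $\cost = 0$ both actions give $-1$ and the voter is indifferent. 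I do not anticipate any genuine obstacle here—unlike the mustard and green sets, the election result does not depend on the voter's move, so the cost threshold degenerates to $\cost = 0$ and the argument reduces to a one-line comparison of two payoffs.
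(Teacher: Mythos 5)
Your proposal is correct and matches the paper's own proof: both identify the singleton information set, observe that the $\forVote$-type voter cannot overturn the two existing $\againstVote$ votes, and compare the payoffs $-1-\cost$ (vote) versus $-1$ (abstain). The extra detail you give on how the state is reached (both opponents voting $\againstVote$ in $\turn_1$) is a harmless elaboration of the same argument.
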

\begin{proof}
    There is just one state which has a tally of $\tallyTuple{0}{2}$ and is reachable by not voting in the first turn.
    According to the interim tally, our $\forVote$-type voter cannot tilt the outcome, as there are two votes for $\againstVote$.
    Thus, voting is futile and results in a utility of $-1-\cost$.
    Abstaining gives a utility of $-1$, and thus strictly dominates voting when $\cost > 0$.
\end{proof}

\begin{restatable}[]{proposition}{propOneOneSet}
    \label{prop:OneOneSet}
    Given that the $\forVote$ type voter did not vote early and a state with interim tally equal to $\tallyTuple{1}{1}$ is reached, the equilibrium strategy is:
    $
        \voteProb{2,\tallyTuple{1}{1}}
        =
        1
    $.
\end{restatable}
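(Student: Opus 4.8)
The plan is to mirror the argument used for \cref{prop:ZeroTwoSet}: first pin down the (unique) state comprising this information set, then compare the $\forVote$-voter's payoff from voting late against abstaining, and conclude by a strict-dominance argument.

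First I would identify the state. Since we condition on the $\forVote$-voter having waited in the first turn, it contributed nothing to the interim tally $\tallyTuple{1}{1}$. The single $\forVote$ vote in the tally therefore cannot be the player's own, and---because an informed $\againstVote$-type voter only ever casts a vote for $\againstVote$---it cannot come from the opposing informed voter either. Hence that $\forVote$ vote must have been cast by the uninformed voter, who arrived early and chose $\forVote$ on its coin flip. As the uninformed voter casts at most one vote, the remaining $\againstVote$ vote is necessarily the informed $\againstVote$-type voter's, cast early. This shows the information set is the singleton appearing as the $\langle 1,1\rangle$ node reached along the \emph{wait} branch in \cref{fig:GameTree}; in particular, both of the other voters have already voted, so the belief system places probability one on this state and no Bayesian subtlety arises.

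Next I would evaluate the two available actions. Because both opponents have already voted, the $\forVote$-voter is pivotal and its choice alone fixes the outcome: voting late yields the final tally $\tallyTuple{2}{1}$, a strict $\forVote$ majority, for utility $1 - \cost$; abstaining leaves the tally at $\tallyTuple{1}{1}$, a tie broken by a fair coin, for expected utility $\tfrac{1}{2}(1) + \tfrac{1}{2}(-1) = 0$. Since $\cost \in [0,1)$ implies $1 - \cost > 0$, voting strictly dominates abstaining, and sequential rationality forces $\voteProb{2,\tallyTuple{1}{1}} = 1$. I do not anticipate a genuine obstacle here, as the payoff comparison is immediate once the state is identified; the only point requiring care is the first step---ruling out every alternative origin of the interim $\forVote$ vote to establish that the set is a singleton---which is precisely where this set differs from the superficially identical red-dotted $\tallyTuple{1}{1}$ set, in which the $\forVote$-voter has already voted early and hence retains no action.
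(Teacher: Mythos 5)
Your proposal is correct and follows essentially the same argument as the paper's proof: identify the set as a singleton in which both other voters have already voted, then observe that voting late secures a win for utility $1-\cost$ while abstaining yields an expected utility of $0$ from the tie-breaking coin flip, so voting strictly dominates since $\cost < 1$. Your more careful derivation of why the interim $\forVote$ vote must be the uninformed voter's is a welcome clarification of a step the paper states only loosely.
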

\begin{proof}
    The described set contains exactly one state.
    In this state, as the $\forVote$-type voter waited in the first turn and as two other voters have voted, one can deduce that all uninformed and $\forVote$-type voters have voted.
    Thus, the $\forVote$-type voter can vote in the current turn and ensure that $\forVote$ reigns supreme, netting a utility of $1-\cost$.
    On the other hand, abstaining will result in the winner being chosen by a coin-flip, and thus in a utility equal to $0$.
    As by definition we have $\cost < 1$, this means that voting strictly dominates abstaining.
\end{proof}

\paragraphNoSkip{Sets with no available action}
There are five information sets in which voters have no action available to them.
In particular, these sets are reachable only for agents who voted in the first turn.
\begin{itemize}
    \item The set containing all states with an interim tally equal to $\tallyTuple{1}{1}$.
          We highlight these with red dotted circles in \cref{fig:GameTree}.
          The states in this set are indistinguishable for voters if the uninformed arrival probability is not entirely concentrated in a single event, i.e., if at least one of the following holds: $\qo \ne 1$, $\qt \ne 1$, or $\qo+\qt \ne 1$.
    \item The set consisting of all states that have an interim tally of $\tallyTuple{1}{0}$.
          These states are denoted in \cref{fig:GameTree} by dashed blue circles.
          We note that voters can distinguish between the set's states if either $\qt = 0$, or $\qo+\qt=1$.
    \item The states with interim tallies equal to $\tallyTuple{2}{0}$, $\tallyTuple{2}{1}$, and $\tallyTuple{1}{2}$ correspond each to a separate ``singleton'' information set comprising a single state.
          As each set contains just one state, voters have full information upon reaching them.
\end{itemize}

\section{Timing Game Analysis}
So far, we have analyzed information sets that are reachable only if the $\forVote$ voter did not vote early.
We proceed to examine the ``timing game'' that arises when voters strategically choose when to vote, including the possibility of voting early.

\subsection{Utility of Voting Early and Waiting}
We first write the expected utility that can be gained from either voting in the first turn, or waiting to the second one.
% 
% \paragraphNoSkip{Notations}
We denote the expected utility of the informed $\forVote$-type voter from voting early by $\Uearly$ and from waiting by $\Uwait$.
In \cref{prop:EarlyUtility,prop:LateUtility} we explicitly derive these expected utilities.
Briefly, both values are obtained by analyzing all possible outcomes for following the respective strategy (i.e., either voting early or waiting), and summing the resulting utilities.
The full proofs are given in \cref{sec:Proofs}.

\begin{restatable}[]{proposition}{propEarlyUtility}
    \label{prop:EarlyUtility}
    The utility gained from voting early is:
    \begin{equation*}
        \Uearly =
        (1 - \po) \cdot (1 - \qo + (1 - \qo - \qt)\cdot\pta) - \cost
    \end{equation*}
\end{restatable}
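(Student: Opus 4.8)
The plan is to compute $\Uearly$ directly by enumerating the outcomes reachable once the $\forVote$-type voter commits to voting in $\turn_1$, exactly as laid out in the upper, middle, and lower blocks of \cref{fig:GameTree}. Since voting early incurs the cost $\cost$ with certainty in every continuation, I would first peel off the additive $-\cost$ term and reduce the problem to computing the expected \emph{win value} $V \in \{+1, 0, -1\}$ of the $\forVote$-type voter, where $+1$, $0$, and $-1$ correspond to $\forVote$ winning, a tie resolved by a coin flip, and $\forVote$ losing. The expectation is then taken over two independent sources of randomness available at the start of the game: the arrival of the uninformed voter — early with probability $\qo$, late with probability $\qt$, or abstaining with probability $1 - \qo - \qt$ — and the first-round move of the $\againstVote$-type voter, which by symmetry across types is \textit{vote} with probability $\po$ and \textit{wait} with probability $1 - \po$.

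Next I would resolve each resulting subgame using the second-stage results already established. The key observation is that the continuation play of the $\againstVote$-type voter is governed by the same propositions proved for the $\forVote$-type voter, read through the across-types symmetry: facing an interim tally $\tallyTuple{\tally^\forVote}{\tally^\againstVote}$, it behaves as a $\forVote$-type voter facing $\tallyTuple{\tally^\againstVote}{\tally^\forVote}$. Concretely, after the $\forVote$-type voter has voted, an interim $\tallyTuple{2}{0}$ puts the $\againstVote$-type voter in the symmetric set of \cref{prop:ZeroTwoSet}, where it abstains and $\forVote$ wins; an interim $\tallyTuple{1}{1}$ reached by waiting puts it in the set of \cref{prop:OneOneSet}, where it votes with certainty; and an interim $\tallyTuple{1}{0}$ reached by waiting puts it in the symmetric green set of \cref{prop:GreenSet}, where it votes with probability $\pta$. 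I would also invoke that a late uninformed voter bandwagons onto whichever candidate leads the interim tally and splits uniformly on a tie.

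Carrying out the bookkeeping, two structural cancellations drive the result. First, whenever the uninformed voter arrives early it observes the empty tally $\tallyTuple{0}{0}$ and splits uniformly; tracing both halves shows the $\forVote$-type voter's expected win value is $0$, so the entire $\qo$-block contributes nothing. Second, whenever the $\againstVote$-type voter itself votes early the two informed votes cancel and a voting uninformed again faces a tie and splits uniformly, so every such branch also nets a win value of $0$; this is what produces the overall factor $1 - \po$. The surviving contributions come from the branches in which the $\againstVote$-type voter waits: when the uninformed arrives late it bandwagons onto the lone $\forVote$ vote and secures the win outright, and when the uninformed abstains the outcome hinges on the green-set play of the $\againstVote$-type voter, modulated by $\pta$. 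Assembling these weighted win values into the bracket $1 - \qo + (1 - \qo - \qt)\,\pta$ and reinstating the $-\cost$ term yields the closed form in the statement.

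The main obstacle I anticipate is organizational rather than conceptual: the number of leaves is large, and the delicate point is to apply the second-stage propositions to the \emph{opponent} correctly via the across-types symmetry, being careful that the $\againstVote$-type voter conditions only on the interim tally it actually observes, and not on the uninformed arrival event over which we take the outer expectation. Once the symmetry is handled consistently, the remainder is the routine weighted sum whose cancellations I sketched above.
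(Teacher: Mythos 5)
Your approach is the same as the paper's: a branch-by-branch expected-utility computation over the uninformed arrival event and the opponent's first-round move, with the opponent's continuation play read off from the second-stage propositions via the across-types symmetry (in particular, the $\pta$ factor for the opponent in the symmetric green set after the interim tally $\tallyTuple{1}{0}$). Your structural observations --- the certain $-\cost$, the zero expected win value of the entire $\qo$-block, and the zero win value whenever the opponent also votes early --- are exactly the cancellations that occur when the paper's eight enumerated terms are summed. The one thing to fix is the final assembly: the win values you (correctly) list are $+1$ on the late-arrival branch, $1-\pta$ in expectation on the abstention branch, and $0$ elsewhere, and these combine to $\left(1 - \po\right)\left(\qt + \left(1-\qo-\qt\right)\left(1-\pta\right)\right) = \left(1 - \po\right)\left(1-\qo-\left(1-\qo-\qt\right)\pta\right)$, i.e., the $\pta$ term enters with a \emph{minus} sign, not the plus sign you wrote in your last step. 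Note that the displayed statement of \cref{prop:EarlyUtility} carries the same sign slip: the individual terms enumerated in the paper's own proof, and the expression actually used downstream in \cref{cor:EarlyWaitDiff}, are both consistent with $\Uearly = \left(1 - \po\right)\left(1-\qo-\left(1-\qo-\qt\right)\pta\right) - \cost$. So your derivation is sound; just do not force the bracket to match the printed formula.
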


\begin{restatable}[]{proposition}{propLateUtility}
    \label{prop:LateUtility}
    The utility gained from waiting in the first turn is:
    \begin{align*}
        \Unotearly
         & =
        (\po-1) \cost \ptp
        -
        \po (
        1 + (\cost-1+\qt) \pta
        )
        \\
         & -
        \frac{\qo}{2} (
        (\cost-2) \po (1 - \pta)
        -
        \cost (
        (\po - 1) (
        2\ptp
        -
        \ptf
        -
        \pta
        )
        -
        \po\pta
        )
        )
    \end{align*}
\end{restatable}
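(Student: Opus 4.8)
The plan is to compute $\Unotearly$ directly as the expected payoff of the $\forVote$-type voter over the entire subtree of \cref{fig:GameTree} hanging below the \emph{wait} branch in the first turn, using backward induction so that every second-turn move has already been pinned down by the information-set analysis of \cref{sec:InfoSet}. Concretely, I would first condition on the exogenous randomness that fixes the interim tally the player faces at $t_2$: the uninformed voter arrives early (probability $\qo$), late ($\qt$), or abstains ($1-\qo-\qt$), and when it arrives early it splits its vote by a fair coin; independently, the opposing $\againstVote$-type voter votes in $t_1$ with probability $\po$ or waits with probability $1-\po$. This is precisely how $\po$ enters a \emph{waiting} payoff, namely through the opponent's symmetric first-move probability rather than the player's own. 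Each combination lands the player in exactly one interim information set among $\tallyTuple{0}{0}$, $\tallyTuple{1}{0}$, $\tallyTuple{0}{1}$, $\tallyTuple{1}{1}$, $\tallyTuple{0}{2}$, whose reach probabilities are simple products of $\qo,\qt,\po$ and the coin factor $\tfrac12$.

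Next I would substitute the known second-turn behavior into each reached set. The player votes with probability $\ptp$, $\ptf$, $\pta$ in the mustard, white, and green sets respectively, votes with probability $1$ at $\tallyTuple{1}{1}$ (\cref{prop:OneOneSet}), and abstains at $\tallyTuple{0}{2}$ (\cref{prop:ZeroTwoSet}); the latter two degenerate probabilities can be dropped, which is consistent with their absence from the target formula. The opponent's $t_2$ action is fixed by symmetry across types: a $\againstVote$-type facing tally $\tallyTuple{\tally^\forVote}{\tally^\againstVote}$ behaves like a $\forVote$-type facing $\tallyTuple{\tally^\againstVote}{\tally^\forVote}$, so at $\tallyTuple{1}{0}$ it votes with probability $\pta$, at $\tallyTuple{0}{1}$ with probability $\ptf$, and at $\tallyTuple{0}{0}$ with the same probability $\ptp$ as the player. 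A late-arriving uninformed voter follows the bandwagon rule on the $t_1$ tally (splitting evenly at $\tallyTuple{0}{0}$). With these rules each leaf's outcome and utility are read off \cref{fig:GameTree}, and $\Unotearly$ is the sum of (path probability) $\times$ (leaf utility) over all leaves of the wait-subtree.

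The final step is algebraic consolidation. I would group leaves by information set, collapse the within-set sums (for instance the $\tallyTuple{1}{0}$ block collapses because $\forVote$ wins regardless of the opponent's move once the player has voted), and then collect terms by the monomials in $\cost$ and the strategy probabilities $\po,\ptp,\ptf,\pta$. A useful consistency check is that the ``uninformed votes late'' and ``uninformed abstains'' branches share the same mustard-set continuation, so their $\qt$-dependent contributions largely cancel against one another, leaving only the compact $\tfrac{\qo}{2}(\cdots)$ correction that distinguishes the early-arrival case in the statement.

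The step I expect to be the main obstacle is the mustard information set $\tallyTuple{0}{0}$, because there both informed voters still mix in $t_2$ with the same probability $\ptp$, so the naive leaf sum produces quadratic cross-terms in $\ptp$ (products of the two independent second-turn decisions, compounded by the uninformed coin flip). Verifying that these quadratic terms cancel, so that the mustard continuation reduces to a clean cost-only contribution and $\Unotearly$ stays affine in $\ptp,\ptf,\pta$, is the delicate bookkeeping at the heart of the argument; the remainder is careful but routine enumeration guided by the color-coded information sets of \cref{fig:GameTree}.
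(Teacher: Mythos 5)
Your proposal is correct and follows essentially the same route as the paper's proof: conditioning on the uninformed arrival and the opponent's first-round move to partition the wait-subtree into the eight reachable interim-tally branches, substituting the symmetric second-turn strategies ($\pta$ for the opponent at $\tallyTuple{1}{0}$, $\ptf$ at $\tallyTuple{0}{1}$, $\ptp$ at $\tallyTuple{0}{0}$) and the dominant actions at $\tallyTuple{1}{1}$ and $\tallyTuple{0}{2}$, and summing probability-weighted leaf utilities. Your anticipated obstacle is also the right one — in the paper's computation the $\ptp^2$ cross-terms in both mustard-set branches indeed cancel, each reducing to a $\ptp\cdot(-\cost)$ contribution.
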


Our analysis relies on comparing the utility of voting early with that of waiting.
To this end, we present two useful results in \cref{cor:EarlyWaitDiff,cor:BothIndifferencePo}, both of which are proven in \cref{sec:Proofs}.
\begin{restatable}[]{corollary}{corEarlyWaitDiff}
    \label{cor:EarlyWaitDiff}
    The difference in utility between voting early and voting late is:
    \begin{align*}
        \Uearly - \Unotearly
         & =
        \left(
        \ptp - 1 +
        \left(\frac{1}{2}  \left(\pta + \ptf\right) - \ptp\right) \qo
        \right) \cost
        \nonumber \\&
        - \left(
        \left(3 \pta + \ptf - 2 \ptp - 1\right) \frac{\qo}{2}
        + \left(\ptp - \pta\right)
        \right) \cost \po
        \nonumber \\&
        + 1 - \qo
        - \left(1 - \qo - \qt\right) \pta
    \end{align*}
\end{restatable}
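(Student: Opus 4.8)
The plan is to treat this as an immediate algebraic consequence of \cref{prop:EarlyUtility,prop:LateUtility}: both $\Uearly$ and the waiting utility $\Unotearly$ are already given in closed form, so the only work is to form the difference and rearrange it into the three-line shape of the claim. No probabilistic or game-theoretic reasoning is needed beyond invoking those two results. First I would substitute both expressions verbatim and fully expand all products, paying special attention to the doubly-nested signed term $\bigl(\po-1\bigr)\bigl(2\ptp-\ptf-\pta\bigr)-\po\pta$ sitting inside the $-\tfrac{\qo}{2}(\cdots)$ factor of $\Unotearly$, since this is where sign errors are most likely to creep in.

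The key organizing principle is to sort every resulting monomial by its dependence on the voting cost $\cost$ and the early-voting probability $\po$, matching the structure of the target: a coefficient multiplying $\cost$, a coefficient multiplying $\cost\po$, and a cost-free remainder. Concretely, I would collect (i) all terms carrying a factor $\cost\po$, which gather the contributions of $\ptp$, $\ptf$, and $\pta$ weighted by $\qo$ and give the second line; (ii) all terms linear in $\cost$ but free of $\po$, which yield the first line; and (iii) the terms with no factor of $\cost$, coming from $(1-\po)\bigl(1-\qo+(1-\qo-\qt)\pta\bigr)$ in $\Uearly$ together with the $-\po\bigl(1+(\qt-1)\pta\bigr)$ and $\qo\po(1-\pta)$ pieces of $\Unotearly$, which are to be simplified into the final line $1-\qo-(1-\qo-\qt)\pta$.

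I expect the main obstacle to be purely the bookkeeping: the coefficients of $\ptp$, $\ptf$, and $\pta$ (and of their products with $\qo$ and $\po$) each arise from several distinct places in $\Unotearly$, so the simplification hinges on combining many like terms without dropping a factor of $\tfrac12$ or flipping a sign. To guard against this, I would verify the grouped identity against the raw difference in a few degenerate regimes — for instance $\qo=0$, $\po=0$, and $\po=1$ — where most cross terms vanish and each line of the claim can be checked in isolation. Agreement across these cases, together with the term-by-term collection, would establish the corollary.
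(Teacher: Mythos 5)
Your method is exactly the paper's: the entire proof there is ``subtract \cref{prop:LateUtility} from \cref{prop:EarlyUtility} and simplify,'' and your plan of expanding, then collecting terms by their dependence on $\cost$ and on $\cost\po$, is the right bookkeeping discipline for that computation.

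There is, however, one concrete snag you will hit: a \emph{verbatim} substitution of the printed closed forms does not reproduce the claimed identity, because the ``In total'' lines of \cref{prop:EarlyUtility,prop:LateUtility} are not consistent with their own term-by-term derivations, whereas the corollary is consistent with the derivations. In particular, summing the itemized contributions in the proof of \cref{prop:EarlyUtility} gives $\Uearly = (1-\po)\left(1-\qo-\left(1-\qo-\qt\right)\pta\right)-\cost$, with a minus sign in front of $\left(1-\qo-\qt\right)\pta$ rather than the plus sign in the statement (the printed version can exceed the maximal attainable utility of $1-\cost$, e.g.\ at $\qo=\qt=0$, $\po=0$, $\pta=1$ it evaluates to $2$); similarly, the $\ptp\cost$-coefficient implied by the itemized derivation of $\Unotearly$ is $-(1-\qo)(1-\po)$, while the printed grouping yields $-(1+\qo)(1-\po)$, i.e.\ the sign of the $\cost(\cdots)$ block inside the $\tfrac{\qo}{2}(\cdots)$ bracket is effectively flipped. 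Your own proposed degenerate-regime checks (e.g.\ $\qo=0$, $\po=0$) would expose exactly this mismatch in the $\pta$-dependent, cost-free terms. The fix is to carry out your collection starting from the individual $\utility_{q_i,\langle\cdot,\cdot\rangle}$ contributions in the proofs of the two propositions (or to re-derive the two utilities from the game tree); I have checked that those sums do simplify to precisely the three lines of the corollary, with the $\cost$-free remainder collapsing to $1-\qo-\left(1-\qo-\qt\right)\pta$ because the $\po\pta$ cross terms cancel. So the route is sound, but do not trust the displayed closed forms as your starting point.
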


\begin{restatable}[]{corollary}{corBothIndifferencePo}
    \label{cor:BothIndifferencePo}
    In an equilibrium strategy profile where voters are indifferent between voting early and waiting, we have:
    \begin{equation*}
        \po
        =
        -\frac{
            (\ptp(\qo - 1) - \frac{1}{2}(\pta + \ptf)\qo + 1)\cost + \pta(1 - \qo - \qt) + \qo - 1
        }{
            (\frac{1}{2}(3\pta + \ptf - 2\ptp - 1)\qo - \pta + \ptp)\cost
        }
    \end{equation*}
\end{restatable}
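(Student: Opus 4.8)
The plan is to observe that \cref{cor:BothIndifferencePo} is simply the algebraic solution of the indifference equation for the early-voting probability, once the closed form of the utility gap from \cref{cor:EarlyWaitDiff} is in hand. Indifference between voting early and waiting means exactly that the two expected utilities coincide, so I would start from $\Uearly - \Unotearly = 0$ and substitute the expression established in \cref{cor:EarlyWaitDiff}.

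The key structural point is that this expression is affine in $\po$: it contains a single term proportional to $\cost\po$ together with terms that are free of $\po$. Writing it as $\alpha \cost + \beta \cost \po + \gamma = 0$, I would set $\alpha = \ptp - 1 + (\frac{1}{2}(\pta + \ptf) - \ptp)\qo$ for the $\po$-free multiples of $\cost$, $\beta = -((3\pta + \ptf - 2\ptp - 1)\frac{\qo}{2} + (\ptp - \pta))$ for the coefficient of $\cost\po$, and $\gamma = 1 - \qo - (1 - \qo - \qt)\pta$ for the constant term. Solving this linear equation gives $\po = -(\alpha\cost + \gamma)/(\beta\cost)$.

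The remaining work is purely a matter of rewriting to match the displayed formula. I would rewrite $-\beta = \frac{1}{2}(3\pta + \ptf - 2\ptp - 1)\qo - \pta + \ptp$ to produce the stated denominator $(-\beta)\cost$, and note that $-\alpha = \ptp(\qo - 1) - \frac{1}{2}(\pta + \ptf)\qo + 1$ and $-\gamma = \pta(1 - \qo - \qt) + \qo - 1$, so that $\po = -(\alpha\cost + \gamma)/(\beta\cost) = -(-\alpha\cost - \gamma)/(-\beta\cost)$ recovers the claimed expression verbatim after pulling out the overall sign.

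Since no genuine mathematics beyond backward substitution is involved, the only real obstacle is sign bookkeeping when transposing the $\po$-free terms and dividing through by $\beta\cost$. I would also make explicit the tacit nondegeneracy requirement that $\beta\cost \ne 0$: the closed form is meaningful only on the parameter region where $\cost > 0$ and the coefficient $\beta$ does not vanish, which is precisely the regime where genuinely mixed early/late behavior can occur; when $\beta\cost = 0$ the indifference condition degenerates into an identity or an infeasible constraint and must be handled separately.
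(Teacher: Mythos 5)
Your proposal is correct and matches the paper's own (one-line) proof exactly: set $\Uearly - \Unotearly = 0$ using \cref{cor:EarlyWaitDiff} and solve the resulting linear equation for $\po$; your sign bookkeeping checks out, since negating both numerator and denominator recovers the displayed formula. Your explicit nondegeneracy caveat ($\beta\cost \ne 0$) is a sensible addition that the paper leaves implicit.
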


\subsection{Equilibria Analysis}
We proceed by analyzing the different possible equilibria for our timing game.
For large regimes of our parameter space, we provide a precise characterization.
Several types of equilibria are of particular interest to us:
\begin{itemize}
    \item \citeauthor{messias2023understanding} \cite{messias2023understanding} conduct an empirical analysis of blockchain governance protocols, and identify a potential risk in voters ``mimicking'' their peers by waiting to observe their votes.
          We analyze this case by setting $\qo = 0$, implying that uninformed agents never vote early.
          We characterize all equilibria in \cref{thm:QoZeroStrategiesTurnOne}, with a summary given in \cref{fig:EquilibriaLateBird}.
    \item We continue by characterizing all equilibria where informed voters vote late in \cref{thm:WaitingEquilibria}.
          This is important for governance protocols, as they may extend voting delays as a response to late voting, and this may adversely affect the safety of user funds, if the proposals voted on are related to security issues \cite{chawla2023gauntlet,pereira2023aave,mourya2023aave,young2023aave,normandi2022messari}.
    \item Finally, we prove the existence of equilibria strategies that mix early and late voting in \cref{thm:QoMixed}.
\end{itemize}

We begin by fully characterizing the case where $\qo = 0$ in \cref{prop:QoZeroStrategiesTurnTwo,thm:QoZeroStrategiesTurnOne}, with a schematic illustration of both results provided in \cref{fig:QoZeroStrategies}.
We find that in this case, second-turn strategies reduce to a clean characterization.
\begin{figure}
    \centering
    \definecolor{informed}{RGB}{200,220,230}   % light blue
\definecolor{uninformed}{RGB}{255,228,225} % light pink
\begin{tikzpicture}
    \node[anchor=east] at  (0,2.75) {$\ptf$};
    \draw[fill=uninformed] (0,2.50) rectangle ++(9,0.5) node[pos=0.5] {Indifference};
    
    \node[anchor=east] at (0,2.00) {$\pta$};
    \draw[fill=informed]  (0,1.75) rectangle ++(3,0.5) node[pos=0.5] {1};
    \draw[fill=yellow!15] (3,1.75) rectangle ++(3,0.5) node[pos=0.5] {0};
    \draw[fill=yellow!15] (6,1.75) rectangle ++(3,0.5) node[pos=0.5] {0};

    \node[anchor=east] at (0,1.25) {$\ptp$};
    \draw[fill=informed]  (0,1.00) rectangle ++(3,0.5) node[pos=0.5] {1};
    \draw[fill=informed]  (3,1.00) rectangle ++(3,0.5) node[pos=0.5] {1};
    \draw[fill=yellow!15] (6,1.00) rectangle ++(3,0.5) node[pos=0.5] {0};

    \node[anchor=east] at (0,0.50) {$\po$};
    \draw[fill=informed]  (0,0.25) rectangle ++(3,0.5) node[pos=0.5] {1};
    \draw[fill=informed]  (3,0.25) rectangle ++(3,0.5) node[pos=0.5] {1};
    \draw[fill=informed]  (6,0.25) rectangle ++(3,0.5) node[pos=0.5] {1};
    
    \draw[draw,thick,-to] (-2,0) -- (11,0) node[below,anchor=north] {$\cost$};
    \draw[densely dashed] (0,0) node[below] {0}                   -- ++(0,3.25);
    \draw[densely dashed] (3,0) node[below,yshift=-2pt] {$1-\qt$} -- ++(0,3.25);
    \draw[densely dashed] (6,0) node[below] {$1-\frac{\qt}{2}$}   -- ++(0,3.25);
    \draw[densely dashed] (9,0) node[below] {$1$}                 -- ++(0,3.25);
\end{tikzpicture}
    \caption{
        A schematic depiction of equilibria strategies when $\qo = 0$, as given by \cref{thm:QoZeroStrategiesTurnOne,prop:QoZeroStrategiesTurnTwo}.
    }
    \label{fig:QoZeroStrategies}
\end{figure}
\begin{restatable}[]{proposition}{propQoZeroStrategiesTurnTwo}
        \label{prop:QoZeroStrategiesTurnTwo}
        When $\qo = 0$, second turn strategy profiles are determined solely according to the exogenous parameters $\cost$ and $\vec{\uninformed}$.
        Moreover, we have that voters are indifferent \gls{wrt} $\ptf$, and that:
        \begin{equation*}
            \pta
            =
            \begin{cases}
                1,                   & \cost < 1 - \qt
                \\
                0,                   & \cost > 1 - \qt
                \\
                \text{indifference}, & \cost = 1 - \qt
            \end{cases}
            ,
            \qquad
            \ptp
            =
            \begin{cases}
                1,                   & \cost < 1 - \frac{\qt}{2}
                \\
                0,                   & \cost > 1 - \frac{\qt}{2}
                \\
                \text{indifference}, & \cost = 1 - \frac{\qt}{2}
            \end{cases}
        \end{equation*}
    \end{restatable}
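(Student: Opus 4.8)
The plan is to obtain \cref{prop:QoZeroStrategiesTurnTwo} as the specialization of the three information-set characterizations---\cref{prop:MustardSet} (mustard, $\tallyTuple{0}{0}$), \cref{prop:GreenSet} (green, $\tallyTuple{0}{1}$), and \cref{prop:WhiteSet} (white, $\tallyTuple{1}{0}$)---to the regime $\qo = 0$, verifying in each case that the resulting threshold depends only on the exogenous data $\left(\cost, \vec{\uninformed}\right)$ and not on the endogenous quantities $\po$ or $\ptf$. Since these propositions exhaust the second-turn information sets in which the $\forVote$-type voter has an undominated choice, and since the two remaining second-turn sets ($\tallyTuple{0}{2}$ and $\tallyTuple{1}{1}$) are already prescribed fixed, $\po$-independent strategies by \cref{prop:ZeroTwoSet,prop:OneOneSet}, establishing the three thresholds simultaneously proves that the whole second-turn profile is pinned down by $\cost$ and $\vec{\uninformed}$.

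First I would treat $\ptf$: substituting $\qo = 0$ into \cref{prop:WhiteSet} lands directly in its indifference branch, giving the asserted indifference with respect to $\ptf$. The intuition to record is that the white state with tally $\tallyTuple{1}{0}$ reached by waiting requires the lone $\forVote$-vote to originate from the uninformed voter having voted early; when $\qo = 0$ this is impossible, so the set is off-path and the voter's action there is unconstrained by sequential rationality. Next I would handle $\ptp$ by substituting $\qo = 0$ into the mustard threshold $1 + \frac{\qt}{2\left(\qo - 1\right)}$ of \cref{prop:MustardSet}; the relevant $\qo < 1$ branch applies and the threshold collapses to $1 - \frac{\qt}{2}$. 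As this threshold never involved $\po$ or $\ptf$ to begin with, it is manifestly exogenous and yields the stated case split for $\ptp$.

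The main work---and the only real obstacle---is $\pta$, whose general threshold $1 + \frac{\left(\ptf \qo - 2\qt\right)\po - \ptf \qo}{\left(2 - 3\qo\right)\po + \qo}$ in \cref{prop:GreenSet} ostensibly depends on both $\po$ and $\ptf$. Setting $\qo = 0$ annihilates every term carrying a factor of $\qo$ (in particular all $\ptf$ contributions), leaving $1 + \frac{-2\qt\po}{2\po} = 1 - \qt$ whenever $\po > 0$. To dispatch the degenerate case $\po = 0$, where this quotient is the indeterminate form $\tfrac{0}{0}$ and the green set is itself reached with probability zero, I would re-derive the threshold directly from sequential rationality rather than by a limit. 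With $\qo = 0$ the single $\againstVote$-vote in tally $\tallyTuple{0}{1}$ must come from the informed $\againstVote$-type, and the only remaining uncertainty is whether the uninformed voter turns out in the second turn (and then bandwagons onto the leading $\againstVote$) or abstains; this conditional belief is $\left(\qt, 1-\qt\right)$ irrespective of $\po$. Comparing the expected utility of voting, $\qt\left(-1-\cost\right) + \left(1-\qt\right)\left(-\cost\right) = -\qt - \cost$, against that of abstaining, $-1$, shows voting is optimal exactly when $\cost < 1 - \qt$.

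Thus the green threshold equals $1 - \qt$ for every value of $\po$, which both confirms the stated case split for $\pta$ and certifies its independence from the endogenous parameters. Assembling the three pieces then gives the full second-turn profile as a function of $\cost$ and $\vec{\uninformed}$ alone, as claimed. The only delicate point to flag in the writeup is the $\po = 0$ boundary of the green set, where I rely on the belief over the two constituent states being $\po$-independent so that the direct utility comparison---rather than a limit of the general formula---delivers the threshold.
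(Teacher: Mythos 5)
Your proposal is correct and follows essentially the same route as the paper: both obtain the result by substituting $\qo = 0$ into \cref{prop:MustardSet,prop:GreenSet,prop:WhiteSet}, with the paper likewise observing that the white set becomes unreachable when $\qo = 0$, which forces indifference with respect to $\ptf$. Your extra care at the $\po = 0$ boundary of the green set---where the general threshold formula degenerates to $\tfrac{0}{0}$ and you re-derive the $1-\qt$ threshold directly from the $\left(\qt, 1-\qt\right)$ belief---is a refinement the paper's proof silently skips over.
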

\begin{restatable}[]{theorem}{thmQoZeroStrategiesTurnOne}
    \label{thm:QoZeroStrategiesTurnOne}
    When $\qo = 0$, $\qt > 0$ and for $\cost \in \left[0,1-\qt\right)$, the equilibrium strategy profile is: $\po = 1$, $\ptp = 1$, and $\pta = 1$, while for $\cost \in \left(1-\frac{\qt}{2},1\right]$ we have: $\po = 1$, $\ptp = 0$, $\pta = 0$.
    In both cases, voters are indifferent \gls{wrt} $\po$.
    When $\qo = 0$, $\qt = 0$ and for $\cost = 1-\qt$, we have that $\po = 1$ and voters are indifferent \gls{wrt} $\pta$, while for $\cost = 1-\frac{\qt}{2}$, we again have $\po = 1$ and that voters are indifferent \gls{wrt} $\ptp$.
    In all cases, voters are indifferent \gls{wrt} $\ptf$.
\end{restatable}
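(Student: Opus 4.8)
The plan is to determine the first-turn strategy $\po$ by backward induction: since the second-turn behavior is already fixed by \cref{prop:QoZeroStrategiesTurnTwo}, it remains only to compare the utility of voting early, $\Uearly$, with the utility of waiting, $\Unotearly$, and to verify that the resulting choice is sequentially rational. For $\qo = 0$, \cref{prop:QoZeroStrategiesTurnTwo} fixes $\pta = \ptp = 1$ on $\cost \in [0,1-\qt)$ and $\pta = \ptp = 0$ on $\cost \in (1-\frac{\qt}{2},1]$, with $\ptf$ an indifference throughout, so the only unknown left is $\po$.

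First I would specialize \cref{cor:EarlyWaitDiff} to $\qo = 0$, where it collapses to $\Uearly - \Unotearly = (\ptp-1)\cost - (\ptp-\pta)\cost\po + 1 - (1-\qt)\pta$. The structural observation that drives everything is that $\po$ enters only through the coefficient $(\ptp - \pta)$ of the $\cost\po$ term, so as soon as $\ptp = \pta$ the comparison no longer depends on $\po$. Substituting the two regimes gives $\Uearly - \Unotearly = \qt$ on $[0,1-\qt)$ and $\Uearly - \Unotearly = 1 - \cost$ on $(1-\frac{\qt}{2},1]$. Both are strictly positive when $\qt > 0$ and $\cost < 1$, so voting early strictly dominates waiting and sequential rationality forces $\po = 1$; the fact that the gap does not move with $\po$ is precisely the stated indifference with respect to $\po$, and the degenerate case $\qt = 0$ is where the gap $\qt$ vanishes and the voter becomes genuinely indifferent (with $\qo = \qt = 0$ the uninformed never votes, so early voting cannot influence anyone). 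The same degeneracy is visible in \cref{cor:BothIndifferencePo}: at $\qo = 0$ with $\ptp = \pta$ its denominator vanishes, so no interior $\po$ equalizes the two utilities and the equilibrium must sit at the corner $\po = 1$.

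Next I would handle the two boundary cost levels left uncovered by the open intervals. At $\cost = 1-\qt$ the probability $\pta$ becomes an indifference while $\ptp = 1$, and at $\cost = 1-\frac{\qt}{2}$ it is $\ptp$ that is indifferent while $\pta = 0$. Plugging the free variable into the collapsed difference, the $\cost\po$ coefficient no longer vanishes, but the expression is linear and non-increasing in $\po$, so its minimum over $\po \in [0,1]$ is attained at $\po = 1$, where it equals $\qt$ and $\frac{\qt}{2}$ respectively. Both values are strictly positive and, crucially, independent of the value assigned to the free second-turn variable, so $\po = 1$ remains the unique best response while $\pta$ (respectively $\ptp$) stays free --- exactly the indifferences claimed. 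The indifference in $\ptf$ persists in every case, in line with \cref{prop:WhiteSet} at $\qo = 0$.

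Finally I would close the \gls{WPBE} argument. Because $\po = 1$ sends both informed voters to the ballot in the first turn, every second-turn information set reached only by waiting is off the equilibrium path; here I would invoke \cref{prop:QoZeroStrategiesTurnTwo}, which shows that the prescribed second-turn strategies are optimal from the exogenous parameters alone, so they are sequentially rational for the associated beliefs and no belief-consistency requirement is violated. I expect the main obstacle to be bookkeeping rather than conceptual difficulty: one must track carefully which second-turn variable is free on each boundary and verify that the utility gap is simultaneously sign-definite and independent of that free variable (and of $\po$) at $\po = 1$, since it is exactly the vanishing of the $\cost\po$ coefficient that reconciles the two a-priori competing assertions ``$\po = 1$'' and ``indifference with respect to $\po$''.
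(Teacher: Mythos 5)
Your proposal is correct and takes essentially the same route as the paper: fix the second-turn strategies via \cref{prop:QoZeroStrategiesTurnTwo}, substitute $\qo = 0$ into \cref{cor:EarlyWaitDiff} to obtain the gap $\left(\ptp-1\right)\cost - \left(\ptp-\pta\right)\cost\po + 1 - \left(1-\qt\right)\pta$, and read off the values $\qt$ and $1-\cost$ in the two cost regimes to force $\po = 1$. The only (cosmetic) divergences are that at the boundary costs the paper solves \cref{cor:BothIndifferencePo} for the indifference value of $\po$ and checks its feasibility rather than minimizing the gap over $\po$ as you do, and that your remark that the boundary gaps $\qt$ and $\frac{\qt}{2}$ are ``strictly positive'' applies only to $\qt>0$, whereas the theorem's boundary claims concern $\qt=0$, where the gap vanishes and $\po=1$ is sustained by indifference (as you yourself note earlier).
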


When $\qo > 0$, the relationship between $\pta$ and $\ptf$ becomes more intricate and requires delicate analysis.
In \cref{thm:WaitingEquilibria}, we prove the existence of equilibria that admit strategy profiles where waiting dominates voting early, which we refer to as ``late-bloomer'' equilibria.
Our proof is given in \cref{sec:Proofs}.
It is constructive, and proceeds by increasingly refining our parameter space until obtaining a characterization of the parameter regimes that correspond to the examined type of equilibria.
In \cref{fig:EquilibriaLateBird}, we present a summary of our characterization.
\begin{figure}
    \centering
    \includegraphics[width=\columnwidth]{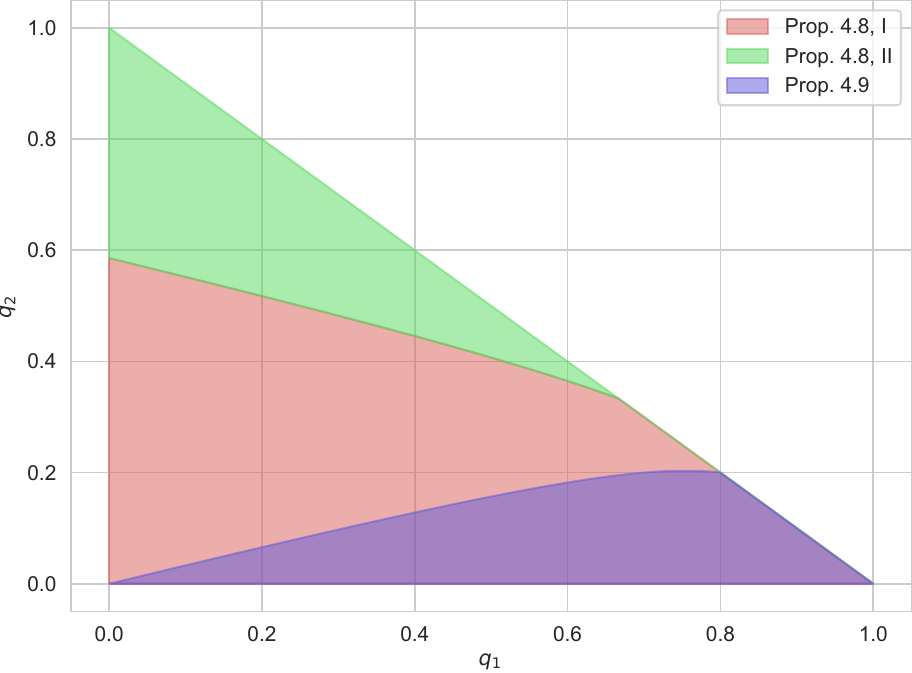}
    \caption{
        Characterization of parameter regimes where waiting dominates early voting.
        In the proof of \cref{thm:WaitingEquilibria}, we show that the cost threshold that admits such equilibria is lower when uninformed voters have a high probability $\qo$ of voting early (\cref{prop:WaitingPtpOne}), while costs are higher when this probability is low (\cref{prop:WaitingPtpZero}, which characterizes two complementary parameter regimes).
        Intuitively, this is due to the inability of informed voters to influence the uninformed decision.
        In particular, for the lower cost regime of \cref{prop:WaitingPtpOne}, we have that $\ptp = 1$, and for the higher cost regime of \cref{prop:WaitingPtpZero} has $\ptp = 0$.
        Via numeric evaluations, we find that the average threshold cost for the former is $0.57$, and for the latter is $0.76$.
    }
    \label{fig:EquilibriaLateBird}
\end{figure}

\begin{restatable}[]{theorem}{thmWaitingEquilibria}
    \label{thm:WaitingEquilibria}
    If $\qo > 0$, there are feasible parameter regimes where waiting dominates voting early.
    Moreover, equilibria strategy profiles have:
    $\po = 0$, $\pta = \cost$ and $\ptf = 1 - \cost$, while the value of $\ptp$ varies according to the parameter regime, where $\ptp = 0$ in the high cost regime (\cref{prop:WaitingPtpZero}), and $\ptp = 1$ in the low cost regime (\cref{prop:WaitingPtpOne}).
\end{restatable}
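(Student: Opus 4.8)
The plan is to construct the asserted ``late-bloomer'' equilibrium by imposing $\po = 0$ (by symmetry across types, both informed voters wait in the first turn), pinning down the induced second-turn continuation from the information-set propositions of \cref{sec:InfoSet}, and then verifying via \cref{cor:EarlyWaitDiff} that waiting is \emph{strictly} preferred to early voting, i.e. $\Uearly - \Unotearly < 0$, over a non-empty parameter region. Because the solution concept is \gls{WPBE} and we argue by backward induction, I would first fix the play in the second-turn sets, and only afterwards check first-turn optimality, closing by exhibiting an admissible region of $(\qo,\qt,\cost)$ in which every imposed condition holds at once.

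For the second stage, substituting $\po = 0$ into the green-set threshold of \cref{prop:GreenSet} collapses it to $\cost = 1 - \ptf$, while the white-set threshold of \cref{prop:WhiteSet} reads $\cost = \pta$. For $\cost \in (0,1)$ these two indifference conditions can hold simultaneously only at the interior profile $\pta = \cost$, $\ptf = 1-\cost$, which is exactly the mixed fixed point claimed in the statement; I would argue this is the unique mutually-indifferent (hence the only mixed–mixed) equilibrium of the green/white subgame, since any departure from $\ptf = 1-\cost$ pushes $\pta$ to a pure value by \cref{prop:GreenSet}, which then breaks white-set indifference. The mustard strategy $\ptp$ is read off \cref{prop:MustardSet}: for $\qo > 0$ its threshold is $1 - \tfrac{\qt}{2(1-\qo)} < 1$, so $\ptp = 1$ below it (the low-cost regime, \cref{prop:WaitingPtpOne}) and $\ptp = 0$ above it (the high-cost regime, \cref{prop:WaitingPtpZero}), matching the two cases.

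It then remains to verify first-turn optimality. Plugging $\po = 0$, $\pta = \cost$, $\ptf = 1-\cost$ into \cref{cor:EarlyWaitDiff} annihilates the $\po$-proportional term and, using $\tfrac12(\pta+\ptf) = \tfrac12$, reduces the difference to $\bigl[\ptp(1-\qo) - 2 + \tfrac{3\qo}{2} + \qt\bigr]\cost + 1 - \qo$. Setting $\ptp = 1$ makes this negative precisely when $\cost > \tfrac{1-\qo}{1 - \qo/2 - \qt}$, while $\ptp = 0$ gives negativity when $\cost > \tfrac{1-\qo}{2 - 3\qo/2 - \qt}$; in both cases the bracketed coefficient of $\cost$ is negative (using $\qo>0$ and $\qo+\qt \le 1$), so each condition is a clean lower bound on $\cost$. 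Intersecting the low-cost bound with $\cost < 1 - \tfrac{\qt}{2(1-\qo)}$ and $\cost \in [0,1)$ gives the $\ptp=1$ regime, and a concrete witness such as $\qo = \tfrac12$, $\qt = \tfrac{1}{10}$ (where the window is $\cost \in (\tfrac{0.5}{0.65},\,0.9)$) shows it is non-empty, which already establishes the existence claim; the high-cost regime is handled symmetrically, intersecting with $\cost > 1 - \tfrac{\qt}{2(1-\qo)}$.

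I expect the main obstacle to be the feasibility bookkeeping rather than any individual inequality: one must confirm that the lower bound on $\cost$ forcing $\Uearly - \Unotearly < 0$ actually lies below (for $\ptp=1$) or above (for $\ptp=0$) the mustard threshold on a genuinely non-empty set of $(\qo,\qt)$, and that $\pta = \cost$ and $\ptf = 1-\cost$ stay in $(0,1)$ throughout. The high-cost case is the more delicate one, because the comparison between $\tfrac{1-\qo}{2 - 3\qo/2 - \qt}$ and $1 - \tfrac{\qt}{2(1-\qo)}$ can go either way; this is precisely why \cref{prop:WaitingPtpZero} must split into two complementary parameter regimes, and checking that their union is non-degenerate while no first-turn deviation survives at the boundaries is where the care is needed.
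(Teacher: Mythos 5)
Your proposal is correct and follows essentially the same route as the paper: derive $\pta=\cost$, $\ptf=1-\cost$ from simultaneous green/white-set indifference at $\po=0$, reduce \cref{cor:EarlyWaitDiff} to $\left(\ptp(1-\qo)-2+1.5\qo+\qt\right)\cost+1-\qo<0$, and split on the mustard threshold for $\ptp$. The paper reaches the same second-stage fixed point by an exhaustive nine-case elimination over $(\pta,\ptf)$ (\cref{prop:WaitingIndifference}) and then carries out in full the feasibility characterization you defer to the bookkeeping step, including the two-subregime split for $\ptp=0$ that you correctly anticipate.
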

\begin{restatable}[]{proposition}{propWaitingPtpZero}
    \label{prop:WaitingPtpZero}
    In equilibria strategy profiles where waiting dominates early voting, if $\qo > 0$ and $\ptp = 0$, then either $\qo \in \left(0, \frac{3-3.5\qt - \sqrt{8.25\qt^2 -5\qt + 1}}{2}\right)$,
    $\qt \in [0, 2 - \sqrt{2})$, $\cost \in \left[\frac{2\qo + \qt - 2}{2 \left(\qo - 1\right)}, 1\right)$, or $\qo \in \left[\frac{3-3.5\qt - \sqrt{8.25\qt^2 -5\qt + 1}}{2}, \frac{3-3.5\qt + \sqrt{8.25\qt^2 -5\qt + 1}}{2}\right]$, $\qt \in \left[0,1\right]$, $\cost \in \left[\frac{\qo - 1}{1.5\qo + \qt - 2}, 1\right)$.
\end{restatable}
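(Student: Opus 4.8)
The plan is to take the strategy profile already posited in \cref{thm:WaitingEquilibria}, namely $\po = 0$, $\pta = \cost$, and $\ptf = 1 - \cost$, adjoin the hypothesis $\ptp = 0$, and determine exactly which parameter triples $(\qo, \qt, \cost)$ make this a \gls{WPBE}. First I would confirm that the second-turn play is internally consistent: since $\qo > 0$, substituting $\po = 0$ into the Green-set threshold of \cref{prop:GreenSet} collapses it to $1 - \ptf$, so indifference there forces $\ptf = 1 - \cost$; and the White-set indifference condition of \cref{prop:WhiteSet} is precisely $\cost = \pta$, giving $\pta = \cost$. These two indifference conditions are mutually consistent and reproduce the theorem's second-turn values, so the only remaining freedom is carried by the two assumptions $\ptp = 0$ and $\po = 0$.

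Next I would convert each of those assumptions into a constraint on $\cost$. The hypothesis $\ptp = 0$ invokes \cref{prop:MustardSet}: as $\qo < 1$, it requires $\cost \ge 1 + \frac{\qt}{2(\qo - 1)} = \frac{2\qo + \qt - 2}{2(\qo - 1)}$ (with indifference, hence admissibility of $\ptp = 0$, at equality). For $\po = 0$ to be sequentially rational, waiting must weakly dominate voting early, i.e. $\Uearly - \Unotearly \le 0$. Substituting $\po = 0$, $\ptp = 0$, $\pta = \cost$, $\ptf = 1 - \cost$ into \cref{cor:EarlyWaitDiff} makes the $\po$-proportional term vanish and uses $\tfrac12(\pta + \ptf) = \tfrac12$, leaving $\Uearly - \Unotearly = (1.5\qo + \qt - 2)\cost + 1 - \qo$. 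Because $\qo + \qt \le 1$ forces the coefficient $1.5\qo + \qt - 2 < 0$, the inequality $\le 0$ rearranges (flipping the sign) to $\cost \ge \frac{\qo - 1}{1.5\qo + \qt - 2}$.

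Both assumptions thus produce lower bounds on $\cost$, and the binding one is their maximum; together with the model constraint $\cost < 1$ this yields an interval of the form $[\,\cdot\,, 1)$. To split into the two regimes of the statement I would compare the two bounds directly. Setting $\frac{2\qo + \qt - 2}{2(\qo-1)} = \frac{\qo - 1}{1.5\qo + \qt - 2}$ and clearing the nonzero denominators gives, after simplification, the quadratic $\qo^2 + (3.5\qt - 3)\qo + (\qt^2 - 4\qt + 2) = 0$, whose discriminant is exactly $8.25\qt^2 - 5\qt + 1$ and whose roots are $\qo_\pm = \frac{3 - 3.5\qt \pm \sqrt{8.25\qt^2 - 5\qt + 1}}{2}$. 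A sign analysis then shows that the Mustard bound $\frac{2\qo+\qt-2}{2(\qo-1)}$ dominates for $\qo \in (0, \qo_-)$, giving the first regime, while the utility bound $\frac{\qo-1}{1.5\qo+\qt-2}$ dominates for $\qo \in [\qo_-, \qo_+]$, giving the second. The side condition $\qt < 2 - \sqrt 2$ appearing in the first regime is precisely the requirement $\qo_- > 0$ (equivalently $\qt^2 - 4\qt + 2 > 0$), which guarantees that the interval $(0, \qo_-)$ is nonempty.

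The main obstacle I anticipate is the sign bookkeeping in this last comparison: both the numerators and denominators of the two rival bounds change sign near the feasibility boundary, so I must carefully track which bound is larger as $\qo$ sweeps its range, divide by negative quantities in the correct direction, and intersect everything with the physical constraint $\qo + \qt \le 1$ (which can, for instance, truncate the nominal interval $[\qo_-, \qo_+]$). The remaining checks — that $\pta = \cost$ and $\ptf = 1 - \cost$ lie in $[0,1]$, that each lower bound is itself below $1$ so the $\cost$-interval is nonempty (handling the degenerate $\qt = 0$ case, where it collapses), and that the belief system is Bayes-consistent on every information set reached with positive probability — are routine, and I would dispatch them briefly.
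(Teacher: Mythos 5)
Your proposal is correct and follows essentially the same route as the paper's proof: the Mustard-set condition from \cref{prop:MustardSet} and the waiting-dominance inequality (after substituting $\pta = \cost$, $\ptf = 1-\cost$, $\po = 0$) yield the two rival lower bounds on $\cost$, and the regime split comes from exactly the quadratic $\qo^2 + (3.5\qt-3)\qo + \qt^2 - 4\qt + 2$ with discriminant $8.25\qt^2 - 5\qt + 1$, with the branch $\qo \in \left(\frac{3-3.5\qt+\sqrt{8.25\qt^2-5\qt+1}}{2},1\right)$ discarded by the constraint $\qo + \qt \le 1$, as you anticipate. The only caveat is that your sign analysis should note the Mustard bound dominates on both components where the quadratic is positive, not just $\left(0,\frac{3-3.5\qt-\sqrt{8.25\qt^2-5\qt+1}}{2}\right)$; the paper handles this by showing the upper component is infeasible, which your final intersection step would also catch.
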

\begin{restatable}[]{proposition}{propWaitingPtpOne}
    \label{prop:WaitingPtpOne}
    In equilibria strategy profiles where waiting dominates early voting, if $\qo > 0$ and $\ptp = 1$, then $\qo \in \left(\frac{5\qt + 2 - \sqrt{41\qt^2 - 28\qt + 4}}{4}, \frac{5\qt + 2 + \sqrt{41\qt^2 - 28\qt + 4}}{4}\right)$, $\qt \in \left[0, \frac{14 - 4\sqrt{2}}{41}\right)$, $\cost \in \left(\frac{1 - \qo}{1 - \frac{\qo}{2} - \qt}, \frac{2\qo + \qt - 2}{2\left(\qo-1\right)}\right)$.
\end{restatable}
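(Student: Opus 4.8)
The plan is to specialize the structure already pinned down for late-bloomer equilibria and then read off the feasibility constraints algebraically. By \cref{thm:WaitingEquilibria} I may assume throughout that $\po = 0$, $\pta = \cost$ and $\ptf = 1 - \cost$; these last two identities are exactly the indifference conditions of \cref{prop:GreenSet,prop:WhiteSet} evaluated at $\po = 0$ with $\qo > 0$ (the Green threshold reduces to $\cost = 1 - \ptf$, and the White threshold to $\cost = \pta$, which are simultaneously solved by $\pta = \cost$, $\ptf = 1 - \cost$). So I would open the proof by restating this, after which the task reduces to determining exactly when the extra requirement $\ptp = 1$ is compatible with waiting \emph{strictly} dominating early voting.

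First I would extract the upper bound on $\cost$ directly from \cref{prop:MustardSet}: having $\ptp = 1$ requires $\cost < 1 + \frac{\qt}{2(\qo - 1)}$, and simplifying the right-hand side to $\frac{2\qo + \qt - 2}{2(\qo - 1)}$ recovers the stated upper endpoint. For the lower bound I would substitute $\po = 0$, $\ptp = 1$, $\pta = \cost$, $\ptf = 1 - \cost$ into \cref{cor:EarlyWaitDiff}; the entire $\cost\po$-term vanishes and $\frac{1}{2}(\pta + \ptf) = \frac{1}{2}$, so the difference collapses to
\[
\Uearly - \Unotearly = \left(\frac{\qo}{2} + \qt - 1\right)\cost + 1 - \qo.
\]
Waiting dominates precisely when this is negative; since $\frac{\qo}{2} + \qt - 1 < 0$ on the relevant region, dividing flips the inequality and yields $\cost > \frac{1 - \qo}{1 - \frac{\qo}{2} - \qt}$, the stated lower endpoint. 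I would flag the sign of $\frac{\qo}{2} + \qt - 1$ (equivalently of the denominator $1 - \frac{\qo}{2} - \qt$) as something to verify explicitly on the parameter region rather than merely assume.

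It then remains to determine when the interval $\left(\frac{1-\qo}{1 - \frac{\qo}{2} - \qt},\ \frac{2\qo + \qt - 2}{2(\qo-1)}\right)$ for $\cost$ is nonempty. Clearing the (sign-checked) denominators in $L < U$ reduces this to the quadratic inequality $2\qo^2 - (2 + 5\qt)\qo + (6\qt - 2\qt^2) < 0$ in $\qo$, whose discriminant is $41\qt^2 - 28\qt + 4$; the solution is the open interval between the roots $\frac{5\qt + 2 \pm \sqrt{41\qt^2 - 28\qt + 4}}{4}$, matching the claim exactly. Requiring a genuine (nonempty) interval forces the discriminant to be positive, i.e. $\qt < \frac{14 - 4\sqrt{2}}{41}$ or $\qt > \frac{14 + 4\sqrt{2}}{41}$.

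The main obstacle is ruling out the spurious high-$\qt$ branch while keeping the sign bookkeeping honest. I would dispose of $\qt > \frac{14 + 4\sqrt{2}}{41}$ by showing the associated $\qo$-interval lies entirely above $1$ (already at the threshold the double root $\frac{5\qt + 2}{4}$ exceeds $1$), hence is infeasible since $\qo < 1$; this leaves only $\qt \in \left[0, \frac{14 - 4\sqrt{2}}{41}\right)$, as stated. Along the way I would track the basic feasibility constraints ($\qo \in (0,1)$, $\cost \in [0,1)$, and $\pta = \cost$, $\ptf = 1-\cost \in [0,1]$) to confirm the endpoints are internally consistent, and I would verify that the substituted second-turn strategies genuinely form a \gls{WPBE}—it is precisely the indifference conditions of \cref{prop:GreenSet,prop:WhiteSet} that make the mixed values $\pta = \cost$ and $\ptf = 1 - \cost$ self-consistent best responses.
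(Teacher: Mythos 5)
Your overall route matches the paper's: the upper endpoint for $\cost$ comes from \cref{prop:MustardSet} with $\ptp=1$, the lower endpoint from substituting $\po=0$, $\ptp=1$, $\pta=\cost$, $\ptf=1-\cost$ into \cref{cor:EarlyWaitDiff} (your collapsed expression $\left(\frac{\qo}{2}+\qt-1\right)\cost+1-\qo$ is exactly the paper's), and nonemptiness of the cost interval reduces to the quadratic in $\qo$ with discriminant $41\qt^2-28\qt+4$ and the stated roots. All of that checks out, including the sign of $1-\frac{\qo}{2}-\qt$.

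The gap is in your disposal of the high-$\qt$ branch. You claim the $\qo$-interval between the roots lies entirely above $1$ for $\qt>\frac{14+4\sqrt{2}}{41}$, but this is false once $\qt>\frac{1}{2}$: the lower root $\frac{5\qt+2-\sqrt{41\qt^2-28\qt+4}}{4}$ equals $1$ exactly at $\qt=\frac{1}{2}$ and drops below $1$ thereafter (at $\qt=0.6$ it is $0.9$; at $\qt=1$ it is $\frac{7-\sqrt{17}}{4}\approx 0.72$), so the interval then contains admissible values of $\qo<1$ and your argument does not eliminate it. The constraint that actually kills this branch is $\qo+\qt\le 1$: one checks that for $\qt>\frac{1}{5}$ the lower root exceeds $1-\qt$ (squaring $9\qt-2>\sqrt{41\qt^2-28\qt+4}$ gives $40\qt^2-8\qt>0$), so the whole interval violates $\qo\le 1-\qt$ throughout the high branch. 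This is precisely the step the paper handles by first intersecting with $[0,1]$ (its \cref{prop:WaitingPtpOneNonEmpty}, which still leaves $\qt\in\left(\frac{1}{2},1\right]$ alive) and only then imposing $\qo+\qt\le 1$ to arrive at $\qt\in\left[0,\frac{14-4\sqrt{2}}{41}\right)$. Replace your ``interval above $1$'' argument with the $\qo\le 1-\qt$ comparison and the proof closes.
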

Intuitively, the above results imply the following:
\begin{itemize}
    \item In \cref{prop:WaitingPtpZero}, we characterize strategy profiles for the ``high cost regime'', i.e., when $\cost$ is relatively large.
          We show that in such cases, voters tend to leave the outcome to chance due to the high voting costs, which harm the utility they may gain by voting.
          Thus, informed voters do not vote in the first turn, and if they observe that noone else has voted, they abstain altogether.
          If there is a single vote against their preference, they vote with a high probability; in this case, the opposing vote presents itself as an immediate threat, and countering it by voting is worth the cost.
          Via numerical simulations, we find that the average cost threshold for this regime is $0.76$.
    \item For lower costs, we characterize corresponding strategy profiles in \cref{prop:WaitingPtpOne}, and show that the lower costs lead voters to slightly change their behavior.
          Specifically, they always vote upon reaching the so-called ``mustard'' information set, which has a tally of $\tallyTuple{0}{0}$.
          Our characterization implies that such equilibria correspond to parameter regimes where voting costs are lower and $\qo$ is significantly larger than $\qt$.
          Thus, upon reaching the second turn and observing that no votes were cast, it is likely that the uninformed agent would abstain.
          By using numerical simulations, we reach an average cost threshold of $0.57$ for this case.
\end{itemize}

Finally, we show in \cref{thm:QoMixed} that there are equilibria in which voters mix between voting early and late.
The proof is given in \cref{sec:Proofs}.
In it, we use second-stage equilibria strategies to gradually confine $\po$ into a restricted parameter regime.
Then, we culminate by showing that the resulting parameter ranges are non-empty.
\begin{restatable}[]{theorem}{thmQoMixed}
    \label{thm:QoMixed}
    When $\qo > 0$, there are equilibria strategy profiles where $\qo \in \left(0,1\right)$.
\end{restatable}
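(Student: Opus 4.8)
The plan is to read the statement as the assertion that there exist \gls{WPBE} in which the informed $\forVote$-type voter genuinely mixes between voting early and waiting, i.e.\ $\po \in (0,1)$. Such mixing is sequentially rational only if the voter is exactly indifferent between the two first-turn actions, so the backbone of the argument is \cref{cor:BothIndifferencePo}, which already expresses $\po$ as a ratio in the exogenous parameters $\qo,\qt,\cost$ and the second-turn probabilities $\ptp,\pta,\ptf$. The whole task therefore reduces to pinning down a self-consistent second-turn continuation for which this ratio lands strictly inside $(0,1)$.

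First I would fix the second-turn continuation. To keep $\pta$ and $\ptf$ interior — which is what makes the expression for $\po$ non-degenerate — I impose indifference in the green and white information sets: by \cref{prop:WhiteSet}, indifference in the white set forces $\pta = \cost$, and by \cref{prop:GreenSet}, indifference in the green set is the threshold equation $\cost = 1 + \frac{(\ptf\qo - 2\qt)\po - \ptf\qo}{(2-3\qo)\po + \qo}$. Substituting $\pta = \cost$ and solving for $\ptf$, I would isolate the key simplification: the dependence on $\po$ cancels exactly when the parameters satisfy $\cost = 1 - \frac{\qt}{1-\qo}$ (equivalently $\qt = (1-\cost)(1-\qo)$), and on this slice the green-set condition collapses to $\ptf = 1-\cost$ while holding identically for every $\po$. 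This decoupling is what removes the circularity between the first- and second-turn indifference conditions. Since $\cost = 1 - \frac{\qt}{1-\qo} < 1 - \frac{\qt}{2(1-\qo)}$ whenever $\qt > 0$, \cref{prop:MustardSet} then forces $\ptp = 1$, so the entire continuation is determined.

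With $\ptp = 1$, $\pta = \cost$, $\ptf = 1-\cost$ and $\qt = (1-\cost)(1-\qo)$ in hand, I would substitute into \cref{cor:BothIndifferencePo}; the ratio simplifies to $\po = \frac{1+\cost}{\cost} - \frac{\qo}{2(1-\cost)(1-\qo)}$. The remaining work is to carve out the parameter region where $\po \in (0,1)$: the two inequalities reduce to the band $2(1-\cost)(1-\qo) < \qo\cost < 2(1-\cost^2)(1-\qo)$, whose right endpoint strictly exceeds its left for $\cost>0$, so the band cuts out a non-empty open region of $(\qo,\cost)$. I would close the argument by exhibiting an explicit admissible point — e.g.\ $\qo = \tfrac14$, $\qt = \tfrac{3}{40}$, $\cost = \tfrac{9}{10}$, giving $\pta = \tfrac{9}{10}$, $\ptf = \tfrac{1}{10}$, $\ptp = 1$ and $\po \approx 0.44 \in (0,1)$ — and checking that all four probabilities lie in $[0,1]$ and that the dominated-strategy and action-free sets of \cref{prop:ZeroTwoSet,prop:OneOneSet} impose no further constraint, so the profile is a genuine \gls{WPBE}.

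The step I expect to be the main obstacle is precisely this decoupling. A priori the first-turn indifference fixes $\po$ while the green-set indifference condition itself depends on $\po$, so the two requirements form a simultaneous (fixed-point) system rather than a clean backward induction. The crux is recognizing the slice $\cost = 1 - \frac{\qt}{1-\qo}$ on which the $\po$-terms cancel, verifying that this cancellation is exactly the condition $\qt = (1-\cost)(1-\qo)$, and confirming that it still leaves an open two-parameter family of valid parameters; this is where the multilinear algebra in \cref{cor:BothIndifferencePo,prop:GreenSet} must be handled carefully.
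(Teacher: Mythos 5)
Your argument is correct, and it reaches the paper's conclusion by a genuinely different route. Both proofs share the same skeleton: white-set indifference forces $\pta = \cost$ (\cref{prop:WhiteSet}), green-set indifference pins down $\ptf$ (\cref{prop:GreenSet}), and \cref{cor:BothIndifferencePo} then yields the mixing probability $\po$, finished off with an explicit numerical witness. The difference is in how the circularity between the first-turn and green-set conditions is resolved. The paper keeps $\ptf$ as a $\po$-dependent expression, solves for $\po$ as a function of $\ptp$ and the exogenous parameters, and then intersects the feasibility ranges $\ptf \in [0,1]$ and $\po \in (0,1)$, treating $\ptp$ essentially as a free parameter. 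You instead restrict to the slice $\qt = (1-\cost)(1-\qo)$, on which the $\po$-dependence in the green-set equation cancels identically and $\ptf = 1-\cost$ drops out; I verified the cancellation and the resulting closed form $\po = \frac{1+\cost}{\cost} - \frac{\qo}{2(1-\cost)(1-\qo)}$, the band $2(1-\cost)(1-\qo) < \qo\cost < 2(1-\cost^2)(1-\qo)$, and your witness $\qo=\tfrac14$, $\qt=\tfrac{3}{40}$, $\cost=\tfrac{9}{10}$, which gives $\po=\tfrac49$. What your approach buys is twofold: the decoupling removes the fixed-point structure entirely, and because your slice sits strictly below the mustard threshold of \cref{prop:MustardSet}, the value $\ptp=1$ is forced rather than assumed, so the whole continuation is consistent with the second-stage propositions by construction. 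This is actually tighter than the paper's treatment, whose witness assigns an interior value to $\ptp$ even though \cref{prop:MustardSet} permits interior $\ptp$ only at the exact mustard cost threshold; your construction avoids that tension. The one cost of your route is generality --- you exhibit equilibria only on a codimension-one slice of the parameter space --- but since the theorem is purely an existence claim, that is immaterial.
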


\section{Conclusion}
\label{sec:Conclusion}
We propose a novel model of strategic voting timing which differs from prior works in several important aspects.
First, we introduce ``informed'' voters who have a strict preference over the binary outcome, and ``uninformed'' voters who can be swayed by the interim public tally at the time of voting.
Second, we study the effect of voting costs on voters' equilibrium strategies, and how different parameter regimes may incentivize them to either vote early, vote late, or abstain.
Our theoretical results show that informed voters will choose a certain equilibrium path of voting or mix between voting early and late as a function of the cost of voting, and the distribution of the time at which uninformed voters arrive to the ballot.
This could be of interest and practical relevance to both voters and operators of such elections.

\section*{Acknowledgement}
The research leading to this paper was funded by the Austrian Research Promotion Agency (FFG), the Austrian Security Research Programme (KIRAS), and the Austrian Blockchain Center (ABC).
The first author is grateful for having received travel grants from the Austria-Israel Academic Network Innsbruck (AIANI) and from the Hebrew University's Unit for Diversity \& Inclusion, which made this collaboration possible.

% \newpage
\printbibliography
\newpage

\appendix

\section{Proofs}
\label[appendix]{sec:Proofs}
This section contains all proofs that were omitted from the body of the paper due to space constraints.

\propMustardSet*
\begin{proof}
    We proceed by examining the set's states, the beliefs of agents given that a state belonging to this set is reached, and finally with an analysis of equilibrium strategies.

    \paragraphNoSkip{States}
    This set contains states where the interim tally after the first turn is $0$ for both alternatives, implying that no-one voted early, and that either the uninformed voter will vote in the second round, or will abstain.

    \paragraphNoSkip{Beliefs}
    Game states corresponding to this information set are reachable only if both the uninformed and informed voters do not always vote early, equivalent in our notations to $\qo < 1$ and $\po < 1$.
    In particular, the set contains a single state if $\qt = 1$ or $\qo = \qt = 0$; otherwise, it contains two states.
    Given informed voters reach this set, they are in the state that corresponds to the uninformed voter voting late with probability
    \begin{equation}
        P(\text{uninformed votes late} \mid \text{informed in mustard set})
        =
        \frac{\qt}{\qt + \left( 1 - \qo - \qt \right)}
        =
        \frac{\qt}{1 - \qo},
    \end{equation}
    and, similarly, are in the state corresponding to the uninformed voter abstaining with probability
    \begin{equation}
        P(\text{uninformed abstains} \mid \text{informed in mustard set})
        =
        \frac{1-\qo-\qt}{1 - \qo}
        =
        1 - \frac{\qt}{1 - \qo}.
    \end{equation}

    \paragraphNoSkip{Equilibrium analysis}
    Recall that this information set contains at most two states, and that under our notations, the probability assigned to voting in this set is denoted by $\ptp$.
    In the state that corresponds to the uninformed voter abstaining, the game becomes a prisoner's dilemma:
    \begin{itemize}[leftmargin=*]
        \item First, we examine the cases where the $\forVote$-type voter votes.
              With probability $\ptp$, the other informed voter also votes, thus we have a tie and a winner is chosen at random: $\forVote$ wins with probability $\frac{1}{2}$ (giving a utility of $1-\cost$), and $\againstVote$ wins with the same probability (netting a utility of $-1-\cost$), resulting in an expected utility equal to $\frac{1}{2} \cdot \left(1-\cost -1-\cost\right) = -\cost$.
              Conversely, the opponent does not vote with probability $1-\ptp$, in which case $\forVote$ wins and a utility of $1-\cost$ is gained.
              The total expected utility of voting is:
              $
                  \left(1 - \ptp\right)\cdot\left(1-\cost\right)
                  +
                  \ptp \cdot \left(-\cost\right)
                  =
                  1 - \cost  - \ptp
              $.
        \item If the $\forVote$-type voter abstains, then if the informed opponent abstains (with probability $1-\ptp$), a utility of $0$ is gained as a winner is chosen at random.
              On the other hand, the opponent votes with probability $\ptp$, resulting in a utility equal to $-1$.
              Thus, the exepected utility in this case equals:
              $
                  \left(1 - \ptp\right)\cdot0
                  +
                  \ptp \cdot \left(-1\right)
                  =
                  - \ptp
              $.
    \end{itemize}
    On the other hand, if the uninformed voter does not abstain, then it will randomly choose who to vote for in the second period, as no votes were cast for both options at the time of voting.
    \begin{itemize}[leftmargin=*]
        \item Consider the case where the $\forVote$-type voter votes.
              The $\againstVote$-type voter abstains with probability $1-\ptp$, in which case a utility of $1-\cost$ is obtained if the uninformed voter's random decision results in voting for $\forVote$, and a utility of $-\cost$ is gained if the uninformed voter cast a vote for $\againstVote$.
              As the vote of the uninformed voter is decided with a uniform probability, the expected utility for this subcase is $\left(1-\ptp\right) \cdot \left(\frac{1}{2}-\cost\right)$.
              With probability $\ptp$, the $\againstVote$-type voter does vote and we obtain an expected utility of $\ptp \left(-\cost\right)$: if the uninformed voter chose $\forVote$ the utility is $1-\cost$, while it is $-1-\cost$ if the uninformed votes for $\againstVote$.
              In total, the expected utility of both subcases is:
              $
                  \left(1 - \ptp\right)\cdot\left(\frac{1}{2}-\cost\right)
                  +
                  \ptp \cdot \left(-\cost\right)
                  =
                  \frac{1}{2}-\cost-\frac{\ptp}{2}
              $.
        \item In case the $\forVote$-type voter abstains, then with probability $1 - \ptp$ the $\againstVote$-type voter abstains as well, thus the outcome depends solely on the uninformed voter's random choice, resulting in an expected utility of $0$.
              With probability $\ptp$, the $\againstVote$-type voter votes.
              In this sub-case, the expected utility is $-\frac{1}{2}$: if the uninformed voter voted for $\forVote$, then we have a tie and the utility is $0$, while if the uninformed vote was for $\againstVote$ then the utility equals $-1$.
              In total, the expected utility equals:
              $
                  \left(1 - \ptp\right)\cdot 0
                  +
                  \ptp \cdot \left(-\frac{1}{2}\right)
                  =
                  -\frac{\ptp}{2}
              $.
    \end{itemize}

    In sequentially rational equilibria strategies that involve both voting and abstaining, voters should be indifferent between the two actions in this information set's states, given their beliefs on the probability of reaching each state.
    Thus, the expected utility should satisfy the following.
    \begin{multline}
        \label{eq:PrisonIndifference}
        \overbrace{\frac{\qt}{1 - \qo} \cdot \left(
            \frac{1}{2}-\cost-\frac{\ptp}{2}
            \right)
            +
            \left( 1 - \frac{\qt}{1 - \qo} \right) \cdot \left(
            1 - \cost  - \ptp
            \right)}^{\text{expected utility of informed voting late}}
        =
        \\
        \underbrace{\frac{\qt}{1 - \qo} \cdot \left(
            -\frac{\ptp}{2}
            \right)
            +
            \left( 1 - \frac{\qt}{1 - \qo} \right) \cdot \left(
            - \ptp
            \right)}_{\text{expected utility of informed abstaining}}
    \end{multline}
    When \cref{eq:PrisonIndifference} is solved, we obtain: $\cost = 1 + \frac{\qt}{2 \left( \qo - 1\right)}$.
    Furthermore, substituting this result in the expected utility reveals that abstention is dominant in this information set when $\cost > 1 + \frac{\qt}{2 \left( \qo - 1\right)}$, and voting is dominant when $\cost < 1 + \frac{\qt}{2 \left( \qo - 1\right)}$.
    This allows us to write the equilibrium strategy as a ``threshold'' strategy that depends on $\cost$, as presented in the statement of \cref{prop:MustardSet}.
\end{proof}

\propGreenSet*
\begin{proof}
    We proceed similarly to the proof of \cref{prop:MustardSet}.

    \paragraphNoSkip{States}
    This set contains states with an interim tally of $\pta = 1$ where the only vote that was cast in the first period was for $\againstVote$.

    \paragraphNoSkip{Beliefs}
    This information set is distinguished from other second-stage sets because $\forVote$-type agents cannot always discern whether the vote for $\againstVote$ was cast by another informed agent, or by an uninformed one.
    There are three states contained within this set, all of which are reachable only if informed voters do not surely vote early, i.e., $\po < 1$.
    The first state belonging to this set is reached if uninformed agents vote early ($\qo > 0$) for $\againstVote$, and if informed agents do not always vote early ($\po < 1$).
    By definition, if the uninformed indeed voted early, the choice between $\forVote$ and $\againstVote$ is determined by a coin toss and thus has probability $\frac{1}{2}$, thus the probability of reaching this state equals $\left(1-\po\right)\cdot\qo\cdot\frac{1}{2}$.
    The second state is reachable when the uninformed votes in the second round ($\qt > 0$) and informed voters assign non-zero probability to voting early ($\po > 0$).
    It has a probability of $\po\cdot\qt$ of being reached.
    Finally, the third state is reached with probability $\po\cdot\left(1-\qo-\qt\right)$.
    It can only be reached when $\qo + \qt < 1$, and if both the uninformed abstains, and informed voters sometime vote early ($\po > 0$).
    Thus, the probability of reaching any of these states is:
    \begin{equation}
        \label{eq:GreenTotalProbability}
        \left(\left(1-\po\right) \frac{\qo}{2}\right)
        +
        \po \qt
        +
        \left(\po \left(1-\qo-\qt\right)\right)
        =
        \left(\left(1-\po\right) \frac{\qo}{2}\right)
        +
        \left(\po \left(1-\qo\right)\right)
        =
        \frac{1}{2} \left( \qo + \po \left( 2 - 3 \qo \right) \right)
    \end{equation}
    Given \cref{eq:GreenTotalProbability}, we calculate the conditional probability of being in each of the set's states.
    The probability of being in the first state, conditioned on reaching a state belonging to this set, is:
    $\frac{\left(1-\po\right) \qo}{\qo + \po ( 2 - 3\qo )}$.
    Similarly, the conditional probability of being in the second state equals:
    $\frac{2\po\qt}{\qo + \po (2 - 3\qo)}$.
    Lastly, the conditional probability of being in the third state is:
    $\frac{2\po\left(1-\qo-\qt\right)}{\qo + \po (2 - 3\qo)}$.

    \paragraphNoSkip{Equilibrium analysis}
    In all states belonging to this set, abstention results in a utility of $-1$ for the $\forVote$ type voter.
    The utility of voting varies among the states.
    In case the uninformed voter abstains, then voting results in a utility of $-\cost$.
    If the uninformed votes late, the utility from voting is $-1-\cost$.
    Finally, if the uninformed votes early, then the utility equals $-\cost$ if the $\againstVote$ type voter did not vote, and equals $-1-\cost$ if it has voted.
    Due to symmetry, $\againstVote$ voters vote in this state with probability $\ptf$.

    Thus, in a sequentially rational equilibrium, $\forVote$ voters are indifferent if the expected utility of voting equals that of abstaining, given the voter's beliefs.
    In \cref{eq:GreenIndifferenceUniformedEarly}, we derive the corresponding indifference condition.
    \begin{align}
        \label{eq:GreenIndifferenceUniformedEarly}
        (1-\po) \cdot \frac{\qo}{2}\cdot(
        ((1-\ptf) \cdot (-\cost))
         &
        +
        (\ptf \cdot (-1-\cost))
        )
        \nonumber \\&
        +
        (\po \cdot \qt \cdot (-1-\cost))
        +
        (\po \cdot (1-\qo-\qt) \cdot (-\cost))
        \nonumber \\&
        =
        (1-\po) \cdot \frac{\qo}{2} \cdot (-1)
        + (\po \cdot \qt \cdot (-1))
        \nonumber \\&
        + (\po \cdot (1-\qo-\qt) \cdot (-1))
    \end{align}

    We solve \cref{eq:GreenIndifferenceUniformedEarly} for $\cost$, giving us:
    $c = 1 + \frac{
            (
            \ptf \qo
            -
            2 \qt
            ) \po
            -
            \ptf \qo
        }{(2 - 3 \qo)\po + \qo}$.
    Moreover, we deduce from \cref{eq:GreenIndifferenceUniformedEarly} that for any cost above this value, the expected utility of abstaining is greater than that of voting, while for lower values, the opposite holds.
    As before, this insight allows us to present the equilibrium strategy as following a ``threshold'' structure in \cref{prop:GreenSet}.
\end{proof}

\propWhiteSet*
\begin{proof}
    Similarly to the analysis of the preceding information sets, we go over the states comprising this set, the beliefs of agents upon reaching such a state, and agents' equilibrium strategies.

    \paragraphNoSkip{States}
    This information set corresponds to a single state, where the uninformed voted early for $\forVote$, and none of the informed voters voted in the first round.

    \paragraphNoSkip{Beliefs}
    This state is reachable only if there is both some probability that an uninformed vote will be cast early ($\qo > 0$), and a non-zero probability of informed voters waiting in the first round ($\po < 1$).
    From the perspective of the $\forVote$ type voter, this information set contains just a single state: as this voter did not vote in the first round, and is the only informed voter favoring $\forVote$, reaching this state implies that the uninformed voter cast $\forVote$.
    For the $\againstVote$ voter, this state is indistinguishable from all other states where the interim results correspond to a single vote for $\forVote$, i.e., when reaching this state, the $\againstVote$ voter's information set is symmetric to the green information set.

    \paragraphNoSkip{Equilibrium analysis}
    Given this state is reached, the $\forVote$ type voter can vote to assure its preference will win whether the $\againstVote$ type voter turns out or not, thereby receiving a utility of $1-\cost$.
    Conversely, consider the case where the $\forVote$ type voter abstains.
    In a symmetric sequentially rational equilibrium, the $\againstVote$ type voter abstains as well with probability $1-\pta$, resulting in a win for $\forVote$ and a utility of $1$.
    With the complementary probability $\pta$, the $\againstVote$ type informed voter casts a vote, in which case the winner is chosen at random and the expected utility is $0$.
    Thus, the indifference condition for the $\forVote$ voter is:
    $1-\cost = \left(1-\pta\right) \cdot 1 + \pta \cdot 0$,
    which can be simplified to $\pta = \cost$.

    Similarly to before, our result implies that the equilibrium strategy has a cost threshold; i.e., if $\cost > \pta$ then abstaining dominates, while for lower costs, voters would prefer to vote.
    We provide the threshold strategy in the statement of \cref{prop:WhiteSet}.
\end{proof}

\propEarlyUtility*
\begin{proof}
    The expected utility is composed of contributions from the relevant and mutually exclusive branches of the game tree.
    We replace the lower indexes of each contribution term with a corresponding interim tally and the distribution $q$ to explicitly specify those branches.
    If the informed $\forVote$-type voter votes early, the following interim tallies contribute to the expected utility: $\langle2,0\rangle$, $\langle1,1\rangle$, $\langle2,1\rangle$, $\langle1,2\rangle$, and $\langle1,0\rangle$.
    The utility $\Uearly$ can be specified as:
    \begin{align}
        \Uearly                                    & =
        \utility_{q_1,\langle2,0\rangle} +
        \utility_{q_1,\langle1,1\rangle} +
        \utility_{q_1,\langle2,1\rangle} +
        \utility_{q_1,\langle1,2\rangle} +
        \utility_{q_2,\langle1,0\rangle} +
        \utility_{q_2,\langle1,1\rangle} +
        \utility_{q_\varnothing,\langle1,0\rangle} +
        \utility_{q_\varnothing,\langle1,1\rangle},                                                 \\
        \intertext{where:}
        \utility_{q_1,\langle2,0\rangle}           & =  \frac{\qo}{2} \cdot (1-\po) \cdot (1-\cost) \\
        \intertext{(because the uninformed voter votes  for $\forVote$ with probability $\frac{\qo}{2}$ and the informed $\againstVote$-type voter does not vote early with probability $(1-\po)$),}
        \utility_{q_1,\langle1,1\rangle}           & = \frac{\qo}{2} \cdot (1-\po) \cdot (-1-\cost) \\
        \intertext{(because the uninformed voter votes  for $\againstVote$ with probability $\frac{\qo}{2}$, the informed $\againstVote$-type voter does not vote early with probability $(1-\po)$, and has a dominant strategy to vote for $\againstVote$ in $t_2$),}
        \utility_{q_1,\langle2,1\rangle}           & =    \frac{\qo}{2} \cdot \po \cdot (1-\cost)   \\
        \intertext{(because the uninformed voter votes  for $\forVote$ with probability $\frac{\qo}{2}$ and the informed $\againstVote$-type voter votes early with probability $\po$),}
        \utility_{q_1,\langle1,2\rangle}           & =    \frac{\qo}{2} \cdot \po \cdot (-1-\cost)  \\
        \intertext{(because the uninformed voter votes for $\againstVote$ with probability $\frac{\qo}{2}$ and the informed $\againstVote$-type voter votes early with probability $\po$),}
        \utility_{q_2,\langle1,0\rangle}           & = \qt \cdot (1-\po) \cdot (1-\cost)            \\
        \intertext{(because the informed $\againstVote$-type voter does not vote early with probability $(1-\po)$ and, having seen the $\forVote$ type's early vote, the uninformed voter always votes for $\forVote$ with probability $\qt$),}
        \utility_{q_2,\langle1,1\rangle}           & =
        \frac{\qt}{2} \cdot \po \cdot (1-\cost) +
        \frac{\qt}{2} \cdot \po \cdot (-1-\cost) \nonumber                                          \\&=
        \qt \cdot \po \cdot (-\cost)
        \\
        \intertext{(because the informed $\againstVote$-type voter votes early with probability $\po$ and the uninformed voter decides the vote by voting for $\forVote$ or $\againstVote$ with probability
            $\frac{1}{2}$ each),}
        \utility_{q_\varnothing,\langle1,0\rangle}
                                                   & =
        (1-\qo-\qt) \cdot (1-\po) \cdot (
        (1-\pta) \cdot (1-\cost)
        +
        \pta \cdot (-\cost)
        )
        \nonumber
        \\
                                                   & =
        (1-\qo-\qt) \cdot (1-\po) \cdot (1-\pta - \cost)
        \intertext{(because the uninformed voter abstains with probability $(1-\qo-\qt)$, the informed $\againstVote$-type voter does not vote early with probability $(1-\po)$, and the informed $\forVote$-type voter gets a utility of $(1-\cost)$ or $(-\cost)$ depending on whether the informed $\againstVote$ voter abstains $(1-\pta)$ or votes late $(\pta)$),}
        \utility_{q_\varnothing,\langle1,1\rangle} & = (1-\qo-\qt) \cdot \po \cdot (-\cost)         \\
        \intertext{(because the uninformed voter abstains from voting with probability $(1-\qo-\qt)$ and the informed $\againstVote$-type voter votes early with probability $\po$) \nonumber.}
    \end{align}

    In total, the utility gained from voting in the first turn is:
    \begin{equation*}
        \Uearly =
        (1 - \po) \cdot (1 - \qo + (1 - \qo - \qt)\cdot\pta) - \cost
    \end{equation*}
\end{proof}

\propLateUtility*
\begin{proof}
    If the informed $\forVote$ type voter strategically waits for the next turn, the following interim tallies are of interest:
    \tallyTuple{0}{0}, \tallyTuple{1}{0}, \tallyTuple{0}{1}, \tallyTuple{1}{1}, \tallyTuple{0}{2}.
    The utility $\Unotearly$ can be written as:
    \begin{align}
        \Unotearly                                  & =
        \utility_{q_1, \tallyTuple{1}{0}} +
        \utility_{q_1, \tallyTuple{0}{1}} +
        \utility_{q_1, \tallyTuple{1}{1}} +
        \utility_{q_1, \tallyTuple{0}{2}} +
        \utility_{q_2, \tallyTuple{0}{0}} +
        \utility_{q_2, \tallyTuple{0}{1}} +
        \utility_{q_\varnothing, \tallyTuple{0}{0}} +
        \utility_{q_\varnothing, \tallyTuple{0}{1}},                                                                                                                             \\
        \intertext{where:}
        \utility_{q_1, \tallyTuple{1}{0}}           & = \frac{\qo}{2} \cdot (1-\po) \cdot \left( \ptf \cdot (1-\cost) +  (1-\ptf) \cdot ((1-\pta) \cdot 1 + \pta \cdot 0)\right)
        \nonumber
        \\
                                                    & = \frac{\qo}{2} \cdot (1-\po) \cdot \left( \ptf \cdot (1-\cost) +  (1-\ptf) \cdot (1-\pta)\right)
        \nonumber
        \\
                                                    & =
        \frac{\qo}{2} \cdot (1-\po) \cdot \left( 1-\cost\ptf - \pta (1-\ptf)\right)
        \\ \intertext{(because the uninformed voter turns out to vote in the first turn with probability $\qo$, votes $\forVote$ with probability $\frac{1}{2}$, and furthermore the informed $\againstVote$ voter does not vote early with probability $(1-\po)$; now if the informed $\forVote$ voter votes late with probability $\ptf$, the utility is $1-\cost$ regardless of the informed $\againstVote$ voter's action; otherwise the informed $\forVote$ voter ties or wins depending on whether the informed $\againstVote$ voter votes late, which happens with probability $\pta$ by symmetry),}
        \utility_{q_1, \tallyTuple{0}{1}}
                                                    & =
        \frac{\qo}{2} \cdot (1-\po) \cdot \left(
        (1-\pta) \cdot (-1) +
        \pta \cdot ((1-\ptf) \cdot (-\cost) + \ptf \cdot (-1-\cost))
        \right)
        \nonumber
        \\
                                                    & =
        - \frac{\qo}{2} \cdot (1-\po) \cdot \left(
        1 + \pta \cdot (
            \cost + \ptf - 1
            )
        \right)
        \\ \intertext{(because the uninformed voter votes $\againstVote$ with probability $\frac{\qo}{2}$ and the informed $\againstVote$ voter does not vote early with probability $(1-\po)$; now if the informed $\forVote$ voter abstains, with probability $(1-\pta)$, then the utility is $-1$ regardless of the informed $\againstVote$ voter's action; otherwise the informed $\forVote$ voter loses or ties depending on whether the informed $\againstVote$ voter votes late, which happens with probability $\ptf$ by symmetry, as above),}
        \utility_{q_1, \tallyTuple{1}{1}}           & = \frac{\qo}{2} \cdot \po \cdot (1-\cost)
        \\ \intertext{(because the uninformed voter turns out early with probability $\qo$, votes $\forVote$ with probability $\frac{1}{2}$, the informed $\againstVote$ voter votes early with probability $\po$, and the informed $\forVote$ voter has a dominant strategy to vote late in order to win),}
        \utility_{q_1, \tallyTuple{0}{2}}           & = \frac{\qo}{2} \cdot \po \cdot (-1)
        \\ \intertext{(because the uninformed voter votes $\againstVote$ with probability $\frac{\qo}{2}$, the informed $\againstVote$ voter votes early with probability $\po$, and the informed $\forVote$ voter has a dominant strategy to abstain as they lose regardless of their action),}
        \utility_{q_2, \tallyTuple{0}{0}}           & = \qt \cdot (1-\po) \cdot (
        \ptp \cdot ((1 - \ptp)\cdot (\frac{1}{2}-c)  + \ptp \cdot (-c))  \nonumber                                                                                               \\&
        + (1 - \ptp) \cdot ((1-\ptp) \cdot 0 + \ptp \cdot (-\frac{1}{2})))\nonumber                                                                                              \\&
        = q_2 \cdot (1-\po) \cdot \ptp \cdot (-c)
        \\ \intertext{(because the uninformed voter waits with probability $q_2$ and the informed $\forVote$ voter does not vote early with probability $(1-\po)$; now if the informed $\forVote$ voter votes late with probability $\ptp$ and the informed $\againstVote$ voter abstains with probability $(1-\ptp)$, the utility is either $(1-c)$ or $(-c)$ depending on a coin toss for the uninformed voter; if the informed $\againstVote$ voter votes with probability $\ptp$, the utility is either $(1-c)$ or $(-1-c)$ depending on a coin toss for the uninformed voter; alternatively, if the informed $\againstVote$ voter abstains from voting with probability $(1-\ptp)$, the expected utility is 0 if the informed $\againstVote$ voter abstains as well, or $-\frac{1}{2}$ otherwise),}
        \utility_{q_2, \tallyTuple{0}{1}}
                                                    & =
        \qt \cdot \po \cdot
        \left(
        \pta \cdot (-1-\cost) + (1-\pta) \cdot (-1)
        \right)
        \nonumber
        \\
                                                    & =
        - \qt \cdot \po \cdot
        \left( \cost \pta + 1
        \right)
        \\ \intertext{(because the uninformed voter waits with probability $\qt$ and the informed $\againstVote$ voter votes with probability $\po$; the informed $\forVote$ voter will always lose, but still votes with probability $\pta$ in order to improve their utility in other states of the same information set),}
        \utility_{q_\varnothing, \tallyTuple{0}{0}} & = (1-\qo - \qt) \cdot (1-\po) \cdot (
        \ptp \cdot ((1 - \ptp) \cdot (1-c) + \ptp \cdot (-c)) \nonumber                                                                                                          \\&
        + (1-\ptp) \cdot ((1-\ptp)\cdot 0 + \ptp \cdot (-1))) \nonumber                                                                                                          \\&
        = (1-\qo - \qt) \cdot (1 - \po) \cdot \ptp \cdot (-c)
        \\ \intertext{(because the uninformed voter abstains with probability $(1-\qo-\qt)$ and the informed $\forVote$ voter does not vote early with probability $(1-\po)$; now if the informed $\forVote$ voter votes late with probability $\ptp$, then the utility is either $(1-c)$ or $(-c)$ depending on the informed $\againstVote$ voter's move; alternatively, if the informed $\forVote$ voter abstains from voting with probability $(1-\ptp)$, then the utility is either 0 or $-1$ depending on the informed $\againstVote$ voter's move),}
        \utility_{q_\varnothing, \tallyTuple{0}{1}}
                                                    & =
        (1-\qo-\qt) \cdot \po \cdot
        \left(
        \pta \cdot (-\cost) + (1-\pta) \cdot (-1)
        \right)
        \nonumber
        \\
                                                    & =
        (1-\qo-\qt) \cdot \po \cdot
        \left(
        \pta \cdot (1-\cost) -1
        \right)
        \\ \intertext{(because the uninformed voter abstains with probability $(1-\qo-\qt)$ and the informed $\againstVote$ voter votes early with probability $\po$; the informed $\forVote$ voter ties if they vote with probability $\pta$ and loses otherwise; here again, they do not have a dominant strategy as abstaining may improve utility in other states of the same information set).} \nonumber
    \end{align}
    In total, the utility from waiting in the first turn is:
    \begin{align*}
        \Unotearly
         & =
        (\po-1) \cost \ptp
        -
        \po (
        1 + (\cost-1+\qt) \pta
        )
        \\
         & -
        \frac{\qo}{2} (
        (\cost-2) \po (1 - \pta)
        -
        \cost (
        (\po - 1) (
        2\ptp
        -
        \ptf
        -
        \pta
        )
        -
        \po\pta
        )
        )
    \end{align*}
\end{proof}

\corEarlyWaitDiff*
\begin{proof}
    The stated result is obtained by subtracting \cref{prop:LateUtility} from \cref{prop:EarlyUtility} and simplifying the resulting term:
    \begin{align*}
        \Uearly - \Unotearly
         &
        =
        (1 - \po) \cdot (1 - \qo + (1 - \qo - \qt)\cdot\pta) - \cost
        \\&
        -
        (
        (\po-1) \cost \ptp - \po (1 + (\cost-1+\qt) \pta)
        \\&
        - \frac{\qo}{2} ((\cost-2) \po (1 - \pta) - \cost ((\po - 1) (2\ptp - \ptf - \pta) - \po\pta))
        )
        \\&
        =
        \left(
        \ptp - 1 +
        \left(\frac{1}{2}  \left(\pta + \ptf\right) - \ptp\right) \qo
        \right) \cost
        \nonumber \\&
        - \left(
        \left(3 \pta + \ptf - 2 \ptp - 1\right) \frac{\qo}{2}
        + \left(\ptp - \pta\right)
        \right) \cost \po
        \nonumber \\&
        + 1 - \qo
        - \left(1 - \qo - \qt\right) \pta
    \end{align*}
\end{proof}

\corBothIndifferencePo*
\begin{proof}
    First, we note that the indifference condition for this case is $\Uearly - \Unotearly = 0$.
    We solve \cref{cor:EarlyWaitDiff} under this constraint for $\po$, reaching the stated result.
\end{proof}

\thmQoMixed*
\begin{proof}
    Recall that we have shown in \cref{prop:GreenSet} that voters are indifferent to the value of $\pta$ when $\cost = 1 + \frac{(\ptf \qo - 2 \qt) \po - \ptf \qo}{(2 - 3 \qo)\po + \qo}$.
    We solve this for $\ptf$, and obtain:
    \begin{equation*}
        \ptf = \frac{
            2\po\qt - \left(
            \left(\cost - 1\right)
            \left(\left(3\qo - 2\right)\po - \qo\right)
            \right)
        }{
            \left(\po - 1\right) \qo
        }
    \end{equation*}
    Furthermore, in \cref{prop:WhiteSet} we prove that when $\qo > 0$ and $\cost = \pta$, then voters are indifferent \gls{wrt} $\ptf$.
    We substitute these in \cref{cor:BothIndifferencePo}, and note that these two substitutions can co-exist as \cref{prop:GreenSet} specifies that in our case, voters are indifferent to $\pta$.
    After some arithmetic, we arrive at the following form for $\po$:
    \begin{align*}
        \po =
        1 + \frac{2 + \left(\cost - 2\right)\qo -2\cost}{2\cost\left(\left(1-\qo\right)\ptp + \qo + \qt - 1\right)}
    \end{align*}
    Note that the above form is only dependent on $\ptp$, and our exogenous parameters $\cost,\qo,\qt$.
    To obtain our result, we need to ensure that our expressions for both $\ptf$ and $\po$ are feasible.
    Thus, we require that $\ptf \in \left[0,1\right]$ and that $\po \in \left(0,1\right)$ (as we are seeking fully mixed equilibria):
    \begin{align*}
        \frac{
            2\po\qt - \left(
            \left(\cost - 1\right)
            \left(\left(3\qo - 2\right)\po - \qo\right)
            \right)
        }{
            \left(\po - 1\right) \qo
        }
         &
        \in
        \left[0, 1\right]
        \\
        1 + \frac{2 + \left(\cost - 2\right)\qo -2\cost}{2\cost\left(\left(1-\qo\right)\ptp + \qo + \qt - 1\right)}
         &
        \in
        \left(0, 1\right)
    \end{align*}
    The requirement for $\ptf$ results in:
    \begin{align*}
        \cost \in \left[
            2\po\frac{\qo+\qt-1}{\left(3\qo-2\right)\po -\qo},
            1 + \frac{2\qt\po}{\left(3\qo-2\right)\po -\qo}
            \right]
    \end{align*}
    And, with respect to $\po$, we have:
    \begin{align*}
        \ptp
         &
        \in
        \left[0,1+\frac{\qt}{\qo-1}\right)
            \cup
            \left(1+\frac{\qt}{\qo-1},1\right]
        \\
        \cost
         &
        \in
        \left(
        \frac{2-2\qo}{2+2\left(\qo-1\right)\ptp -3\qo -2\qt}
        ,
        1+\frac{\qo}{\qo-2}
        \right)
    \end{align*}
    By intersecting these parameter ranges, it becomes apparent that the intersection is non-empty for certain parameter regimes.
    In particular, one possible parameter set is $\cost = \frac{1}{2}$, $\qo = \frac{43}{64}$, $\qt = \frac{169}{768}$, with the corresponding strategy being $\pta = \frac{1}{2}$, $\ptp = \frac{107}{252}$, $\ptf = 0$, $\po = \frac{3}{4}$.
\end{proof}

\thmQoZeroStrategiesTurnOne*
\begin{proof}
    We start with \cref{prop:QoZeroStrategiesTurnTwo}, which reasons about second-turn strategies when $\qo = 0$.
    \propQoZeroStrategiesTurnTwo*
    The proof of \cref{prop:QoZeroStrategiesTurnTwo} is given in \cref{sec:Proofs}.
    We continue by substituting $\qo = 0$ in \cref{cor:EarlyWaitDiff} to obtain the difference in utility between voting early and late in the current case, and get:
    \begin{align}
        \label{eq:EarlyWaitDiffQoZero}
        \Uearly - \Unotearly
         &
        =
        \left(
        \ptp - 1 +
        \left(\frac{1}{2}  \left(\pta + \ptf\right) - \ptp\right) 0
        \right) \cost
        \nonumber \\&
        - \left(
        \left(3 \pta + \ptf - 2 \ptp - 1\right) \frac{0}{2}
        + \left(\ptp - \pta\right)
        \right) \cost \po
        \nonumber \\&
        + 1 - 0
        - \left(1 - 0 - \qt\right) \pta
        \nonumber \\&
        =
        \left( \ptp - 1 \right) \cost - \left(\ptp - \pta\right)\cost \po + 1 - \left(1 - \qt\right) \pta
    \end{align}
    Note that \cref{eq:EarlyWaitDiffQoZero} implies that voting early dominates waiting when $\pta=\ptp$:
    \begin{itemize}
        \item When $\pta=1=\ptp$, the difference in utility is $\qt$, implying that later uninformed arrivals contribute to the informed voters' decision to vote early.
        \item When $\pta=0=\ptp$, the difference in utility equals $1-\cost$, which is strictly positive as $\cost \in \left[0,1\right)$.
              Furthermore, the utility is obviously monotonically increasing in $\cost$.
    \end{itemize}
    We now attempt to analyze cases where voters are indifferent between voting early and late.
    Thus, we substitute $\qo = 0$ in \cref{cor:BothIndifferencePo}, :
    \begin{align}
        \label{eq:QoZeroStrategiesTurnOne}
        \po
        =
        \frac{
            (1-\ptp)\cost + (1-\qt)\pta - 1
        }{
            (\pta-\ptp)\cost
        }
    \end{align}
    Given $\qt$ and $\cost$, we can use \cref{eq:QoZeroStrategiesTurnTwo} to derive equilibria values for $\pta$ and $\ptp$.
    These, in turn, can be substituted in \cref{eq:QoZeroStrategiesTurnOne} to obtain equilibria values for $\po$.
    Having ruled out the cases where $\ptp = \pta$, we turn to substitute $\ptp = 1$ and $\pta = 0$ in \cref{eq:QoZeroStrategiesTurnOne}, and receive $\po = \frac{1}{\cost}$.
    But, \cref{prop:QoZeroStrategiesTurnTwo} shows that these values are only for $\cost \in \left(1-\qt,1-\frac{\qt}{2}\right)$, implying that $\frac{1}{\cost} > 1$, which is not feasible.
    Note that substituting $\ptp=1$ and $\pta=0$ into the utility difference formula results in a utility of $1$, implying that voting dominates.
    The proof of \cref{prop:QoZeroStrategiesTurnTwo} gives us two additional cases to examine.
    The first is when $\cost = 1 - \qt$, $\ptp = 1$ and voters are indifferent \gls{wrt} $\pta$, allowing us to get:
    $\po = \frac{\pta(\qt - 1) + 1}{(\pta - 1)(\qt - 1)}$.
    By requiring that $\qo \le 1$ and solving the resulting equation: $\frac{\pta(\qt - 1) + 1}{(\pta - 1)(\qt - 1)} \le 1$, one can see that the condition is satisfied only for $\qt = 0$, and even then we get that $\qo = 1$.
    The second and final case is when voters are indifferent \gls{wrt} $\ptp$, $\pta = 0$ and $\cost = 1 - \frac{\qt}{2}$, which imply $\po = 1 - \frac{\qt}{\ptp(\qt - 2)}$, but again we reach a degenerate case that is only feasible when $\qt = 0$, and which implies $\po = 1$.
\end{proof}

\propQoZeroStrategiesTurnTwo*
\begin{proof}
    \cref{prop:MustardSet,prop:GreenSet,prop:WhiteSet} show that equilibrium strategies may vary in different regions of our parameter space (i.e., $\cost$ and $\vec{\uninformed}$).
    In particular, \cref{prop:MustardSet} shows that the values assigned in equilibria to $\ptp$ depend solely on our exogenous parameters for the entire parameter space.
    On the other hand, \cref{prop:WhiteSet} shows that this is true only when $\qo = 0$ for $\ptf$, in which case voters are indifferent to the value of $\ptf$.
    This is because when $\qo = 0$, uninformed voters do not vote early, implying that the information set that $\ptf$ applies to does not exist.
    This extends to $\pta$ and $\ptp$, as well; by substituting $\qo = 0$ in \cref{prop:GreenSet,prop:MustardSet}, we reach:
    \begin{equation}
        \label{eq:QoZeroStrategiesTurnTwo}
        \pta
        =
        \begin{cases}
            1,                   & \cost < 1 - \qt
            \\
            0,                   & \cost > 1 - \qt
            \\
            \text{indifference}, & \cost = 1 - \qt
        \end{cases}
        ,
        \qquad
        \ptp
        =
        \begin{cases}
            1,                   & \cost < 1 - \frac{\qt}{2}
            \\
            0,                   & \cost > 1 - \frac{\qt}{2}
            \\
            \text{indifference}, & \cost = 1 - \frac{\qt}{2}
        \end{cases}
    \end{equation}
    We turn to characterize the feasible combinations for $\ptp$ and $\pta$ by using \cref{eq:QoZeroStrategiesTurnTwo}.
    \begin{enumerate}
        \item If voters are indifferent \gls{wrt} both $\pta$ and $\ptp$, then $\cost = 1-\qt = 1 - \frac{\qt}{2}$, but that is possible only when $\qt = 0$.
              Moreover, this implies that $\cost = 1$, which is outside our feasible parameter range, thereby ruling-out this possibility.
        \item If voters are both indifferent \gls{wrt} $\pta$ and $\ptp = 0$, then this implies $\cost = 1 - \qt > 1 - \frac{\qt}{2}$, which in turn implies $\qt < 0$, which cannot be as $\qt$ is a probability.
        \item If voters are indifferent \gls{wrt} $\pta$ and $\ptp = 1$, then by combining the constraints for both, we get $\cost = 1-\qt < 1 -\frac{\qt}{2}$, implying that this can only hold when $0 < \qt$.
        \item If $\pta = 0$ and voters are indifferent \gls{wrt} $\ptp$, then $1-\frac{\qt}{2} = \cost > 1 - \qt$, therefore the current condition can only be satisfied when $\qt > 0$.
        \item If $\pta = 0$ and $\ptp = 0$, we can deduce $\cost > \max\left(1 - \qt,1-\frac{\qt}{2}\right)$, which in our case can be written succinctly: $\cost > 1 - \frac{\qt}{2}$.
        \item If $\pta = 0$ and $\ptp = 1$, then we get a feasible range: $ 1-\frac{\qt}{2} > \cost > 1 - \qt$, where this range is non-empty for $\qt > 0$.
        \item If $\pta = 1$ and voters are indifferent \gls{wrt} $\ptp$, we get a contradiction: $1-\frac{\qt}{2} = \cost < 1 - \qt$.
        \item If $\pta = 1$ and $\ptp = 0$, a contradiction is reached: $1-\frac{\qt}{2} < \cost < 1 - \qt$.
        \item If $\pta = 1$ and $\ptp = 1$, we can deduce: $\cost < \min\left(1 - \qt, 1-\frac{\qt}{2}\right)$, which in the current case can be abbreviated to: $\cost < 1 - \qt$.
              For $\cost$ to be feasible, we must also require that $\qt < 1$.
    \end{enumerate}

\end{proof}

\thmWaitingEquilibria*
\begin{proof}
    We prove this result by ruling out almost all strategies for the current case in \cref{prop:WaitingIndifference}.
    Then, we prove that there are feasible parameter regimes which admit such equilibria, and characterize the equilibria strategies for each region of our parameter space.

    Notably, \cref{prop:WaitingIndifference} implies that in the current case, voters exhibit a ``free-rider'' mentality that is dependent on the cost of voting: if $\cost$ is high, then voters assign a low probability of voting, even when the interim tally shows their preference is the leading candidate.
    \begin{restatable}[]{proposition}{propWaitingIndifference}
        \label{prop:WaitingIndifference}
        In an equilibrium strategy profile where waiting dominates voting early, if $\qo > 0$, then $\po = 0$, $\pta = \cost$ and $\ptf = 1 - \cost$.
    \end{restatable}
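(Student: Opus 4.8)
The plan is to first pin down $\po$ and then solve the two relevant second-turn information sets jointly. Since the premise is that waiting strictly dominates voting early, sequential rationality (\cref{def:WPBE}) forces the informed $\forVote$-type voter to place no probability on voting early, so $\po = 0$; by symmetry across types the same holds for the $\againstVote$-type voter. Before invoking the threshold characterizations I would note that with $\po = 0$ and $\qo > 0$ both the ``white'' set with tally $\tallyTuple{1}{0}$ and the ``green'' set with tally $\tallyTuple{0}{1}$ remain on the equilibrium path: each is reached with probability $\frac{\qo}{2}$ through the uninformed voter casting an early vote, so the Bayesian beliefs underlying \cref{prop:WhiteSet,prop:GreenSet} are well defined and their equilibrium strategies apply.

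Next I would substitute $\po = 0$ into the cost threshold of \cref{prop:GreenSet}. The threshold $1 + \frac{(\ptf \qo - 2\qt)\po - \ptf\qo}{(2 - 3\qo)\po + \qo}$ collapses to $1 - \ptf$ once $\po = 0$ (using $\qo > 0$ to cancel the common factor $\qo$), so in the green set voting dominates when $\cost < 1 - \ptf$ and abstaining dominates when $\cost > 1 - \ptf$. Meanwhile \cref{prop:WhiteSet} gives that in the white set voting dominates when $\cost < \pta$ and abstaining when $\cost > \pta$. This produces a coupled system in which $\ptf$ best-responds to the comparison of $\cost$ with $\pta$, while $\pta$ best-responds to the comparison of $\cost$ with $1 - \ptf$.

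I would then resolve this system by case analysis. If $\cost > \pta$, then $\ptf = 0$ and the green threshold becomes $1$; since $\cost < 1$ this forces $\pta = 1$, contradicting $\cost > \pta$. If $\cost < \pta$, then $\ptf = 1$ and the green threshold becomes $0$; as $\cost \ge 0$ this generically forces $\pta = 0$, contradicting $\cost < \pta$. Hence the white set must sit at its indifference point, $\cost = \pta$. Substituting $\pta = \cost$ back, the two non-indifferent green sub-cases are again impossible: $\cost < 1 - \ptf$ would give $\pta = 1$ hence $\cost = 1$, while $\cost > 1 - \ptf$ would give $\pta = 0$ hence $\cost = 0$ together with $\ptf > 1$. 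Thus the green set must also be indifferent, i.e.\ $\cost = 1 - \ptf$. Together these yield $\pta = \cost$ and $\ptf = 1 - \cost$, as claimed, while $\ptp$ is left unconstrained here (it is fixed separately via \cref{prop:MustardSet} and splits the regimes in \cref{thm:WaitingEquilibria}).

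The main obstacle is the mutual dependence between the white and green sets: unlike the isolated sets analyzed earlier, neither threshold is exogenous, so the strategies cannot be read off by one-shot backward induction within a single set but must be solved simultaneously as a fixed point of the coupled best-response map. Related care is needed at the boundary $\cost = 0$, where the indifference regions degenerate and additional consistent profiles appear; I would handle this either by restricting to the generic regime $\cost > 0$ or by checking that the ``waiting dominates'' premise is not strict there, so that the unique consistent profile in scope is the one placing both sets simultaneously at indifference.
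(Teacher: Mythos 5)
Your proposal is correct and follows essentially the same route as the paper: derive $\po = 0$ from the dominance premise, substitute it into the green-set threshold of \cref{prop:GreenSet} to obtain the simplified condition $\cost \lessgtr 1 - \ptf$, couple this with the white-set condition $\cost \lessgtr \pta$ from \cref{prop:WhiteSet}, and rule out every non-indifferent combination by contradiction, leaving $\pta = \cost$ and $\ptf = 1 - \cost$. The only differences are organizational — you structure the case analysis around the white-set trichotomy rather than enumerating all nine strategy pairs, you dispose of the case the paper handles via its waiting-utility constraint by the direct impossibility $\ptf > 1$, and you flag the degenerate $\cost = 0$ boundary, which the paper leaves implicit.
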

    \begin{proof}
        Given the current assumption that waiting dominates early voting, we analyze all possible strategies for the case where $\qo > 0$, as given by \cref{eq:P2aWait,prop:MustardSet,prop:WhiteSet}.
        We show that the conditions implied by for all but one strategy allow us to reach contradictions.

        Our assumption that waiting dominates early voting implies $\po = 0$, and that $\Uearly - \Unotearly < 0$.
        By substituting both into \cref{cor:EarlyWaitDiff}, we can constrain the feasible voting strategies:
        \begin{equation}
            \label{eq:WaitConstraint}
            (
            (\pta + \ptf - 2 \cdot \ptp) \frac{\cost}{2}
            + \pta - 1
            ) \qo
            + (\ptp-1)\cost + (\qt - 1) \pta + 1
            < 0
        \end{equation}
        Furthermore, substituting $\po = 0$ into \cref{prop:GreenSet} allows to simplify $\pta$:
        \begin{equation}
            \label{eq:P2aWait}
            \pta
            =
            \begin{cases}
                1,                   & \cost < 1 - \ptf
                \\
                0,                   & \cost > 1 - \ptf
                \\
                \text{indifference}, & \cost = 1 - \ptf
            \end{cases}
        \end{equation}
        Armed with \cref{eq:WaitConstraint,eq:P2aWait}, we tackle each possible strategy.
        \begin{enumerate}[leftmargin=*]
            \item If $\pta = 1$ and $\ptf = 1$ dominate, then by \cref{prop:WhiteSet} we get that $\cost < \pta = 1$, and similarly from \cref{eq:P2aWait} we obtain that $\cost < 1 - \ptf = 0$.
                  Both imply that setting $\pta = 1$ and $\ptf = 1$ is only rational when $\cost < 0$, but this is outside our feasible parameter range.

            \item If $\pta = 1$ and $\ptf = 0$ dominate, then \cref{prop:WhiteSet,eq:P2aWait} correspondingly tell us that $c > \pta = 1$ and $1 = 1 - \ptf > c$, implying a contradiction: $1 > c > 1$.

            \item If $\pta = 1$ dominates and voters are indifferent \gls{wrt} $\ptf$, then as before,  we use \cref{prop:WhiteSet,eq:P2aWait} to reach a contradiction: $1 = \pta = \cost < 1 - \ptf \le 1$.

            \item If $\pta = 0$ and $\ptf = 1$ dominate, then \cref{prop:WhiteSet,eq:P2aWait} allow us to arrive at a contradiction: $0 = 1 - \ptf < c < \pta = 0$.

            \item If $\pta = 0$ and $\ptf = 0$ dominate, then from \cref{eq:P2aWait}, we deduce that $\cost > 1 - \ptf = 1$, which is outside our feasible region for costs.

            \item If $\pta = 0$ dominates and voters are indifferent \gls{wrt} $\ptf$, then \cref{prop:WhiteSet,eq:P2aWait} imply: $0 = \pta = \cost < 1 - \ptf$.
                  Substituting this in \cref{eq:WaitConstraint} gives us:
                  \begin{align*}
                      0
                       &
                      >
                      (
                      (\pta + \ptf - 2 \cdot \ptp) \frac{\cost}{2}
                      + \pta - 1
                      ) \qo
                      + (\ptp-1)\cost + (\qt - 1) \pta + 1
                      \nonumber \\&
                      =
                      (
                      (\pta + \ptf - 2 \cdot \ptp)\cdot0
                      + 0 - 1
                      ) \qo
                      + (\ptp-1)\cdot0 + (\qt - 1)\cdot0 + 1
                      \nonumber \\&
                      =
                      1 - \qo
                  \end{align*}
                  But, because $\qo \in \left[0,1\right]$, we reach a contradiction:
                  $0 > 1 - \qo \geq 0$.

            \item If voters are indifferent \gls{wrt} $\pta$ and $\ptf = 1$ dominates, then \cref{prop:WhiteSet,eq:P2aWait} and the fact that $\pta \in \left[0,1\right]$ produce a contradiction: $\pta < \cost = 1 - \ptf = 0$.

            \item If voters are indifferent \gls{wrt} $\pta$ and $\ptf = 0$ dominates, then \cref{eq:P2aWait} implies $\cost = 1 - \ptf = 1$, which is outside the feasible regime for $\cost$.

            \item If voters are indifferent \gls{wrt} to both $\pta$ and $\ptf$, then we get from \cref{prop:WhiteSet,eq:P2aWait} that $\pta = \cost = 1 - \ptf$.
                  Thus, we deduce that $\pta = \cost$ and $\ptf = 1 - \cost$.
        \end{enumerate}
        In total, the only feasible strategy is the one in which $\pta = \cost$ and $\ptf = 1 - \cost$.
    \end{proof}

    Now, we substitute the strategy given by \cref{prop:WaitingIndifference} in \cref{eq:WaitConstraint}:
    \begin{equation}
        \label{eq:EarlyDominatesFinalIndifference}
        \left(\ptp\left(1 - \qo\right) - 2 + 1.5\qo + \qt\right)\cost + 1 - \qo < 0
    \end{equation}
    Given this new constraint, we analyze each option given by \cref{prop:MustardSet} for $\ptp$, and characterize the feasible parameter regimes that admit each via a series of results.
    The proofs for the following results are given in \cref{sec:Proofs}.

    \propWaitingPtpZero*

    \propWaitingPtpOne*
\end{proof}

\propWaitingPtpZero*
\begin{proof}
    According to \cref{prop:MustardSet}, our choice of $\ptp$ imposes two conditions: $\qo < 1$ and $\cost > 1 + \frac{\qt}{2 \left(\qo - 1\right)} = \frac{2\qo + \qt - 2}{2 \left(\qo - 1\right)}$.
    We substitute $\ptp = 0$ in \cref{eq:EarlyDominatesFinalIndifference}, and get:
    $\cost\left(1.5\qo + \qt - 2\right) < \qo - 1$.
    This implies that: $\cost > \frac{\qo - 1}{1.5\qo + \qt - 2}$, where the direction of the inequality was flipped because both $\qo + \qt \le 1$ and $\qo < 1$, allowing us to deduce: $1.5\qo + \qt - 2 \le -\frac{1}{2}$.
    Together with the condition imposed by \cref{prop:MustardSet}, we get:
    $\cost \in \left(
        \max\left(
            \frac{\qo - 1}{1.5\qo + \qt - 2}
            ,
            \frac{2\qo + \qt - 2}{2 \left(\qo - 1\right)}
            \right)
        ,
        1
        \right)$.

    This range is non-empty when either $\frac{\qo - 1}{1.5\qo + \qt - 2} \le \frac{2\qo + \qt - 2}{2 \left(\qo - 1\right)} < 1$, or $\frac{2\qo + \qt - 2}{2 \left(\qo - 1\right)} \le \frac{\qo - 1}{1.5\qo + \qt - 2} < 1$.

    First, examine the case where: $\frac{\qo - 1}{1.5\qo + \qt - 2} < 1$.
    This requirement is equivalent to $\qo -1 > 1.5\qo + \qt - 2$, where the direction of the inequality is flipped as $1.5\qo + \qt - 2 \le -\frac{1}{2} < 0$.
    This requirement implies that $1 > 0.5\qo + \qt$, which is indeed true: by definition $\qo + \qt \le 1 $, and in the current case we furthermore have $\qo > 0$.

    In the other case, we have the following condition: $\frac{2\qo + \qt - 2}{2 \left(\qo - 1\right)} < 1$.
    When multiplying both sides of by $\qo -1$, we flip the direction of the inequality due to our assumption that $\qo \in \left( 0, 1 \right)$, giving us: $2\qo + \qt - 2 > 2 \left(\qo - 1\right)$.
    This implies that: $\qt > 0$.

    Due to the asymmetry between the feasible $\qt$ values for the two cases, we characterize in \cref{prop:WaitingPtpZeroCostConstraintFirst,prop:WaitingPtpZeroCostConstraintSecond} the regimes for which each one is satisfied.
    Note that these results are slightly more general as they are reused for additional proofs.
    In particular, \cref{prop:WaitingPtpZeroCostConstraintFirst} does not impose the restriction $\qo > 0$, which is instead accounted for only afterward.

    \begin{restatable}[]{proposition}{propWaitingPtpZeroCostConstraintFirst}
        \label{prop:WaitingPtpZeroCostConstraintFirst}
        We have that $\frac{\qo - 1}{1.5\qo + \qt - 2} < \frac{2\qo + \qt - 2}{2 \left(\qo - 1\right)}$ is satisfied when $\qt \in \left[0, 2 - \sqrt{2} \right)$, and $\qo \in \left[0, \frac{3-3.5\qt - \sqrt{8.25\qt^2 -5\qt + 1}}{2}\right) \cup \left(\frac{3-3.5\qt + \sqrt{8.25\qt^2 -5\qt + 1}}{2}, 1\right)$.
    \end{restatable}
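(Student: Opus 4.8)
The plan is to clear denominators, after a careful sign analysis, so that the two-sided rational inequality collapses to a single quadratic inequality in $\qo$ whose roots are exactly the threshold expressions in the statement. First I would fix the signs of the four linear factors over the feasible region $\qo \in [0,1)$, $\qt \in [0,1]$, $\qo + \qt \le 1$. Writing $1.5\qo + \qt - 2 = 0.5\qo + (\qo + \qt) - 2 \le 0.5\qo - 1 < 0$ shows the first denominator is negative; similarly $2(\qo - 1) < 0$, and both numerators $\qo - 1$ and $2\qo + \qt - 2 = \qo + (\qo+\qt) - 2$ are negative. Hence each fraction is a ratio of two negatives, i.e.\ positive, and the product of the two denominators is strictly positive.

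Because that product is positive, I can clear it without reversing the inequality, turning $\frac{\qo-1}{1.5\qo+\qt-2} < \frac{2\qo+\qt-2}{2(\qo-1)}$ into $2(\qo-1)^2 < (2\qo+\qt-2)(1.5\qo+\qt-2)$. Expanding both sides and moving everything to one side yields the monic quadratic condition $\qo^2 + (3.5\qt - 3)\qo + (\qt^2 - 4\qt + 2) > 0$. This is the heart of the computation, and the main obstacle is the bookkeeping: tracking the bilinear $\qo\qt$ terms and verifying that the $\qo^2$, $\qo$, and constant coefficients reduce exactly to $1$, $3.5\qt - 3$, and $\qt^2 - 4\qt + 2$ (one must be disciplined about the direction of the inequality surviving the clearing step).

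I would then read off the roots $r_\pm = \tfrac{1}{2}(3 - 3.5\qt \pm \sqrt{8.25\qt^2 - 5\qt + 1})$, observing that the radicand matches the stated discriminant. This radicand is positive for every $\qt$, since $8.25\qt^2 - 5\qt + 1$ has negative discriminant $25 - 33 = -8$ and positive leading coefficient; hence $r_-$ and $r_+$ are always real and distinct. As the quadratic in $\qo$ is monic, it is positive exactly outside $[r_-, r_+]$, so the inequality holds precisely when $\qo < r_-$ or $\qo > r_+$, which gives the claimed union $[0, r_-) \cup (r_+, 1)$ once intersected with the feasible range $[0,1)$.

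Finally I would justify the restriction $\qt \in [0, 2 - \sqrt{2})$ as the regime in which the lower interval $[0, r_-)$ is non-empty, so that the description is not vacuous where it is reused in \cref{prop:WaitingPtpZero}. By Vieta's formulas the roots satisfy $r_- r_+ = \qt^2 - 4\qt + 2$ and $r_- + r_+ = 3 - 3.5\qt$; both roots are positive exactly when both the product and the sum are positive. The sum is positive for all $\qt < 6/7$, while the product is positive for $\qt < 2 - \sqrt{2}$ (the roots of $\qt^2 - 4\qt + 2$ being $2 \pm \sqrt{2}$, with $2 + \sqrt{2} > 1$ lying outside the feasible range). Combining these gives $r_- > 0 \iff \qt < 2 - \sqrt{2}$, which completes the characterization.
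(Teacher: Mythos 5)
Your proposal is correct and follows essentially the same route as the paper's proof: fix signs of the linear factors over the feasible region, clear denominators to obtain the monic quadratic $\qo^2 + (3.5\qt-3)\qo + \qt^2 - 4\qt + 2 > 0$, read off the roots $r_\pm$, and conclude the inequality holds outside $[r_-, r_+]$. Your two local refinements — showing the radicand $8.25\qt^2 - 5\qt + 1$ is always positive via its negative discriminant rather than the paper's completing-the-square bound, and deriving $r_- > 0 \iff \qt < 2-\sqrt{2}$ via Vieta's formulas rather than squaring $\sqrt{8.25\qt^2-5\qt+1} < 3-3.5\qt$ directly — are both valid and arguably cleaner, but do not change the structure of the argument.
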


    \begin{restatable}[]{proposition}{propWaitingPtpZeroCostConstraintSecond}
        \label{prop:WaitingPtpZeroCostConstraintSecond}
        We have that $\frac{\qo - 1}{1.5\qo + \qt - 2} \geq \frac{2\qo + \qt - 2}{2 \left(\qo - 1\right)}$ is satisfied when $\qt \in \left[0,1\right]$, and
        $\qo \in \left[\frac{3-3.5\qt - \sqrt{8.25\qt^2 -5\qt + 1}}{2}, \frac{3-3.5\qt + \sqrt{8.25\qt^2 -5\qt + 1}}{2} \right]$.
    \end{restatable}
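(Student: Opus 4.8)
The plan is to clear denominators and reduce the rational inequality to a single quadratic inequality in $\qo$, then read off the solution interval. First I would record the sign of each denominator in the feasible regime. Since we operate where $\qo < 1$ and $\qo + \qt \le 1$, the surrounding proof of \cref{prop:WaitingPtpZero} already established that $1.5\qo + \qt - 2 \le -\frac12 < 0$, while clearly $\qo - 1 < 0$. Hence the product $D \define 2(1.5\qo + \qt - 2)(\qo - 1)$ is strictly positive, so multiplying both sides of $\frac{\qo-1}{1.5\qo+\qt-2} \ge \frac{2\qo+\qt-2}{2(\qo-1)}$ by $D$ preserves the direction of the inequality and yields the polynomial form $2(\qo-1)^2 \ge (2\qo+\qt-2)(1.5\qo+\qt-2)$.

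Next I would expand both sides and collect terms. The left side expands to $2\qo^2 - 4\qo + 2$ and the right to $3\qo^2 + 3.5\qo\qt - 7\qo + \qt^2 - 4\qt + 4$; subtracting and negating turns the inequality into $\qo^2 + (3.5\qt - 3)\qo + (\qt^2 - 4\qt + 2) \le 0$. Thus, within the feasible regime, the stated rational inequality $\ge$ is equivalent to this quadratic in $\qo$ being at or below zero, which is the complement of the strict-inequality case handled in \cref{prop:WaitingPtpZeroCostConstraintFirst}.

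Finally I would analyze the quadratic. Its leading coefficient is positive, so it is $\le 0$ exactly on the closed interval between its two real roots. The discriminant is $(3.5\qt-3)^2 - 4(\qt^2-4\qt+2) = 8.25\qt^2 - 5\qt + 1$, and this quadratic in $\qt$ has discriminant $25 - 4\cdot 8.25 = -8 < 0$ with positive leading coefficient, hence is strictly positive for every $\qt$; the roots are therefore always real and distinct. Solving gives the roots $\frac{3 - 3.5\qt \pm \sqrt{8.25\qt^2 - 5\qt + 1}}{2}$, which match the endpoints in the statement verbatim, so the inequality holds precisely when $\qo$ lies in the claimed interval, for every $\qt \in [0,1]$.

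The main obstacle is the sign bookkeeping when clearing denominators: one must confirm that both denominators are strictly negative throughout the feasible regime so that their product is positive and the $\ge$ is preserved rather than flipped, and that neither denominator vanishes on the relevant range. This is exactly where I would lean on the bound $1.5\qo + \qt - 2 \le -\frac12$ inherited from the enclosing proof. The remaining steps — the expansion and the discriminant computation — are routine, and the fact that the computed roots coincide with the stated endpoints serves as a self-check on the algebra.
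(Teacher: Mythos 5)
Your reduction is correct and is exactly the paper's route for the core step: both denominators are negative on the relevant regime, so clearing them preserves the $\ge$ and yields the quadratic $\qo^2 + (3.5\qt - 3)\qo + \qt^2 - 4\qt + 2 \le 0$, whose roots are the stated endpoints; your discriminant-of-the-discriminant argument ($25 - 33 = -8 < 0$) for the reality of the roots is a slightly cleaner substitute for the paper's completing-the-square bound. The difference is that the paper's proof of this proposition devotes most of its length to something you skip entirely: a three-case analysis verifying that the interval $\left[\tfrac{3-3.5\qt - \sqrt{8.25\qt^2 -5\qt + 1}}{2}, \tfrac{3-3.5\qt + \sqrt{8.25\qt^2 -5\qt + 1}}{2}\right]$ has non-empty intersection with $\left[0,1\right]$ for every $\qt \in \left[0,1\right]$, which is what makes the statement non-vacuous and usable downstream in \cref{prop:WaitingPtpZero}. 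For the literal sufficiency claim your argument is enough, but note one caveat it shares with the omitted analysis: your sign bookkeeping (both denominators negative) only covers the portion of the stated interval lying in $[0,1)$ with $\qo + \qt \le 1$; at $\qt = 0$, for instance, the interval is $[1,2]$ and the right-hand denominator vanishes at $\qo = 1$. The paper absorbs this by intersecting with the feasible parameter region afterward, so you should at least state explicitly that your equivalence is asserted only on that feasible portion.
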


    Proofs for both results are given in \cref{sec:Proofs}.
    Intuitively, we use the various requirements to constrain our parameter space, resulting in systems of several equations which we solve.
    By using them, we fully characterize when $\frac{\qo - 1}{1.5\qo + \qt - 2} < \frac{2\qo + \qt - 2}{2 \left(\qo - 1\right)}$ holds (via \cref{prop:WaitingPtpZeroCostConstraintFirst}), and similarly, where we have $\frac{\qo - 1}{1.5\qo + \qt - 2} \geq \frac{2\qo + \qt - 2}{2 \left(\qo - 1\right)}$ (via \cref{prop:WaitingPtpZeroCostConstraintSecond}).

    Next, we account for the constraints which were left out of the aforementioned two results, to ensure their generality.
    In particular, we have to consider that $\qo + \qt \in \left[0,1\right]$, and also consider the requirement of the current case that $\qo \in \left( 0, 1 \right)$.
    For the former constraint, one can verify using the extreme values of our range for $\qt$ and the resulting ranges for $\qo$ that the range of values $\qo \in \left(\frac{3-3.5\qt + \sqrt{8.25\qt^2 -5\qt + 1}}{2}, 1\right)$ is infeasible.
    Furthermore, for the latter constraint, one can account for it by intersecting it with our resulting parameter regions.
    Finally, the characterization stated in \cref{prop:WaitingPtpZero} is obtained.
\end{proof}

\propWaitingPtpZeroCostConstraintFirst*
\begin{proof}
    For the given range of $\qo$ values, we have that $\qo \in \left[0,1\right)$.
    Thus, we have $1.5\qo + \qt - 2 < 0$ and $\qo - 1 < 0$, implying that $\frac{\qo - 1}{1.5\qo + \qt - 2} < \frac{2\qo + \qt - 2}{2 \left(\qo - 1\right)}$ is equivalent to:
    \begin{align*}
        0
         &
        <
        \left(2\qo + \qt - 2\right)\left(1.5\qo + \qt - 2\right)
        -
        2 \left(\qo - 1\right)^2
        \nonumber \\&
        =
        \qo^2 + \left(3.5\qt - 3\right)\qo + \qt^2 - 4\qt + 2
    \end{align*}
    The roots of this polynomial are obtained at:
    \begin{align*}
        \qo
         &
        =
        \frac{
            3-3.5\qt \pm \sqrt{
                (3.5\qt - 3)^2 - 4(\qt^2 - 4\qt + 2)
            }
        }{2}
        \nonumber \\&
        =
        \frac{
            3-3.5\qt \pm \sqrt{
                12.25\qt^2 - 21\qt + 9 - 4\qt^2 + 16\qt - 8
            }
        }{2}
        \nonumber \\&
        =
        \frac{1}{2}\left(3-3.5\qt \pm \sqrt{8.25\qt^2 - 5\qt + 1}\right)
    \end{align*}
    We always have two real roots for $\qo$, as:
    \begin{align}
        \label{eq:EarlyDominatesFinalIndifferenceConditionSqrtIsPositive}
        8.25\qt^2 -5\qt + 1
        =
        \left(\sqrt{8.25}\qt - 1\right)^2 + \left(\sqrt{33}-5\right)\qt
        >
        \left(\sqrt{25}-5\right)\qt
        \geq
        0
    \end{align}
    Thus far, our feasible range for $\qo$ is:
    \begin{equation}
        \label{eq:EarlyDominatesQo}
        \qo \in
        \left(
        0
        ,
        \frac{1}{2}\left(
            3-3.5\qt
            -
            \sqrt{8.25\qt^2 -5\qt + 1}
            \right)
        \right)
        \bigcup
        \left(
        \frac{1}{2}\left(
            3-3.5\qt
            +
            \sqrt{8.25\qt^2 -5\qt + 1}
            \right)
        ,
        1
        \right)
    \end{equation}
    The left segment of \cref{eq:EarlyDominatesQo} is non-empty when
    $0 < \frac{1}{2}\left(
        3-3.5\qt
        -
        \sqrt{8.25\qt^2 -5\qt + 1}
        \right)$, or equivalently:
    $
        \frac{1}{2}\sqrt{8.25\qt^2 -5\qt + 1}
        <
        \frac{1}{2}\left( 3-3.5\qt \right)
    $.
    Per \cref{eq:EarlyDominatesFinalIndifferenceConditionSqrtIsPositive}, the term $8.25\qt^2 -5\qt + 1$ is strictly positive, while the term $3-3.5\qt$ is non-negative for $\qt \le \frac{3}{3.5}$, thus the inequality does not hold for $\qt > \frac{3}{3.5}$.
    Given that $\qt \leq \frac{3}{3.5}$, we can raise the two sides by a power of two:
    \begin{align*}
        0
         &
        <
        \frac{1}{4} \left(
        \left( 3-3.5\qt \right)^2
        -
        \left( 8.25\qt^2 -5\qt + 1 \right)
        \right)
        \nonumber \\&
        =
        \frac{12.25\qt^2 - 21\qt + 9 - 8.25\qt^2 +5\qt - 1}{4}
        \nonumber \\&
        =
        \qt^2 - 4\qt + 2
    \end{align*}
    This equals zero when:
    $
        \qt
        % &
        =
        \frac{1}{2}\cdot\left(4 \pm \sqrt{16-4\cdot2}\right)
        % \nonumber\\&
        =
        \frac{1}{2}\cdot\left(4 \pm 2\sqrt{2}\right)
        % \nonumber\\&
        =
        2 \pm \sqrt{2}
    $, which implies that either $\qt < 2-\sqrt{2}$, or $\qt > 2+\sqrt{2}$.
    But, $\qt \le 1 < 2+\sqrt{2}$, so we remain with $\qt < 2-\sqrt{2}$.

    The right segment of \cref{eq:EarlyDominatesQo} is non-empty when:
    $\frac{1}{2}\left(
        3-3.5\qt
        +
        \sqrt{8.25\qt^2 -5\qt + 1}
        \right) < 1$, which can be simplified to:
    $\sqrt{8.25\qt^2 -5\qt + 1} < 3.5\qt - 1$.
    From \cref{eq:EarlyDominatesFinalIndifferenceConditionSqrtIsPositive}, the left-hand side is strictly positive, implying that the inequality does not hold for $\qt \le \frac{1}{3.5} \approx 0.28$.
    When $\qt > \frac{1}{3.5} \approx 0.28$, then both sides are non-negative and we can raise them to the power of 2:
    $8.25\qt^2 -5\qt + 1 < \left(3.5\qt - 1\right)^2$, which can be simplified to:
    \begin{align*}
        0
         &
        <
        12.25\qt^2 - 7\qt + 1 - \left( 8.25\qt^2 -5\qt + 1 \right)
        \nonumber \\&
        =
        4\qt^2-2\qt
        \nonumber \\&
        =
        4\qt\left(\qt - \frac{1}{2}\right)
    \end{align*}
    This implies $\qt >\frac{1}{2}$, meaning that for this case we have $\qt \in \left(\frac{1}{2}, 1\right)$.

    We need at least one segment of \cref{eq:EarlyDominatesQo} to be non-empty, so we suffice with $\qt < 2-\sqrt{2} \approx 0.58$.
    In total, our feasible range of parameters is:
    \begin{align*}
        \qo
         &
        \in
        \left[
        0
        ,
        \frac{1}{2}\left(
        3-3.5\qt
        -
        \sqrt{8.25\qt^2 -5\qt + 1}
        \right)
        \right)
        \bigcup
        \left(
        \frac{1}{2}\left(
            3-3.5\qt
            +
            \sqrt{8.25\qt^2 -5\qt + 1}
            \right)
        ,
        1
        \right)
        \\
        \qt
         &
        \in \left[0, 2 - \sqrt{2} \right)
    \end{align*}
\end{proof}

\propWaitingPtpZeroCostConstraintSecond*
\begin{proof}
    Similar reasoning to the proof of \cref{prop:WaitingPtpZeroCostConstraintFirst} allows us to deduce that $\qo$ values must be in the following range:
    \begin{equation}
        \label{eq:WaitingPtpZeroCostConstraintSecondRangeQo}
        \qo
        \in
        \left[
            \frac{1}{2}\left(
            3-3.5\qt
            -
            \sqrt{8.25\qt^2 -5\qt + 1}
            \right)
            ,
            \frac{1}{2}\left(
            3-3.5\qt
            +
            \sqrt{8.25\qt^2 -5\qt + 1}
            \right)
            \right]
    \end{equation}
    In particular, we get from \cref{eq:EarlyDominatesFinalIndifferenceConditionSqrtIsPositive} that this range is non-empty for $\qt \in \left[0,1\right]$, i.e.
    \begin{equation*}
        \frac{1}{2}\left(
        3-3.5\qt
        -
        \sqrt{8.25\qt^2 -5\qt + 1}
        \right)
        <
        \frac{1}{2}\left(
        3-3.5\qt
        +
        \sqrt{8.25\qt^2 -5\qt + 1}
        \right).
    \end{equation*}
    But, for $\qo$ to be feasible, we also have to require that either one of \cref{eq:WaitingPtpZeroCostConstraintSecondLe1,eq:WaitingPtpZeroCostConstraintSecondGe0,eq:WaitingPtpZeroCostConstraintSecondSurround} must hold.
    \begin{equation}
        \label{eq:WaitingPtpZeroCostConstraintSecondLe1}
        0 \le \frac{1}{2}\left(3-3.5\qt - \sqrt{8.25\qt^2 -5\qt + 1}\right) \le 1
    \end{equation}
    \begin{equation}
        \label{eq:WaitingPtpZeroCostConstraintSecondGe0}
        0 \le \frac{1}{2}\left(3-3.5\qt + \sqrt{8.25\qt^2 -5\qt + 1}\right) \le 1
    \end{equation}
    \begin{equation}
        \label{eq:WaitingPtpZeroCostConstraintSecondSurround}
        \frac{1}{2}\left(3-3.5\qt - \sqrt{8.25\qt^2 -5\qt + 1}\right)
        \le 0 < 1 \le
        \frac{1}{2}\left(3-3.5\qt + \sqrt{8.25\qt^2 -5\qt + 1}\right)
    \end{equation}

    \paragraphNoSkip{Satisfying \cref{eq:WaitingPtpZeroCostConstraintSecondGe0}}
    For \cref{eq:WaitingPtpZeroCostConstraintSecondGe0} to hold, we can equivalently require that both \cref{eq:WaitingPtpZeroCostConstraintSecondGe0Ge0,eq:WaitingPtpZeroCostConstraintSecondGe0Le1} hold:
    \begin{equation}
        \label{eq:WaitingPtpZeroCostConstraintSecondGe0Ge0}
        \sqrt{8.25\qt^2 -5\qt + 1} \geq 3.5\qt - 3
    \end{equation}
    \begin{equation}
        \label{eq:WaitingPtpZeroCostConstraintSecondGe0Le1}
        \sqrt{8.25\qt^2 -5\qt + 1} \le 3.5\qt - 1
    \end{equation}
    For \cref{eq:WaitingPtpZeroCostConstraintSecondGe0Le1}, the proof of \cref{prop:WaitingPtpZeroCostConstraintFirst} shows it is satisfied when $\qt \in \left[\frac{1}{2},1\right]$.
    The condition given in \cref{eq:WaitingPtpZeroCostConstraintSecondGe0Ge0} is satisfied when $\qt \in \left[0,\frac{3}{3.5}\right]$ due to the strict positivity of the left-hand side, as shown by \cref{eq:EarlyDominatesFinalIndifferenceConditionSqrtIsPositive}.
    For $\qt > \frac{3}{3.5}$, this is satisfied when:
    \begin{align}
        \label{eq:WaitingPtpZeroCostConstraintSecondGe0GtFrac}
        0
         &
        \geq
        \left(3.5\qt - 3\right)^2
        -
        \left(8.25\qt^2 -5\qt + 1\right)
        \nonumber \\&
        =
        \left(12.25\qt^2 - 21\qt + 9\right)
        -
        \left(8.25\qt^2 -5\qt + 1\right)
        \nonumber \\&
        =
        4\qt^2 - 16\qt + 8
        \nonumber \\&
        =
        4\cdot\left(\qt^2 - 4\qt + 2\right)
    \end{align}
    From the proof of \cref{prop:WaitingPtpZeroCostConstraintFirst}, we get that \cref{eq:WaitingPtpZeroCostConstraintSecondGe0GtFrac} is satisfied when $\qt \in \left[2-\sqrt{2}, 2+\sqrt{2}\right]$.
    But, $0.58 \approx 2-\sqrt{2} < \frac{3}{3.5} \approx 0.85$, and furthermore we must have $\qt \le 1 < 2+\sqrt{2} \approx 3.41$.
    Therefore, we find that \cref{eq:WaitingPtpZeroCostConstraintSecondGe0Ge0} holds when:
    $\qt \in \left[0,\frac{3}{3.5}\right]\bigcup\left(\frac{3}{3.5}, 1\right] = \left[0,1\right]$.

    Recall that the satisfiability of \cref{eq:WaitingPtpZeroCostConstraintSecondGe0} hinges on both \cref{eq:WaitingPtpZeroCostConstraintSecondGe0Ge0,eq:WaitingPtpZeroCostConstraintSecondGe0Le1} holding.
    Thus, we intersect their feasible parameter ranges.
    In total, we get that \cref{eq:WaitingPtpZeroCostConstraintSecondGe0} holds when $\qt \in \left[\frac{1}{2},1\right]$.

    \paragraphNoSkip{Satisfying \cref{eq:WaitingPtpZeroCostConstraintSecondLe1}}
    For \cref{eq:WaitingPtpZeroCostConstraintSecondLe1} to hold, we can require that both of the following are satisfied:
    \begin{equation}
        \label{eq:WaitingPtpZeroCostConstraintSecondLe1Ge0}
        \sqrt{8.25\qt^2 -5\qt + 1} \le 3 - 3.5\qt
    \end{equation}
    \begin{equation}
        \label{eq:WaitingPtpZeroCostConstraintSecondLe1Le1}
        \sqrt{8.25\qt^2 -5\qt + 1} \geq 1 - 3.5\qt
    \end{equation}
    For \cref{eq:WaitingPtpZeroCostConstraintSecondLe1Ge0}, from the proof of \cref{prop:WaitingPtpZeroCostConstraintFirst} we get that the condition holds for $\qt \le 2-\sqrt{2}$.
    The constraint given in \cref{eq:WaitingPtpZeroCostConstraintSecondLe1Le1} is always satisfied when $\qt \geq \frac{1}{3.5}$ due to \cref{eq:EarlyDominatesFinalIndifferenceConditionSqrtIsPositive}.
    With respect to $\qt < \frac{1}{3.5}$, both sides are positive, allowing us to raise them to the power of two, and solve the resulting inequality:
    \begin{align*}
        0
         &
        \le
        8.25\qt^2 -5\qt + 1 - \left(1 - 3.5\qt\right)^2
        \\&
        =
        8.25\qt^2 -5\qt + 1 - \left(1 - 7\qt + 12.25\qt^2\right)
        \\&
        =
        -4\qt^2 + 2\qt
        \\&
        =
        -4\qt \left(\qt - \frac{1}{2}\right)
    \end{align*}
    This holds for $\qt \in \left[0,\frac{1}{2}\right]$.
    As \cref{eq:WaitingPtpZeroCostConstraintSecondLe1Le1} holds for $\qt \geq \frac{1}{3.5}$ and $\frac{1}{3.5} < \frac{1}{2}$, thus we have $\qt \in \left[0,1\right]$.

    In total, note that \cref{eq:WaitingPtpZeroCostConstraintSecondLe1} is satisfied only when both \cref{eq:WaitingPtpZeroCostConstraintSecondLe1Ge0,eq:WaitingPtpZeroCostConstraintSecondLe1Le1} are satisfied.
    By intersecting their feasible value ranges, we find that \cref{eq:WaitingPtpZeroCostConstraintSecondLe1} holds for $\qt \in \left[0,2-\sqrt{2}\right]$.

    \paragraphNoSkip{Satisfying \cref{eq:WaitingPtpZeroCostConstraintSecondSurround}}
    The condition given in \cref{eq:WaitingPtpZeroCostConstraintSecondSurround} is equivalent to requiring that both:
    \begin{equation}
        \label{eq:WaitingPtpZeroCostConstraintSecondSurroundLe}
        \sqrt{8.25\qt^2 -5\qt + 1} \geq 3-3.5\qt
    \end{equation}
    \begin{equation}
        \label{eq:WaitingPtpZeroCostConstraintSecondSurroundGe}
        \sqrt{8.25\qt^2 -5\qt + 1} \geq 3.5\qt - 1
    \end{equation}
    The condition of \cref{eq:WaitingPtpZeroCostConstraintSecondSurroundLe} holds for $\qt \geq \frac{3}{3.5}$, due to \cref{eq:EarlyDominatesFinalIndifferenceConditionSqrtIsPositive}.
    For $\qt < \frac{3}{3.5}$, reasoning similar to that given in the proof of \cref{prop:WaitingPtpZeroCostConstraintFirst} shows that it is satisfied for $\qt \in \left[2-\sqrt{2}, 2+\sqrt{2}\right]$.
    When combined with the previous results, we deduce that the condition holds for $\qt \in \left[2-\sqrt{2}, 1\right]$.
    With respect to \cref{eq:WaitingPtpZeroCostConstraintSecondSurroundGe}, again one can deduce from \cref{eq:EarlyDominatesFinalIndifferenceConditionSqrtIsPositive} that it is satisfied for $\qt \le \frac{1}{3.5}$, while the proof of \cref{prop:WaitingPtpZeroCostConstraintFirst} shows that the condition holds for $\qt \in \left[0, \frac{1}{2}\right]$.
    In total, observe that the intersection of the feasible parameters for \cref{eq:WaitingPtpZeroCostConstraintSecondSurroundLe,eq:WaitingPtpZeroCostConstraintSecondSurroundGe} is empty.

    \paragraphNoSkip{Feasible parameters for $\qo$}
    Finally, recall that the range \cref{eq:WaitingPtpZeroCostConstraintSecondRangeQo} includes feasible $\qo$ values when either one of \cref{eq:WaitingPtpZeroCostConstraintSecondLe1,eq:WaitingPtpZeroCostConstraintSecondGe0,eq:WaitingPtpZeroCostConstraintSecondSurround} is satisfied.
    Note that $0.58 \approx 2-\sqrt{2} > 0.5$, thus we get that \cref{eq:WaitingPtpZeroCostConstraintSecondRangeQo} is non-empty for $\qt \in \left[0,1\right]$.
    This implies that $\frac{\qo - 1}{1.5\qo + \qt - 2} \geq \frac{2\qo + \qt - 2}{2 \left(\qo - 1\right)}$ is satisfied when:
    \begin{align*}
        \qo
         &
        \in
        \left[
            \frac{1}{2}\left(
            3-3.5\qt
            -
            \sqrt{8.25\qt^2 -5\qt + 1}
            \right)
            ,
            \frac{1}{2}\left(
            3-3.5\qt
            +
            \sqrt{8.25\qt^2 -5\qt + 1}
            \right)
            \right]
        \\
        \qt
         &
        \in
        \left[0,1\right]
    \end{align*}
\end{proof}

\propWaitingPtpOne*
\begin{proof}
    From \cref{prop:MustardSet} and our assumption that $\ptp = 1$, we obtain two constraints: $\cost < 1+\frac{\qt}{2\left(\qo-1\right)}$ and $\qo < 1$.
    By substituting $\ptp = 1$ in \cref{eq:EarlyDominatesFinalIndifference}, we get $\left(\frac{\qo}{2} + \qt - 1\right)\cost + 1 - \qo < 0$, which we rearrange to: $1 - \qo < \left(1 - \frac{\qo}{2} - \qt\right)\cost$.
    Dividing the expression by $1 - \frac{\qo}{2} - \qt$ does not change the inequality's direction, as $\qo + \qt \in \left[0,1\right]$ and as in the current case we have $\qo > 0$.
    Thus, we deduce that $1 - \frac{\qo}{2} - \qt > 0$, and get a constraint on $\cost$: $\frac{1 - \qo}{1 - \frac{\qo}{2} - \qt} < \cost$.
    To satisfy both the new and the previous constraints, we require: $\frac{1 - \qo}{1 - \frac{\qo}{2} - \qt} < \cost < 1+\frac{\qt}{2\left(\qo-1\right)}$.
    Feasible $\cost$ values exist when:
    \begin{align*}
        0
         &
        <
        1+\frac{\qt}{2\left(\qo-1\right)}
        -
        \left(
        \frac{1 - \qo}{1 - \frac{\qo}{2} - \qt}
        \right)
        \nonumber \\&
        =
        \frac{2\qo + \qt - 2}{2\left(\qo-1\right)}
        -
        \left(
        \frac{1 - \qo}{1 - \frac{\qo}{2} - \qt}
        \right)
        \nonumber \\&
        =
        \frac{
            \left( 2\qo + \qt - 2 \right) \left( 1 - \frac{\qo}{2} - \qt \right)
            -
            2 \left( 1 - \qo \right)^2
        }{2 \left(\qo-1\right) \left( 1 - \frac{\qo}{2} - \qt \right)}
        \nonumber \\&
        =
        \frac{
            2 \left( 1 - \qo \right)^2
            -
            \left( 2\qo + \qt - 2 \right) \left( 1 - \frac{\qo}{2} - \qt \right)
        }{2 \left(1-\qo\right) \left( 1 - \frac{\qo}{2} - \qt \right)}
        \nonumber \\&
        =
        \frac{
            -\qo^2 + \left(2.5\qt + 1\right)\qo + \qt^2 - 3\qt
        }{2 \left(1-\qo\right) \left( 1 - \frac{\qo}{2} - \qt \right)}
    \end{align*}
    The denominator is positive, as $\qo \in \left(0,1\right)$, and as we have already shown that $1 - \frac{\qo}{2} - \qt > 0$.
    So, we only require:
    $0 < -\qo^2 + \left(2.5\qt + 1\right)\qo + \qt^2 - 3\qt$.
    Solving for $\qo$, the roots of the polynomial are:
    \begin{align*}
        \qo
         &
        =
        \frac{
            -\left(2.5\qt + 1\right)
            \pm
            \sqrt{
                \left(2.5\qt + 1\right)^2
                -
                4\cdot\left(-1\right)\cdot\left(\qt^2 - 3\qt\right)
            }
        }{-2}
        \nonumber \\&
        =
        \frac{
            -\left(2.5\qt + 1\right)
            \pm
            \sqrt{6.25\qt^2 + 5\qt + 1 + 4\qt^2 - 12\qt}
        }{-2}
        \nonumber \\&
        =
        \frac{5\qt + 2 \pm \sqrt{41\qt^2 - 28\qt + 4}}{4}
    \end{align*}
    The roots are in $\mathbb{R}$ when $0 \le 41\qt^2 - 28\qt + 4$, and the zeroes of this constraint are obtained at:
    \begin{equation}
        \label{eq:WaitingPtpOneSqrtZeroes}
        \qt
        =
        \frac{
            -(-28)
            \pm
            \sqrt{28^2 - 4\cdot41\cdot4}
        }{2\cdot41}
        =
        \frac{14 \pm 4\sqrt{2}}{41}
    \end{equation}
    For clarity, note that $\frac{14 - 4\sqrt{2}}{41} \approx 0.2$, $\frac{14 + 4\sqrt{2}}{41} \approx 0.48$.
    Thus, our parameter range is currently:
    $\qo \in \left(\frac{5\qt + 2 - \sqrt{41\qt^2 - 28\qt + 4}}{4}, \frac{5\qt + 2 + \sqrt{41\qt^2 - 28\qt + 4}}{4}\right)$, $\qt \in \left[0, \frac{14 - 4\sqrt{2}}{41}\right] \cup \left[\frac{14 + 4\sqrt{2}}{41},1\right]$, $\cost \in \left(\frac{1 - \qo}{1 - \frac{\qo}{2} - \qt}, 1+\frac{\qt}{2\left(\qo-1\right)}\right)$.
    For $\qo$ to be feasible, we require that the current feasible range has a non-empty intersection with $\left[0,1\right]$.
    This is analyzed in \cref{prop:WaitingPtpOneNonEmpty}, with the corresponding proof given in \cref{sec:Proofs}.
    \begin{restatable}[]{proposition}{propWaitingPtpOneNonEmpty}
        \label{prop:WaitingPtpOneNonEmpty}
        $\forall \qt \in \left[0, \frac{14 - 4\sqrt{2}}{41}\right)\cup\left(\frac{1}{2}, 1\right]: \left(\frac{5\qt + 2 - \sqrt{41\qt^2 - 28\qt + 4}}{4}, \frac{5\qt + 2 + \sqrt{41\qt^2 - 28\qt + 4}}{4}\right) \cap \left[0,1\right] \ne \emptyset$
    \end{restatable}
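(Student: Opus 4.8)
The plan is to reduce this set-theoretic non-emptiness claim to two elementary scalar inequalities on the endpoints of the interval, and then dispatch each by a short sign analysis. Write $\qo^- = \frac{5\qt + 2 - \sqrt{41\qt^2 - 28\qt + 4}}{4}$ and $\qo^+ = \frac{5\qt + 2 + \sqrt{41\qt^2 - 28\qt + 4}}{4}$ for the two endpoints. Since the discriminant $41\qt^2 - 28\qt + 4$ vanishes only at $\qt = \frac{14 \pm 4\sqrt{2}}{41}$ (computed in \cref{eq:WaitingPtpOneSqrtZeroes}), and both of these roots lie outside the stated range $\left[0, \frac{14-4\sqrt{2}}{41}\right) \cup \left(\frac12, 1\right]$, the discriminant is strictly positive throughout, so $\qo^- < \qo^+$ and $(\qo^-, \qo^+)$ is a genuine non-degenerate open interval. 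I would then observe that such an interval meets the closed interval $\left[0,1\right]$ if and only if $\qo^- < 1$ and $\qo^+ > 0$: when $\qo^- < 0$ the point $0$ already lies in the intersection, and otherwise any point slightly above $\qo^-$ but below $\min(\qo^+, 1)$ works. Hence it suffices to establish these two inequalities on the whole range.

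The inequality $\qo^+ > 0$ is immediate and holds unconditionally on the range: since $\qt \ge 0$ we have $5\qt + 2 \ge 2$, and the square root is non-negative, so the numerator is at least $2 > 0$.

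For $\qo^- < 1$, I rearrange to the equivalent form $5\qt - 2 < \sqrt{41\qt^2 - 28\qt + 4}$ and split on the two subranges. On $\left[0, \frac{14-4\sqrt{2}}{41}\right)$ we have $\qt < \frac{14-4\sqrt{2}}{41} < \frac{2}{5}$, so the left-hand side $5\qt - 2$ is negative while the right-hand side is non-negative, and the inequality holds trivially. On $\left(\frac12, 1\right]$ both sides are non-negative (as $\qt > \frac12 > \frac25$ forces $5\qt - 2 > 0$), so I may square both sides; the inequality becomes $25\qt^2 - 20\qt + 4 < 41\qt^2 - 28\qt + 4$, i.e. $0 < 16\qt^2 - 8\qt = 8\qt(2\qt - 1)$, which holds precisely because $\qt > \frac12$. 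This settles both cases and, together with $\qo^+ > 0$, proves the claim.

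The argument is essentially routine; the only point requiring care is the case split on the sign of $5\qt - 2$ when bounding $\qo^-$, since squaring is valid only once both sides are known to be non-negative. It is exactly here that the gap in the parameter range---the exclusion of the middle band $\left[\frac{14-4\sqrt{2}}{41}, \frac12\right]$---is used: on the lower piece the left side is negative so no squaring is needed, and on the upper piece the post-squaring inequality $8\qt(2\qt-1) > 0$ holds, whereas on the excluded band one has $\qt \le \frac12$ and this inequality would fail.
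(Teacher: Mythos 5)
Your proof is correct, but it takes a cleaner route than the paper's. The paper checks non-degeneracy of the interval via \cref{eq:WaitingPtpOneSqrtZeroes} (as you do) and then runs a three-way case analysis on where the endpoints sit relative to $\left[0,1\right]$ (conditions \cref{eq:WaitingPtpOneFirst,eq:WaitingPtpOneSecond,eq:WaitingPtpOneThird}: left endpoint in $\left[0,1\right)$, right endpoint in $\left(0,1\right]$, or the interval containing $\left[0,1\right]$), establishes the feasible $\qt$-range for each, and takes the union. You instead invoke the exact characterization that a non-degenerate open interval $\left(a,b\right)$ meets $\left[0,1\right]$ if and only if $a<1$ and $b>0$, which collapses the proposition to one trivial inequality ($\qo^+>0$, since the numerator is at least $2$) and one substantive inequality ($\qo^-<1$). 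The verification of the latter---the sign split on $5\qt-2$ and the post-squaring inequality $8\qt\left(2\qt-1\right)>0$---is computationally identical to the paper's treatment of \cref{eq:WaitingPtpOneFirstOne}, so the arithmetic core is shared; what your route buys is the elimination of the paper's conditions \cref{eq:WaitingPtpOneFirstZero}, \cref{eq:WaitingPtpOneSecond}, and \cref{eq:WaitingPtpOneThird}, which are superfluous for the universally quantified claim as stated (the paper's enumeration is organized to also delimit the feasible set within the derivation of \cref{prop:WaitingPtpOne}, which explains the extra cases). One small point: your justification that the discriminant is strictly positive on the whole range is slightly compressed---non-vanishing of the roots on the range does not by itself give positivity---but it is easily repaired by noting the discriminant equals $4$ at $\qt=0$ and $17$ at $\qt=1$ and cannot change sign on either connected component, or by observing that both components lie outside the closed interval between the two roots of the upward-opening parabola.
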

    To ensure that throughout the parameter space we furthermore require $\qo + \qt \in \left[0,1\right]$.
    An immediate calculation of the various possible extreme values of $\qt$ and the corresponding extrema of $\qo$ show that our feasible range for $\qt$ is: $\qt \in \left[0, \frac{14 - 4\sqrt{2}}{41}\right)$.
    This allows us to refine our feasible parameter range and conclude the proof:
    \begin{align*}
        \qo   & \in \left(\frac{5\qt + 2 - \sqrt{41\qt^2 - 28\qt + 4}}{4}, \frac{5\qt + 2 + \sqrt{41\qt^2 - 28\qt + 4}}{4}\right)
        \\
        \qt   & \in \left[0, \frac{14 - 4\sqrt{2}}{41}\right)
        \\
        \cost & \in \left(\frac{1 - \qo}{1 - \frac{\qo}{2} - \qt}, \frac{2\qo + \qt - 2}{2\left(\qo-1\right)}\right)
    \end{align*}
\end{proof}

\propWaitingPtpOneNonEmpty*
\begin{proof}
    For $\left(\frac{5\qt + 2 - \sqrt{41\qt^2 - 28\qt + 4}}{4}, \frac{5\qt + 2 + \sqrt{41\qt^2 - 28\qt + 4}}{4}\right) \cap \left[0,1\right] \ne \emptyset$ to hold, we require that both $\frac{5\qt + 2 - \sqrt{41\qt^2 - 28\qt + 4}}{4} < \frac{5\qt + 2 + \sqrt{41\qt^2 - 28\qt + 4}}{4}$, which is true for $\qt \in \left[0, \frac{14 - 4\sqrt{2}}{41}\right) \cup \left(\frac{14 + 4\sqrt{2}}{41}, 1 \right]$ due to \cref{eq:WaitingPtpOneSqrtZeroes}, and furthermore that at least one of \cref{eq:WaitingPtpOneFirst,eq:WaitingPtpOneSecond,eq:WaitingPtpOneThird} must hold.
        \begin{equation}
            \label{eq:WaitingPtpOneFirst}
            \frac{5\qt + 2 - \sqrt{41\qt^2 - 28\qt + 4}}{4} \in \left[0,1\right)
        \end{equation}
        \begin{equation}
            \label{eq:WaitingPtpOneSecond}
            \frac{5\qt + 2 + \sqrt{41\qt^2 - 28\qt + 4}}{4} \in \left(0,1\right]
        \end{equation}
        \begin{equation}
            \label{eq:WaitingPtpOneThird}
            \frac{5\qt + 2 - \sqrt{41\qt^2 - 28\qt + 4}}{4}
            \le 0 < 1 \le
            \frac{5\qt + 2 + \sqrt{41\qt^2 - 28\qt + 4}}{4}
        \end{equation}

        \paragraphNoSkip{Satisfying \cref{eq:WaitingPtpOneFirst}}
        Note that this condition holds when both \cref{eq:WaitingPtpOneFirstZero,eq:WaitingPtpOneFirstOne} hold.
        \begin{equation}
            \label{eq:WaitingPtpOneFirstZero}
            0 \le \frac{5\qt + 2 - \sqrt{41\qt^2 - 28\qt + 4}}{4}
        \end{equation}
        \begin{equation}
            \label{eq:WaitingPtpOneFirstOne}
            1 > \frac{5\qt + 2 - \sqrt{41\qt^2 - 28\qt + 4}}{4}
        \end{equation}
        The former is equivalent to $\sqrt{41\qt^2 - 28\qt + 4} \le 5\qt + 2$, and thus is satisfied for $\qt \in \left[0,3\right]$:
        \begin{align*}
            0
             &
            \geq
            41\qt^2 - 28\qt + 4 - \left(5\qt + 2\right)^2
            \nonumber \\&
            =
            41\qt^2 - 28\qt + 4 - \left(25\qt^2 + 20\qt + 4\right)
            \nonumber \\&
            =
            16\qt\left(\qt - 3\right)
        \end{align*}
        For \cref{eq:WaitingPtpOneFirstOne}, it is equivalent to $\sqrt{41\qt^2 - 28\qt + 4} > 5\qt - 2$, which is trivially satisfied when both $\qt < \frac{2}{5}$ and when the square root has a real solution, thus from \cref{eq:WaitingPtpOneSqrtZeroes} we get that $\qt \in \left[0, \frac{14 - 4\sqrt{2}}{41}\right)$, as $0.2 \approx \frac{14 - 4\sqrt{2}}{41} < \frac{2}{5}$.
    In case $\qt \geq \frac{2}{5}$, both sides of the constraint are positive, thus we can raise them to the power of two and solve to show that it is satisfied for $\qt > \frac{1}{2}$:
    \begin{align*}
        0
         &
        <
        41\qt^2 - 28\qt + 4 - \left(5\qt - 2\right)^2
        \nonumber \\&
        =
        41\qt^2 - 28\qt + 4 - \left(25\qt^2 - 20\qt + 4\right)
        \nonumber \\&
        =
        16\qt^2 - 8\qt
        \nonumber \\&
        =
        8\qt\left(2\qt - 1\right)
    \end{align*}
    In total, we get that the constraint holds when $\qt \in \left[0, \frac{14 - 4\sqrt{2}}{41}\right)\cup\left(\frac{1}{2}, 1\right]$.

            \paragraphNoSkip{Satisfying \cref{eq:WaitingPtpOneSecond}}
            This requirement is equivalent to satisfying both \cref{eq:WaitingPtpOneSecondZero,eq:WaitingPtpOneSecondOne}:
            \begin{equation}
                \label{eq:WaitingPtpOneSecondZero}
                0 < \frac{5\qt + 2 + \sqrt{41\qt^2 - 28\qt + 4}}{4}
            \end{equation}
            \begin{equation}
                \label{eq:WaitingPtpOneSecondOne}
                1 \geq \frac{5\qt + 2 + \sqrt{41\qt^2 - 28\qt + 4}}{4}
            \end{equation}
            \cref{eq:WaitingPtpOneSecondZero} is satisfied when $\qt \in \left[0, \frac{14 - 4\sqrt{2}}{41}\right] \cup \left[\frac{14 + 4\sqrt{2}}{41},1\right]$, due to \cref{eq:WaitingPtpOneSqrtZeroes} and as $\qt$ is a probability and thus must be in $\left[0,1\right]$.
            On the other hand, \cref{eq:WaitingPtpOneSecondOne} is equivalent to $\sqrt{41\qt^2 - 28\qt + 4} \le 2 - 5\qt$, which does not hold for $\qt > 0.2$.
            For $\qt \le 0.2$, we raise the inequality to the power of two and deduce that the condition is satisfied when $\qt \in \left[0, 0.2\right]$:
            \begin{align*}
                0
                 &
                \geq
                41\qt^2 - 28\qt + 4 - \left(2 - 5\qt\right)^2
                \nonumber \\&
                =
                41\qt^2 - 28\qt + 4 - \left(4 - 20\qt + 25\qt^2\right)
                \nonumber \\&
                =
                16\qt\left(\qt - \frac{1}{2}\right)
            \end{align*}
            In total, \cref{eq:WaitingPtpOneSecond} holds when $\qt \in \left[0, 0.2\right]$.

            \paragraphNoSkip{Satisfying \cref{eq:WaitingPtpOneThird}}
            The current condition is equivalent to both of the following.
            \begin{equation}
                \label{eq:WaitingPtpOneThirdLe}
                \frac{5\qt + 2 - \sqrt{41\qt^2 - 28\qt + 4}}{4} \le 0
            \end{equation}
            \begin{equation}
                \label{eq:WaitingPtpOneThirdGe}
                \frac{5\qt + 2 + \sqrt{41\qt^2 - 28\qt + 4}}{4} \ge 1
            \end{equation}
            \cref{eq:WaitingPtpOneThirdLe} complements \cref{eq:WaitingPtpOneFirstZero}, implying that the requirement does not hold for $\qt \in \left[0,1\right]$, and that furthermore \cref{eq:WaitingPtpOneThird} has no feasible satisfying parameters, as its satisfiability hinges on both \cref{eq:WaitingPtpOneThirdLe,eq:WaitingPtpOneThirdGe} holding at the same time.

            \paragraphNoSkip{Feasible parameters for $\qt$}
            As we require that at least one out of \cref{eq:WaitingPtpOneFirst,eq:WaitingPtpOneSecond,eq:WaitingPtpOneThird} holds, we take the union of the feasible parameter range of each of these, which equals:
            $
            \qt \in \left[0, \frac{14 - 4\sqrt{2}}{41}\right)\cup\left(\frac{1}{2}, 1\right]
    $.
\end{proof}

\section{Additional Related Work}
\label[appendix]{sec:AdditionalRelatedWork}
In this section, we go over additional related papers.
The reliance of on-chain governance protocols on tokens to represent voting power bears a similarity to corporate governance mechanisms which give voting power per share.
Thus, we also go over works on corporate governance which could interest the blockchain community.

\subsection{Blockchain Governance}
\paragraphNoSkip{Theoretical Studies}
Kiayias and Lazos~\cite{kiayias2022sok} present a \gls{SoK} of blockchain governance mechanisms.
In the paper, governance is defined in the broad sense, taken to mean both the mechanism by which improvements are suggested and adopted by cryptocurrencies like Bitcoin, and the on-chain protocols employed by \gls{DeFi} projects.
In the paper, several properties of interest are defined, such as who is granted suffrage by each mechanism, and the confidentiality of the voting process, and quantitatively evaluate the various mechanisms.

Wright~\cite{wright2019quadratic} suggest using Posner and Weyl's~\cite{posner2014quadratic} quadratic voting in blockchains.
In such voting mechanisms, the voting power of an individual is equal to the square root of the number of votes it casts.
Dimitri~\cite{dimitri2022quadratic} analyze a \gls{PoS} consensus mechanism which relies on quadratic voting, and note it is not Sybil resistant: voters can gain a disproportionate advantage by splitting their votes across multiple identities.
Indeed, similar mechanisms are in use in practice~\cite{team2022axelar}.
Some practitioners~\cite{rodriguez2022quadratic} recommend improving Sybil-resistance by randomly choosing the subset of eligible voters for each proposal.
We note that adversaries can circumvent this measure by splitting their funds into multiple accounts in advance.

\paragraphNoSkip{Empirical Studies}
Fritsch~\emph{et al.}~\cite{fritsch2022analyzing} study the distribution of voting power and the degree of decentralization of the governance mechanisms used by Compound, Uniswap and ENS.
This was done by evaluating various measures, such as the Gini coefficient~\cite{yitzhaki2013gini} of token holders, and the frequency with which single voters have influenced decisions.
The authors found that the surveyed protocols became more decentralized with time, and that a small number of voters hold a considerable amount of voting power, but that they do not exercise it often.

Feichtinger~\emph{et al.}~\cite{feichtinger2023hidden} later extended the study of \cite{fritsch2022analyzing} using the same metrics, but with more recent data spanning a larger amount of governance protocols.
They show that governance mechanisms are in fact still relatively centralized, with majority power concentrated in the hands of few users in $17$ mechanisms out of the total of $21$ which were examined.
These results are supplemented by an analysis of the costs associated with governance protocols, which are estimated at over $\$10$ Million.
    They consider both costs directly related to participation in decision-making (e.g., paying fees for voting transactions), and indirect ones (costs paid due to the overhead of updating governance-related data structures when performing unrelated actions, like token transfers).

    Fan~\emph{et al.}~\cite{fan2022towards} use the case of Uniswap and SushiSwap as a natural experiment to measure how governance token issuance incentivizes the migration of users between two competing platforms.
    In particular, SushiSwap was created as a fork of Uniswap, and thus by design both are identical, with the exception that SushiSwap rewarded liquidity providers with governance tokens, while Uniswap did not.
    The authors note that although SushiSwap's policy quickly attracted liquidity providers, some of these providers tended to remove their liquidity a few days afterwards, especially ones who provide large amounts of liquidity.

    Barbereau~\emph{et al.}~\cite{barbereau2022defi} quantify the centralization of the governance mechanisms of Uniswap, Maker, SushiSwap, Yearn Finance and UMA by giving a high-level comparison of the token issuance strategy of each, and then performing an empirical evaluation of the distribution of tokens among users, concluding that voting rights are centralized in all examined projects.
    With respect to issuance strategies, the authors found that only SushiSwap and Yearn Finance adopted ``fair'' strategies that do not involve some preliminary issuance of tokens by the projects' creators and then selling it to users.
    They note that a fair strategy could be to the detriment of a project, and give the case of Yearn Finance as an example: the project's lack of centrally owned capital made it difficult to fund a third-party security audit.
    The paper proceeds with an empirical evaluation of the distribution of tokens among users through the lens of various metrics, such as the Gini coefficient, and the distributions' similarity to a ``good'' distribution that represents an ideal level of decentralization~\cite{gochhayat2020measuring}.

    Barbereau~\emph{et al.}~\cite{barbereau2022decentralised} expand the study of \cite{barbereau2022defi} by analyzing user participation in the surveyed mechanisms, and including additional mechanisms in their analysis, specifically 0x, Aave, Compound and Uniswap.
    They found that typically, the rate of participation is low, only rarely rising above a turnout of $1\%$.
    The authors ascribe the low rate to the ability of users to trade them, and justify this conclusion by showing that until January 2022, the trading volume of governance tokens is much higher when compared to the amount of tokens which are delegated.

    Sun~\emph{et al.}~\cite{sun2022decentralization} attempt to quantify the level of centralization in Maker's governance protocol, using data spanning August 2019 and October 2021, and by applying metrics like voter participation and Twitter sentiment.
    The authors find that the average and median share of votes cast by the largest voter per proposal were $52.66\%$ and $48.35\%$, respectively.
    Furthermore, they examine the financial impact of centralization.
    They conduct regression analyses and show that Maker's trading volume and the number of transactions interacting with the platform are positively correlated with the share of users in possession of $10-100$K MKR tokens, and negatively correlated with the share of those who hold more than $100$K tokens.

    Sharma~\emph{et al.}~\cite{sharma2023unpacking} conducted a mixed-methods study to evaluate the effectiveness of governance mechanisms employed by $10$ \glspl{DAO}: Compound, BitDAO, AssangeDAO, BanklessDAO, KrauseHouse, LivePeer, MetaGammaDelta, MolochDAO, dxDAO.
    For their work, they interviewed $10$ participants of such mechanisms, and used responses to identify decentralization metrics which are considered as relevant by the community, primarily, the distribution of token ownership, token holders' participation rate in the governance process, and the geographic distribution of holders.
    By these metrics, the authors find that Compound, AssangeDAO, Bankless and Krausehouse fare poorly with respect to decentralization, and show that across all examined platforms, proposal success is positively correlated with the token holdings of the proposer, and that repeated participation in votes is positively correlated with a voter's holdings.
    Morevoer, the work examines the technical aspects of governance mechanisms, and shows that some are limited in that they do not allow successful proposals to execute arbitrary transactions, while others rely on third-party services to fulfil proposals.

    Dotan~\emph{et al.}~\cite{dotan2023vulnerable} perform a series of case-studies on vulnerabitilies of governance mechanisms and how these were exploited.
    For example, the authors show that Compound's mechanism awarded users with governance tokens for taking loans, where the reward mechanism resulted in the interest-rate of the corresponding loans to be negative.
    The case studies are augmented by a data-driven analysis showing that voter turnout is negatively correlated with transaction fees, and that on some occasions, voters possessed governance tokens for the span of a single voting window.

    Kitzler~\emph{et al.}~\cite{kitzler2023governance} study so-called \gls{DAO} ``contributors'', such as \gls{DAO} owners and developers.
    They find that contributors posses the majority of voting power in $7.5\%$ of the examined governance mechanisms, and have decided at least a single proposal in $20.4\%$ of them.
    The authors uncover systemic evidence in support of Dotan~\emph{et al.}~\cite{dotan2023vulnerable}, showing that in $14.8\%$ of the examined governance proposals, the volume of trades of governance tokens made in the $100$ days preceding each proposal exceeded the amount of votes required to obtain a majority in the corresponding proposal.

    \subsection{Corporate Governance}
    \paragraphNoSkip{Theoretical Studies}
    Cvijanovic~\emph{et al.}~\cite{cvijanovic2020free} propose a theoretical model that captures the ``free-rider effect'', wherein voters who have popular preferences are less inclined to vote.
    The work shows that there is an equilibrium where all ``underdog'' voters vote and other voters only participate partially.

    Dekel and Wolinsky~\cite{dekel2011buying} study the efficiency of the outcome of corporate control contests when allowing voting rights to be traded separately of shares, where an outcome was considered efficient if the actor that wins control of the corporation is the one that maximizes shareholder value.
    The authors show that allowing vote trading always results in inefficient outcomes, and note that this contrasts with the intuition that in the political setting vote buying should be considered deterimental, while in the corporate setting it should enhance efficiency.

    A method for assessing the value of shareholder voting rights was given by Kalay~\emph{et al.}~\cite{kalay2014market}.
    Their method compares the price of a share to a ``synthesized'' non-voting share, with the latter being constructed using put-call parity.
    The method is then employed on real-world data, showing that the value of voting rights increases around special shareholder meetings and when voting on contentious mergers and acquisitions.

    \paragraphNoSkip{Empirical Studies}
    Brav~\emph{et al.}~\cite{brav2022retail} conduct an study on retail shareholder voting, finding that turnout is higher for voters with larger holdings or when the expected benefits from voting are high, while turnout decreases when the cost of participation is high.
    An analysis of shareholder voting in $17$ European countries between the years $2005$ and $2010$ was carried out by Renneboog and Szilagyi~\cite{renneboog2013shareholder}.
    The authors found that turnout by owners of free-float shares (that is, shares that can be publicly traded) was relatively low: $17\%$ in France, $10\%$ in Germany, and $4\%$ in Italy.
Furthermore, the authors note that shareholder proposals (proposals written by shareholders rather than by the management of a corporation) were mostly raised by shareholders of large and poorly performing corporations.
Additionally, dissenting votes were more frequent when the proposals advanced the adoption of anti-takeover devices and executive compensation.
A higher rate of dissent in proposals on remuneration was also found by Hewitt~\cite{hewitt2011exercise}, who studied OECD countries.

Li and Schwartz-Ziv~\cite{li2018how} show that the daily volume, number of trades and the volatility of shares are increased around shareholder voting dates, particularly when the votes are considered important and when voting culminated in a dissenting outcome.
A similar finding that share prices decline after voting is given by Fos and Holderness~\cite{fos2022distribution}.
Additionally, the work presents data showing that corporations sometimes reveal voting dates after the fact, thereby preventing stockholders from exercising their right to vote.
We note that commonly used governance mechanisms do not allow such paractices, as proposals are announced publicly on the blockchain.

\paragraphNoSkip{Shareholder Value Maximization (SVM)}
\emph{\Gls{SVM}} is a business goal which can be broadly defined as striving to increase the wealth of the business' shareholders.
Sundaram and Inkpen~\cite{sundaram2004corporate} propose \gls{SVM} as the ``preferred'' goal for corporate finance, arguing it is ethical and easily measurable.
Freeman~\emph{et al.}~\cite{freeman2004stakeholder} criticize \gls{SVM} as it only focuses on shareholders and ignores corporate stakeholders, with the latter taken to mean any actor who can affect or be affected by the corporation.
Thus, shareholders are stakeholders, but not necessarily the other way around.
For example, employees of the corporation are stakeholders, even if they do not possess any shares.
Inkpen and Sundaram~\cite{inkpen2022endurance} discuss the relevance of \gls{SVM} in the face of academic criticism and later developments in the corporate finance world, such as the supposed greater divergence in shareholder interests in recent years~\cite{goranova2022corporate}.

\section{Glossary}
\label[appendix]{sec:Glossary}
A summary of all symbols and acronyms used in the paper.
\setglossarystyle{alttree} \glssetwidest{AAAA} % Equal indentation in list
\printnoidxglossary[type=symbols]
\printnoidxglossary[type=\acronymtype]

\end{document}